\documentclass[twoside,12pt,reqno]{amsart}
\usepackage{amsmath,amsthm, amsfonts,amssymb}
\usepackage{color,soul,mathtools}
\usepackage[dvipsnames]{xcolor}
\usepackage[all]{xy}
\usepackage{graphicx}
\usepackage{indentfirst}
\usepackage{bm}
\usepackage{mathrsfs}
\usepackage{latexsym}
\usepackage{hyperref}
\hypersetup{
    colorlinks = true,
    linkcolor = blue,
    urlcolor = blue,
    citecolor = blue,
    anchorcolor = black,
%    linkbordercolor = {white},
%    <your other options...>,
}
\usepackage{microtype}	
\usepackage{float}
\usepackage{bookmark}
\usepackage{tikz}[2010/10/13]
%    Q-circuit version 2
%    Copyright (C) 2004  Steve Flammia & Bryan Eastin
%    Last modified on: 9/16/2011
%
%    This program is free software; you can redistribute it and/or modify
%    it under the terms of the GNU General Public License as published by
%    the Free Software Foundation; either version 2 of the License, or
%    (at your option) any later version.
%
%    This program is distributed in the hope that it will be useful,
%    but WITHOUT ANY WARRANTY; without even the implied warranty of
%    MERCHANTABILITY or FITNESS FOR A PARTICULAR PURPOSE.  See the
%    GNU General Public License for more details.
%
%    You should have received a copy of the GNU General Public License
%    along with this program; if not, write to the Free Software
%    Foundation, Inc., 59 Temple Place, Suite 330, Boston, MA  02111-1307  USA

% Thanks to the Xy-pic guys, Kristoffer H Rose, Ross Moore, and Daniel Müllner,
% for their help in making Qcircuit work with Xy-pic version 3.8.  
% Thanks also to Dave Clader, Andrew Childs, Rafael Possignolo, Tyson Williams,
% Sergio Boixo, Cris Moore, Jonas Anderson, and Stephan Mertens for helping us test 
% and/or develop the new version.

\usepackage{xy}
\xyoption{matrix}
\xyoption{frame}
\xyoption{arrow}
\xyoption{arc}

\usepackage{ifpdf}
\ifpdf
\else
\PackageWarningNoLine{Qcircuit}{Qcircuit is loading in Postscript mode.  The Xy-pic options ps and dvips will be loaded.  If you wish to use other Postscript drivers for Xy-pic, you must modify the code in Qcircuit.tex}
%    The following options load the drivers most commonly required to
%    get proper Postscript output from Xy-pic.  Should these fail to work,
%    try replacing the following two lines with some of the other options
%    given in the Xy-pic reference manual.
\xyoption{ps}
\xyoption{dvips}
\fi

% The following resets Xy-pic matrix alignment to the pre-3.8 default, as
% required by Qcircuit.
\entrymodifiers={!C\entrybox}

\newcommand{\bra}[1]{{\left\langle{#1}\right\vert}}
\newcommand{\ket}[1]{{\left\vert{#1}\right\rangle}}
    % Defines Dirac notation. %7/5/07 added extra braces so that the commands will work in subscripts.
\newcommand{\qw}[1][-1]{\ar @{-} [0,#1]}
    % Defines a wire that connects horizontally.  By default it connects to the object on the left of the current object.
    % WARNING: Wire commands must appear after the gate in any given entry.
\newcommand{\qwx}[1][-1]{\ar @{-} [#1,0]}
    % Defines a wire that connects vertically.  By default it connects to the object above the current object.
    % WARNING: Wire commands must appear after the gate in any given entry.
\newcommand{\cw}[1][-1]{\ar @{=} [0,#1]}
    % Defines a classical wire that connects horizontally.  By default it connects to the object on the left of the current object.
    % WARNING: Wire commands must appear after the gate in any given entry.
\newcommand{\cwx}[1][-1]{\ar @{=} [#1,0]}
    % Defines a classical wire that connects vertically.  By default it connects to the object above the current object.
    % WARNING: Wire commands must appear after the gate in any given entry.
\newcommand{\gate}[1]{*+<.6em>{#1} \POS ="i","i"+UR;"i"+UL **\dir{-};"i"+DL **\dir{-};"i"+DR **\dir{-};"i"+UR **\dir{-},"i" \qw}
    % Boxes the argument, making a gate.
\newcommand{\meter}{*=<1.8em,1.4em>{\xy ="j","j"-<.778em,.322em>;{"j"+<.778em,-.322em> \ellipse ur,_{}},"j"-<0em,.4em>;p+<.5em,.9em> **\dir{-},"j"+<2.2em,2.2em>*{},"j"-<2.2em,2.2em>*{} \endxy} \POS ="i","i"+UR;"i"+UL **\dir{-};"i"+DL **\dir{-};"i"+DR **\dir{-};"i"+UR **\dir{-},"i" \qw}
    % Inserts a measurement meter.
    % In case you're wondering, the constants .778em and .322em specify
    % one quarter of a circle with radius 1.1em.
    % The points added at + and - <2.2em,2.2em> are there to strech the
    % canvas, ensuring that the size is unaffected by erratic spacing issues
    % with the arc.

    % Inserts a measurement bubble with user defined text.

    % Inserts a measurement tab with user defined text.

    % Inserts a D-shaped measurement gate with user defined text.

    % Draws a multiple qubit measurement bubble starting at the current position and spanning #1 additional gates below.
    % #2 gives the label for the gate.
    % You must use an argument of the same width as #2 in \ghost for the wires to connect properly on the lower lines.

    % Draws a multiple qubit D-shaped measurement gate starting at the current position and spanning #1 additional gates below.
    % #2 gives the label for the gate.
    % You must use an argument of the same width as #2 in \ghost for the wires to connect properly on the lower lines.
\newcommand{\control}{*!<0em,.025em>-=-<.2em>{\bullet}}
    % Inserts an unconnected control.

    % Inserts a unconnected control-on-0.

    % Inserts a control and connects it to the object #1 wires below.

    % Inserts a control-on-0 and connects it to the object #1 wires below.

    % Inserts a CNOT target.

    % Inserts half a swap gate.
    % Must be connected to the other swap with \qwx.
\newcommand{\multigate}[2]{*+<1em,.9em>{\hphantom{#2}} \POS [0,0]="i",[0,0].[#1,0]="e",!C *{#2},"e"+UR;"e"+UL **\dir{-};"e"+DL **\dir{-};"e"+DR **\dir{-};"e"+UR **\dir{-},"i" \qw}
    % Draws a multiple qubit gate starting at the current position and spanning #1 additional gates below.
    % #2 gives the label for the gate.
    % You must use an argument of the same width as #2 in \ghost for the wires to connect properly on the lower lines.
\newcommand{\ghost}[1]{*+<1em,.9em>{\hphantom{#1}} \qw}
    % Leaves space for \multigate on wires other than the one on which \multigate appears.  Without this command wires will cross your gate.
    % #1 should match the second argument in the corresponding \multigate.

    % Inserts #1, overriding the default that causes entries to have zero size.  This command takes the place of a gate.
    % Like a gate, it must precede any wire commands.
    % \push is useful for forcing columns apart.
    % NOTE: It might be useful to know that a gate is about 1.3 times the height of its contents.  I.e. \gate{M} is 1.3em tall.
    % WARNING: \push must appear before any wire commands and may not appear in an entry with a gate or label.

    % Constructs a box or bracket enclosing the square block spanning rows #1-#3 and columns=#2-#4.
    % The block is given a margin #5/2, so #5 should be a valid length.
    % #6 can take the following arguments -- or . or _\} or ^\} or \{ or \} or _) or ^) or ( or ) where the first two options yield dashed and
    % dotted boxes respectively, and the last eight options yield bottom, top, left, and right braces of the curly or normal variety.  See the Xy-pic reference manual for more options.
    % \gategroup can appear at the end of any gate entry, but it's good form to pick either the last entry or one of the corner gates.
    % BUG: \gategroup uses the four corner gates to determine the size of the bounding box.  Other gates may stick out of that box.  See \prop.

    % Centers the left side of #1 in the cell.  Intended for lining up wire labels.  Note that non-gates have default size zero.
\newcommand{\lstick}[1]{*!R!<.5em,0em>=<0em>{#1}}
    % Centers the right side of #1 in the cell.  Intended for lining up wire labels.  Note that non-gates have default size zero.
\newcommand{\ustick}[1]{*!D!<0em,-.5em>=<0em>{#1}}
    % Centers the bottom of #1 in the cell.  Intended for lining up wire labels.  Note that non-gates have default size zero.

    % Centers the top of #1 in the cell.  Intended for lining up wire labels.  Note that non-gates have default size zero.
\newcommand{\Qcircuit}{\xymatrix @*=<0em>}
    % Defines \Qcircuit as an \xymatrix with entries of default size 0em.

    % Draws a wire or connecting line to the element #1 rows down and #2 columns forward.

    % Same as \ghost except it omits the wire leading to the left. 

\usepackage{mathtools}
\usepackage{ifpdf}
\usepackage{doi}
%\usepackage[colorlinks = true,
%            linkcolor = blue,
%            urlcolor  = blue,
%            citecolor = blue,
%            anchorcolor = blue]{hyperref}

%\documentclass[reqno,nofootinbib,reprint]{revtebblx4-1}
%\usepackage{tikz}[2010/10/13]
%\usepackage{hyperref}
%\usepackage{amsmath}
%\usepackage{amsthm}
%\usepackage{amssymb,color,soul}

\newtheorem{Thm}{Theorem}[section]

\newtheorem{proposition}[Thm]{\bf Proposition}
\newtheorem{corollary}[Thm]{\bf Corollary}
\newtheorem{lemma}[Thm]{\bf Lemma}

\newtheorem{remark}[Thm]{Remark}
\newtheorem{definition}[Thm]{\bf Definition}

\newtheorem{theorem}[Thm]{\bf Theorem}

\newcommand{\abs}[1]{\left\vert #1 \right\vert}
\DeclareMathOperator{\Tr}{tr}
\newcommand{\be}{\begin{equation}}
\newcommand{\ee}{\end{equation}}

\newcommand{\beq}{\begin{eqnarray}}
\newcommand{\eeq}{\end{eqnarray}}
\newcommand{\beqs}{\begin{eqnarray*}}
\newcommand{\eeqs}{\end{eqnarray*}}
\newcommand{\bmm}[4]{\begin{pmatrix} #1&#2\\#3&#4 \end{pmatrix}}
\newcommand{\E}{\mathcal{E}}
\newcommand{\FS}{\mathfrak{F}_{\text{{s}}}}
\newcommand{\Max}{\ket{\rm{Max}}}
\newcommand{\GHZ}{\ket{\rm{GHZ}}}
\newcommand{\bR}{{\textbf{R}}}
\newcommand{\bL}{{\textbf{L}}}
\newcommand{\bS}{{\textbf{S}}}
\newcommand{\hrefdoi}[2]{\href{https://doi.org/#1}{#2}}

\renewcommand{\thesection}{\arabic{section}}

\renewcommand{\theequation}{\arabic{section}.\arabic{equation}}

%%%%%%%%%%%%%%%%%
% Graphs
%%%%%%%%%%%%%%%%%

%%%%%%%%%%%%%
% Braid
%%%%%%%%%%%%%

\tikzstyle WL=[line width=5pt,opacity=1.0]

% draws lines with white background to show which lines are closer to the
% viewer (Hint: draw from bottom up and from back to front)
%Input: start and end point
\newcommand{\drawWL}[3]
{
    \draw[white,WL]  (#2) -- (#3);
    \draw[#1] (#2) -- (#3);
}
%%%

\newcommand{\mn}{/-3}
\newcommand{\nn}{/3}
\newcommand{\size}[1]{\fontsize{10pt}{\baselineskip}\selectfont{#1}}
\newcommand{\Z}{\mathbb{Z}}

%%%%%%%%%%%%%%%%%%%%%%%%%%%
% measurement (x, y, size x, y, label)
%%%%%%%%%%%%%%%%%%%%%%%%%%%

\newcommand{\fdoublemeasure}[6]
{
\fmeasure{#1}{#2}{3*#3}{#4}{}
\fmeasure{#1-2*#3}{#2}{#3}{#4}{#5}
\node at (#1-3*#3, #2-#4+2*#3) {\size{$#6$}};
}

\newcommand{\fmeasure}[5]{
\node at (#1-#3,#2-#4) {\size{$#5$}};
\draw (#1,#2) --(#1,#2-#4)  arc (0:180:#3) -- (#1-#3-#3,#2);
}

%%%%%%%%%%%%%%%%%%%%%%%%%%%
% measurement (x, y, x size, y size label)
%%%%%%%%%%%%%%%%%%%%%%%%%%%
\newcommand{\fqudit}[5]{
\node at (#1-#3,#2+#4) {\size{$#5$}};
\draw (#1,#2) --(#1,#2+#4)  arc (0:-180:#3) -- (#1-#3-#3,#2);
}

\newcommand{\fsqudit}[5]{
\node at (#1-#3,#2+#4) {\size{$#5$}};
\draw (#1,#2) --(#1,#2+#4)  arc (0:180:#3) -- (#1-#3-#3,#2);
}
%%%%%%%%%%%%%%%%%%%%%%%%%%%
% braids (x1, y1, x2, y2)
%%%%%%%%%%%%%%%%%%%%%%%%%%%

\newcommand{\fdoublequdit}[6]
{
\fqudit{#1}{#2}{3*#3}{#4}{}
\fqudit{#1-2*#3}{#2}{#3}{#4}{#5}
\node at (#1-3*#3, #2+#4+2*#3) {\size{$#6$}};
}

\newcommand{\fbraid}[4]{
\draw (#1,#2)--(#3,#4);
\draw (#1,#4)--(2/3*#1+1/3*#3,2/3*#4+1/3*#2);
\draw (#3,#2)--(2/3*#3+1/3*#1,2/3*#2+1/3*#4);
}

%%%%%%%%%%%%%%%%%%%%%%%%%%%
% braids (x, y, x size, y size)
%%%%%%%%%%%%%%%%%%%%%%%%%%%

\newcommand{\braida}[4]
{
\draw (#1+2*#3,0+#2)--(#1+2*#3, 8*#4+#2);
\draw (#1+8*#3,0+#2)--(#1+8*#3, 8*#4+#2);
\draw (#1+0*#3,8*#4+#2)--(#1+1.5*#3,6.5*#4+#2);
\draw (#1+2.5*#3,5.5*#4+#2)-- (#1+3.5*#3,4.5*#4+#2);
\draw (#1+4.5*#3,3.5*#4+#2)-- (#1+6*#3,2*#4+#2) --(#1+6*#3,0*#4+#2);
\draw (#1+0,0*#4+#2)--(#1+1.5*#3,1.5*#4+#2);
\draw (#1+2.5*#3,2.5*#4+#2)--(#1+6*#3,6*#4+#2) --(#1+6*#3,8*#4+#2);
}

\newcommand{\braidb}[4]
{
\draw (#1+2*#3,-0+#2)--(#1+2*#3,- 8*#4+#2);
\draw (#1+8*#3,-0+#2)--(#1+8*#3,- 8*#4+#2);
\draw (#1+0*#3,-8*#4+#2)--(#1+1.5*#3,-6.5*#4+#2);
\draw (#1+2.5*#3,-5.5*#4+#2)-- (#1+3.5*#3,-4.5*#4+#2);
\draw (#1+4.5*#3,-3.5*#4+#2)-- (#1+#1,-2*#4+#2) --(#1+#1,-0*#4+#2);
\draw (#1+0,-0*#4+#2)--(#1+1.5*#3,-1.5*#4+#2);
\draw (#1+2.5*#3,-2.5*#4+#2)--(#1+#1,-6*#4+#2) --(#1+#1,-8*#4+#2);
}

%%%%%%%%%%%%%%%%%%%%%%%%%%%
%Graph1 -- uncolored teleportation
%%%%%%%%%%%%%%%%%%%%%%%%%%%

\newcommand{\fgraphd}[2]
{\tikz{
\braida{0}{0}{1\mn}{1\nn}
\braidb{6\mn}{16\nn}{1\mn}{1\nn}
%%%%
\fqudit{12\mn}{16\nn}{-1\mn}{0\nn}{}
\fqudit{6\mn}{16\nn}{-1\mn}{1\nn}{}
\fqudit{0\mn}{8\nn}{-1\mn}{10\nn}{i_A}
\fmeasure{6\mn}{0\nn}{-1\mn}{1\nn}{-m_1}
%%%%
\node at (13\mn,-4\nn) {\size{$m_1$}};
\draw (12\mn,8\nn) -- (12\mn,-6\nn);
\draw (14\mn,8\nn) -- (14\mn,-8\nn);
\draw (13\mn,-6\nn) --(11\mn,-6\nn) -- (11\mn,-8\nn) --(13\mn,-8\nn);
\draw (15\mn,-6\nn) --(21\mn,-6\nn) -- (21\mn,-8\nn) --(15\mn,-8\nn);
\node at (12\mn, -7\nn) {$T$};
\draw (16\mn,-6\nn) -- (16\mn,-4\nn);
\node at (18\mn,-5\nn) {$\cdots$};
\draw (20\mn,-6\nn) -- (20\mn,-4\nn);
\draw (16\mn,-8\nn) -- (16\mn,-10\nn);
\node at (18\mn,-9\nn) {$\cdots$};
\draw (20\mn,-8\nn) -- (20\mn,-10\nn);
%%%%
\fmeasure{12\mn}{-8\nn}{-1\mn}{2\nn}{-m_2}
\node at (1\mn,-13\nn) {\size{$m_2$}};
\draw (0,0) -- (0,-14\nn);
\draw (2\mn,0) -- (2\mn,-14\nn);
}}

%%%%%%%%%%%%%%%%%%%%%%%%%%%
%Graph1 -- uncolored teleportation
%%%%%%%%%%%%%%%%%%%%%%%%%%%

%%%%%%%%%%%%%%%%%%%%%%%%%%%
%Graph2 -- colored teleportation
%%%%%%%%%%%%%%%%%%%%%%%%%%%

%%%%%%%%%%%%%%%%%%%%%%%%%%%
%Graph3 -- colored telepotation result
%%%%%%%%%%%%%%%%%%%%%%%%%%%

%%%%%%%%%%%%%%%%%%%%%%%%%%%
%Graph4 -- Braid-
%%%%%%%%%%%%%%%%%%%%%%%%%%%
% x,y,x size, y,size

%%%%%%%%%%%%%%%%%%%%%%%%%%%
%Graph5 -- Entanglement
%%%%%%%%%%%%%%%%%%%%%%%%%%%

%%%%%%%%%%%%%%%%%%%%%%%%%%%
%Graph6,7 -- Measurement
%%%%%%%%%%%%%%%%%%%%%%%%%%%

%%%%%%%%%%%%%%%%%%%%
% Pauli matrices
%%%%%%%%%%%%%%%%%%%%

%%%%%%%%%%%%%%%%%%%%%%%%%%%%
%Circuits
%%%%%%%%%%%%%%%%%%%%%%%%%%%%

\title[Holographic Software]{Holographic Software\\ for Quantum Networks}
\address{17 Oxford Street, Harvard University, Cambridge, MA 02138, USA}
\email{jaffe@g.harvard.edu}
\email{zhengweiliu@fas.harvard.edu}
\email{airwozz@gmail.com}
\author{Arthur Jaffe}
\author{Zhengwei Liu}
\author{Alex Wozniakowski}

\begin{document}
\begin{abstract}
 We introduce a  pictorial approach to quantum information, called {\it holographic software}. Our software captures both algebraic and topological aspects of quantum networks.  It yields a bi-directional dictionary to translate between a topological approach and an algebraic approach.  Using our software, we give a topological simulation for quantum networks.
The string Fourier transform (SFT) is our basic tool to transform product states into states with maximal entanglement entropy.  We obtain a pictorial interpretation of  Fourier transformation,  of measurements,  and of local transformations, including the $n$-qudit Pauli matrices and their representation by Jordan-Wigner transformations.
 We use our software to discover interesting new protocols for multipartite communication. In summary, we build a bridge linking the theory of planar para algebras with quantum information.
\end{abstract}

\maketitle

\tableofcontents

\setcounter{section}{0}
\section{Introduction}
In classical information science and computer science the elementary unit of information takes two  possible values; it is called a bit. 
In quantum information one studies vectors in a Hilbert space $\mathcal{H}$, and these vectors encode information. The complex Hilbert space corresponding to a bit is $2$-dimensional, and a vector in this space is called  a qubit.  
Here we study the analog for a $d$-valued classical system, in which the $d$-dimensional vector in quantum information is called a qudit.  

We are interested in the Hilbert space $\mathcal{H}$ for $n$ qudits, defined as a twisted,  $n$-fold tensor product of the $1$-qudit space, and a vector in this space is called an $n$-qudit. Unitary transformations on $\mathcal{H}$ play an important role, as do measurements (dual vectors in one of the $1$-qudit spaces in the tensor product).  

We study quantum information using the pictorial framework PAPPA (or parafermion, planar, para algebra)  introduced in \S3 of~\cite{JL}, and the related string Fourier transform (SFT) analyzed in \S6 of that paper.  
We propose a new charged-string language, in which we represent multipartite entanglement, Pauli matrices, measurements,  and quantum communication protocols by elementary pictures.
We label our strings by an integer charge $k\in\mathbb{Z}_{d}$.  
 Using insights from this language, we have designed a new protocol~\cite{ConstructiveSimulation}.  

In this paper we give the mathematical basis for our language. 
We represent an elementary transformation by a box with inputs and outputs that we denote by strings. One forms an algebra of transformations by connecting the output strings to input strings, as in category theory.
However, our approach differs in important ways from previous pictorial  studies of quantum information \cite{KL-1}, \cite{AC,Coeckebook}.  Three such major differences are:
\begin{itemize}
\item{} We resolve  a single string into two strings, giving more structure to transformations.
\item{}  We assign  a $\mathbb{Z}_{d}$-valued charge to each string.
\item{} We use the string Fourier transform as the central tool to produce entanglement.
\end{itemize}

The resolution into two strings arises naturally in mathematics in the context of subfactor theory, as compared with representation theory. In physics it arises in the context of string theory diagrams compared with Feynman diagrams. 

We follow the usual conventions of planar algebras and subfactor theory \cite{Jon98}.  Multiplication, described algebraically by ordering factors from left to right,  is represented by composing pictures vertically from bottom to top.  Convolution is pictured by composing pictures horizontally from left to right.   The adjoint of a picture is  defined as an anti-linear, anti-isomorphism given by vertical reflection.  

The assignment of charge to a string arises from the interpretation of a string as carrying $0,1,\ldots,d-1$  ``parafermions,'' each with charge $\pm1$ in the additive group $\Z_{d}$.  The charges that we label by $k$ appear in the pictures as labels on the left side of a string. Changes in their placement lead to the notion of  topological para isotopy of pictures, generalizing isotopy. For details, see~\cite{JL}.

We find that the SFT is the natural gate to entangle $n$-qudit product states.   See \cite{JL} for details of how the SFT generalizes the Fourier transform. The SFT  produces entanglement through a single ``one-click pictorial rotation'' of the picture \eqref{Max-kn} that represents the product state. In \S\ref{Sect:Entropy} we show that these states have maximal entanglement entropy, so they provide good generalizations of Bell states of 2-qudits.  

The language we study here has certain advantages. In \S \ref{HoloSoft} and \S \ref{Sect:SFT} we provide  a dictionary to translate bijectively between pictures and quantum information protocols. Here we comment on a few aspects of our pictures that highlight their usefulness:  
\begin{itemize}
\item{} Our space of fundamental states, or  $1$-qudits, is spanned by caps pictured below. These have two string outputs, rather than one. The resolution of one string into two illuminates hidden structures: 
\[
\raisebox{-.25cm}{
\tikz{
   \fqudit {0}{0}{1\mn}{1\nn}{\phantom{-}k}
}}\;.
\]
\item{}  Our $1$-qudit transformations have two strings as inputs and two strings as outputs, rather than  a single string.  
\[
\raisebox{-.45cm}{
\tikz{
\draw (-1/6,1/6) rectangle (1/6+2/3,1-1/6);
\draw (0,0)--(0,1/6);
\draw (2/3,0)--(2/3,1/6);
\draw (0,1)--(0,1-1/6);
\draw (2/3,1)--(2/3,1-1/6);
\node at (1/3,1/2) {\size{$T$}};
}}\;.
\]

\item{} We obtain the braid
\[
\raisebox{-.3cm}{\scalebox{1}{
\tikz{
\fbraid{0}{2\nn}{2\mn}{0}
}}}\ ,
\]
as a linear combination of $1$-qudit transformations, see \S\ref{Sec:braided relations}.

\item{} A charge $k$ can pass freely under our braids, but not over them,
\[
\raisebox{-.4cm}{\scalebox{.8}{
\tikz{
\fbraid{0}{0}{3\mn}{3\nn}
\node at (3\mn,.8\nn) {\size{$k$}};
}}} =
\raisebox{-.4cm}{\scalebox{.8}{
\tikz{
\fbraid{0}{0}{3\mn}{3\nn}
\node at (.9\mn,2.9\nn) {\size{$k$}};
}}}\;.
\]

\item{}  Our pictures satisfy a para isotopy relation, rather than isotopy invariance, under vertical motion of charge. With $q^{d}=1$ is a $d^{\rm th}$ root of unity, 
\[
\raisebox{-.4cm}{\scalebox{.9}{
\begin{tikzpicture}
\draw (0,0) --(0,01);
\draw (0.4,0) --(0.4,01);
\node (0,0) at (-0.15,0.2) {\scriptsize $k$};
\node (0.2,0) at (0.25,.8) {\scriptsize $\ell$};
\end{tikzpicture}
}}
\scriptstyle =\   q^{k\ell}
\raisebox{-.4cm}{\scalebox{.9}{
\begin{tikzpicture}
\draw (0,0) --(0,01);
\draw (0.4,0) --(0.4,01);
\node (0,0) at (-0.15,0.8) {\scriptsize $k$};
\node (0.2,0) at (0.25,.2) {\scriptsize $\ell$};
\end{tikzpicture}
}}
\;.
\]
The interpolation of this relation, with $k$ and $\ell$ at the same height, plays a central role, see \S\ref{Sect:ParaIsotopy}.

\item{}  This leads us to an elementary pictures for $1$-qudit Pauli matrices $X,Y,Z$, 
\[
X=
\raisebox{-.4cm}{
\tikz{
\draw (0,0)--(0,1);
\draw (2/3,0)--(2/3,1);
\node at (1.5/3,1/2) {\scriptsize 1};
}}\;,\quad
Y=
\raisebox{-.4cm}{
\tikz{
\draw (0,0)--(0,1);
\draw (2/3,0)--(2/3,1);
\node at (-.5/3,1/2) {\scriptsize -1};
}}\;,\quad
Z=
\raisebox{-.4cm}{
\tikz{
\draw (0,0)--(0,1);
\draw (2/3,0)--(2/3,1);
\node at (-.5/3,1/2) {\scriptsize 1};
\node at (1.5/3,1/2) {\scriptsize -1};
}}\;.
\]

\item{} The Hilbert space of multiple qudits is spanned by a tensor product of $1$-qudits,
\[
\raisebox{-.4cm}{\scalebox{.9}{
 \raisebox{-.5cm}{\tikz{
\fqudit{0\mn}{0\nn}{1\mn}{2\nn}{\phantom{k} k_2}
\fqudit{4\mn}{0\nn}{1\mn}{3\nn}{\phantom{k} k_1}
\fqudit{-6\mn}{0\nn}{1\mn}{1\nn}{\phantom{k} k_n}
\node at (-4\mn,1\nn) {$\cdots$};
}}}\;.
}\] 

\item{}
We give the corresponding pictures for the Jordan-Wigner representation of  $n$-qudit Pauli matrices in Proposition \ref{Prop:JWasLocal}. 

\item{} The SFT acts on pictures by cyclically permuting the strings exiting a picture. On vectors the  SFT is a unitary transformation that we denote as $\mathfrak{F}_{s}$.   Acting on an $n$-qudit product state,  
\be\label{Max-kn}
\FS \,
\scalebox{.5}{
 \raisebox{-.5cm}{\tikz{
\fqudit{0\mn}{0\nn}{1\mn}{2\nn}{\phantom{k} k_3}
\fqudit{3\mn}{0\nn}{1\mn}{3\nn}{\phantom{k} k_2}
\fqudit{6\mn}{0\nn}{1\mn}{4\nn}{\phantom{k} k_1}
\fqudit{-6\mn}{0\nn}{1\mn}{1\nn}{\phantom{k} k_n}
\node at (-4\mn,1\nn) {$\cdots$};
}} \; \;
}
\ \ = \ \ 
\scalebox{.4}{\raisebox{-1.1cm}{
\tikz{
\fqudit{-2\nn}{0}{9\mn}{0.01\nn}{}
\fqudit{0\nn}{0}{1\mn}{4\nn}{\phantom{ll}k_{1}}
\fqudit{3\nn}{0}{1\mn}{3\nn}{\phantom{ll}k_{2}}
\fqudit{6\nn}{0}{1\mn}{2\nn}{\phantom{ll}k_{3}}
\fqudit{12\nn}{0}{1\mn}{1\nn}{\phantom{ll}k_{n}}
\fmeasure{-2\nn}{0}{-1\nn}{0.05\nn}{}
\node at (10\nn,.5) {\dots};
}}} \; \; \;
= \; \; \;
\scalebox{.4}{\raisebox{-.8cm}{
\tikz{
\fqudit{1.5\nn}{0}{7\mn}{0.01\nn}{}
\node at (1.75\nn,4\nn) {$k_{1}$};
\fqudit{3\nn}{0}{1\mn}{3\nn}{\phantom{ll}k_{2}}
\fqudit{6\nn}{0}{1\mn}{2\nn}{\phantom{ll}k_{3}}
\fqudit{12\nn}{0}{1\mn}{1\nn}{\phantom{ll}k_{n}}
\node at (10\nn,.5) {\dots};
}}} \;.
\ee
One single SFT entangles a product state with an  arbitrary number of qudits. 
In Theorem \ref{Thm:MaxEntangle} we show that any state \eqref{Max-kn} has  maximal entanglement entropy. 

\item{}
Acting on an  $n$-qudit vector, $\FS$ involves $2n$ braids. It has the pictorial representation   
\[
\FS=\overbrace{\underbrace{\scalebox{.7}{
\raisebox{-1.2cm}{\tikz{
\fbraid{4/3}{-2\nn}{2/3}{-1\nn}
\fbraid{2/3}{-1\nn}{0/3}{0\nn}
\node at (1/3,1\nn) {\tiny{$\cdots$}};
\node at (1/3,-2\nn) {\tiny{$\cdots$}};
\fbraid{-0/3}{0\nn}{-2/3}{1\nn}
\fbraid{-2/3}{1\nn}{-4/3}{2\nn}
\fbraid{-4/3}{2\nn}{-6/3}{3\nn}
\fbraid{-6/3}{3\nn}{-8/3}{4\nn}
\draw (4/3,-2\nn) -- (4/3,-3\nn);
\draw (4/3,-1\nn) -- (4/3,5\nn);
\draw (2/3,-2\nn) -- (2/3,-3\nn);
\draw (2/3,0\nn) -- (2/3,5\nn);
\draw (0/3,1\nn) -- (0/3,5\nn);
\draw (0/3,-1\nn) -- (0/3,-3\nn);
\draw (-2/3,2\nn) -- (-2/3,5\nn);
\draw (-2/3,0\nn) -- (-2/3,-3\nn);
\draw (-4/3,3\nn) -- (-4/3,5\nn);
\draw (-4/3,1\nn) -- (-4/3,-3\nn);
\draw (-6/3,4\nn) -- (-6/3,5\nn);
\draw (-6/3,2\nn) -- (-6/3,-3\nn);
\draw (-8/3,4\nn) to [bend right=45] (-8/3,3\nn);
}}}}_{2\mbox{n} \; \mbox{strings}}}^{2\mbox{n} \; \mbox{strings}}
\;.
\]
In Theorem \ref{Thm:SFT} we compute the matrix  elements of $\FS$ acting on $n$-qudit vectors. 
 
 \item{}
The SFT rotates transformations of 1-qudits by $90^{\circ}$; in fact one defines the SFT as a cyclic permutation of the string labels.   
\[
\raisebox{-.5cm}{
\tikz{
\node at (-1/3 + -1/6,1/2) {\size{$\mathfrak{F}_{s}$}};
\draw (-1/6,1/6) rectangle (1/6+2/3,1-1/6);
\draw (0,-1/6)--(0,1/6);
\draw (2/3,-1/6)--(2/3,1/6);
\draw (0,1+1/6)--(0,1-1/6);
\draw (2/3,1+1/6)--(2/3,1-1/6);
\node at (1/3,1/2) {\size{$T$}};
}}
=
\raisebox{-.5cm}{
\tikz{
\draw (-1/6,1/6) rectangle (1/6+2/3,1-1/6);
\draw (0,-1/6)--(0,1/6);
\draw (2/3,-1/6)--(2/3,1/6);
\draw (0,1+1/6)--(0,1-1/6);
\draw (2/3,1+1/6)--(2/3,1-1/6);
\node at (1/3,1/2) {\size{$\mathfrak{F}_{s} T$}};
}}
=
\raisebox{-.6cm}{\scalebox{.7}[.8]{
\tikz{
\draw (-1/6,1/6) rectangle (1/6+2/3,1-1/6);
\draw (0,-1/6) arc (0:-180:.25);
\draw (2/3,1+1/6) arc (180:0:.25);
\draw (-.5,-1/6)--(-.5,1+1/6);
\draw (2/3+.5,-1/6)--(2/3+.5,1+1/6);
\draw (0,-1/6)--(0,1/6);
\draw (2/3,-1/6)--(2/3,1/6);
\draw (0,1+1/6)--(0,1-1/6);
\draw (2/3,1+1/6)--(2/3,1-1/6);
\node at (1/3,1/2) {\size{$T$}};
}}}\;.
\]
The $\FS$ does not act on transformations as a homomorphism.  The SFT sends a product of neutral 1-qudit transformations to the convolution of two SFT's,
\[
\begin{tikzpicture}
\begin{scope}[scale=.5]
\foreach \x in {0,1}{
\foreach \y in {0,1}{
\foreach \u in {0}{
\foreach \v in {1,2,3}{
\coordinate (A\u\v\x\y) at (\x+1.5*\u,\y+3*\v);
}}}}

\foreach \u in {0}{
\foreach \v in {1,2}{
%\fill[blue!20] (A\u\v00) rectangle (A\u\v11);
\draw (A\u\v00) rectangle (A\u\v11);
\node at (-1,5) {\size{$\mathfrak{F}_{s}$}};
\node at (0.5,3.5) {\size{$T$}};
\node at (0.5,6.5) {\size{$S$}};
}}

\draw (0,3)--++(-.5,-.5);
\draw (1,3)--++(.5,-.5);
\draw (0,7)--++(-.5,.5);
\draw (1,7)--++(.5,.5);

\draw (A0101) to [bend left=30] (A0200);

\draw (A0111) to [bend left=-30] (A0210);
\node at (3,5) {$=$};
\end{scope}
\end{tikzpicture}
\raisebox{.5cm}{\begin{tikzpicture}
\begin{scope}[shift={(16,4.5)},rotate=90,scale=.7]
\foreach \x in {0,1}{
\foreach \y in {0,1}{
\foreach \u in {0}{
\foreach \v in {1,2,3}{
\coordinate (A\u\v\x\y) at (\x+1.5*\u,\y+3*\v);
}}}}

\foreach \u in {0}{
\foreach \v in {1,2}{
%\fill[blue!20] (A\u\v00) rectangle (A\u\v11);
\draw (A\u\v00) rectangle (A\u\v11);
\node at (0.5,6.5) {\size{$\mathfrak{F}_{s} T$}};
\node at (0.5,3.5) {\size{$\mathfrak{F}_{s} S$}};
}}

\draw (0,3)--++(-.5,-.5);
\draw (1,3)--++(.5,-.5);
\draw (0,7)--++(-.5,.5);
\draw (1,7)--++(.5,.5);

\draw (A0101) to [bend left=30] (A0200);

\draw (A0111) to [bend left=-30] (A0210);
\end{scope}
\end{tikzpicture}}
\]
Interesting inequalities related to uncertainty emerge from the last relation, see \cite{Liu-NCUP,Liu-Wang-Wu,Liu-NCUP-2}. 
\end{itemize}

We describe these and other phenomena in quantum information by elementary pictures, which earlier diagrammatic approaches to quantum information did not capture naturally. 
We introduced preliminary versions of this paper and the companion paper \cite{ConstructiveSimulation} as \cite{QuditIsotopy,CompressedTeleportation}, but we do not plan to publish those papers in a journal. 

Later we extended the results of this paper in  \cite{QuonLanguage}, by resolving  two string 1-qudits into four string 1-qudits. We based the four-string picture on viewing the 1-qudit as a neutral pair composed of a charge and an anti-charge.  One advantage of such a picture is that para-isotopy of neutral composites reduces to  isotopy.      
This generalization from two to four strings  led us naturally  from planar pictures, to pictures in three-dimensional space, see~\cite{QuonLanguage}. There we demonstrated, among other things, the topological nature of the quantum controlled-NOT gate (CNOT gate, Feynman gate). This includes a natural 3D interpretation of the SFT as a $90^{\circ}$ rotation in a 2-plane. We picture the CNOT as two linked channels transmitting  information in 3-space, and we present an insightful 3D picture of teleportation.  
\smallskip

\paragraph{\bf Some Context: }
Manin and Feynman introduced the concept of  quantum simulation for quantum systems \cite{Manin-book,Feynman,Manin}. One can use an abstract  language \textbf{L}  with words and grammar, along with a simulation  $\bS:\textbf{L} \mapsto \bR$ to map onto some interesting mathematical area  \textbf{R}, as discussed in \cite{PictureLanguage}. 

By \textit{software} we mean that we can simulate quantum information {\bR} by a picture language  \bL.  Throughout this paper, \textbf{L} is the picture language which we call PAPPA \cite{JL}, while \textbf{R} is quantum information.   We introduce a simulation \textbf{S} so that we can understand computations in quantum information through PAPPA.  So we also call PAPPA ``software for quantum information.''  We call this software \textit{holographic}, since we can implement any  computation in \textbf{L}, without referring to \bR.  All pictures in \textbf{L} map to qudit transformations in \textbf{R} under the simulation \bS.  

The simulation \textbf{S} of elementary pictures in \textbf{L} leads us to concepts  that may have no physical meaning  in \bR. We call them virtual concepts in \bR, as discussed in \cite{PictureLanguage}. In this paper we introduce one additional notion inspired by quantum information, that one does not find in PAPPA: a dotted line dividing pictures that correspond to different physical regions, see~\S\ref{Sect:DottedLine}. This notion and virtual concepts are crucial to understanding entanglement and the design of protocols by topological isotopy in \bL, as we discuss in the section \textit{Topological Simulation} of   \ref{sect:Topology-Algebra} and the companion paper~\cite{ConstructiveSimulation}.  

\setcounter{equation}{0}
\section{Basic Algebraic Notation}
\subsection{Qudits}
A $1$-qudit is a vector  in a $d$-dimensional Hilbert space, where $d$ is the \textit{degree} of the qudit.  (The usual case of qubits corresponds to $d=2$.)  We denote an orthonormal basis using Dirac notation by $\ket{k}$. We call $k$ the charge of the qudit, and generally $k\in\Z_{d}$, the cyclic group of order $d$.

The dual $1$-qudit $\bra{\ell}$ is a vector  in the space dual to the $d$-dimensional Hilbert space. And $\langle \ell | k \rangle=\delta_{\ell,k}$,
where $\delta_{\ell, k}$ is the Kronecker delta.

The $n$-qudit space $\mathcal{H}$ is  the twisted $n$-fold tensor product of the $1$-qudit space.  An orthonormal basis for $n$-qudits is $|\vec k\rangle =\ket{k_{1},k_{2},\ldots, k_{n}}$, where this ket has total charge $\vert\vec k\vert=k_{1}+k_{2}+\cdots + k_{n}$.  The dual basis is $\langle\vec \ell |$.  Every linear transformation on $n$-qudits can be written as a sum of the $d^{2n}$ homogeneous transformations
	\be
		M_{\vec \ell, \vec k}=|\vec\ell \,\rangle\,\langle \vec k \, | \;,
		\quad\text{with charge}\quad
		\vert\vec k\vert-\vert\vec \ell\vert\;.
	\ee
The matrix elements of $T=\sum\limits_{\vec k,\vec \ell}\  t_{\vec \ell,\vec k} \ M_{\vec \ell, \vec k}$ are just $t_{\vec \ell,\vec k} =\langle\vec \ell  |  T |  \vec k \,\rangle$.

\subsection{The parafermion algebra\label{Sect:PFA}}

The parafermion algebra has been widely studied both in mathematics and in physics.
\begin{definition}
The \textit{parafermion algebra} is a $\ast$-algebra with unitary generators $c_{j}$, which satisfy
\begin{equation}\label{ParafermionAlgebra}
c_{j}^{d}=1 \; \; \; \mbox{and} \; \; \; c_{j}c_{k}=q \, c_{k}c_{j} \; \; \; \mbox{for} \; \; 1 \leqslant  j< k \leqslant m.
\end{equation}
\noindent Here $q \equiv e^{\frac{2 \pi i}{d}}$, $i \equiv \sqrt{-1}$, and $d$ is the order of the parafermion.
\end{definition}
Consequently $c^{\ast}_{j}=c^{-1}_{j}=c^{d-1}_{j}$, where $\mbox{*}$ denotes the adjoint. Majorana fermions arise for $d=2$.

\subsection{Transformations of $1$-qudits}
Let $q^{d}=1$ and  $\zeta=q^{1/2}$ be a square root of $q$ with the property $\zeta^{d^{2}}=1$. Matrices $X,Y,Z, F,G$ play an important role in many areas of quantum information: for example in teleportation protocols \cite{Bennett-etal},  in error-correction  \cite{Gottesman-Clifford}, or in fault-tolerance  \cite{Gottesman-fault-tolerant}. Three of these are the qudit Pauli matrices, with the algebraic definitions
	\be\label{XYZ-Defn-1}
	X\ket{k}=\ket{k+1}\;,\
	Y\ket{k}=\zeta^{1-2k}\ket{k-1}\;,\
	Z\ket{k}=q^{k}\ket{k}\;.	
	\ee
The Fourier matrix $F$ and the Gaussian $G$ are
	\be\label{F-and-G}
	F\ket{k}=\frac{1}{\sqrt{d}}\sum_{\ell=0}^{d-1} q^{k\ell} \ket{\ell}\;,
\quad
G\ket{k}=\zeta^{k^2}\ket{k}\;.
	\ee
These matrices satisfy the relations
	\be\label{XY-qYX}
		XY=qYX\;,\quad
		YZ=qZY\;,\quad
		ZX=qXZ\;,\quad
		XYZ=\zeta\;.
	\ee
	\be
	FXF^{-1}=Z\;,\quad
	GXG^{-1}=Y^{-1}\;.
	\ee
	
\goodbreak	
\subsection{Transformations of $2$-qudits}
\subsubsection{The multipartite entangled resource state}
We represent the multipartite entangled resource state for $2$-qudits as
\[
\ket{\text{Max}}=\displaystyle \frac{1}{\sqrt{d}}\sum_{k=0}^{d-1}\ket{k,-k}\;.
\]
We say it costs 1 edit if two persons use this entangled state in a protocol.

\subsubsection{Controlled gates}
We give the protocol for controlled transformations $C_{1,A}$ in Figure~\ref{Fig:Controlled gate 1} and $C_{A,1}$ in Figure~\ref{Fig:Controlled gate 2}, for different control qudits.
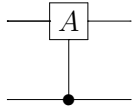
\begin{figure}[H]
\[
\scalebox{0.91}{
\Qcircuit @C=1.5em @R=2em  {
& \gate{A} \qwx[1] \qw & \qw \\
& \control \qw & \qw
}}
\]
\caption{The controlled gate $C_{1,A}$ acts on the $2$-qudit $\ket{k_{1}, k_{2}}$  and gives $C_{1,A}\ket{k_{1},k_{2}}=\ket{k_{1},A^{k_{1}}k_{2}}$. The first qudit $k_{1}$ is the control qudit. \label{Fig:Controlled gate 1}}
\end{figure}

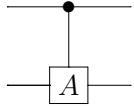
\begin{figure}[H]
\[
\scalebox{0.91}{
\Qcircuit @C=1.5em @R=2em  {
& \control \qw & \qw\\
& \gate{A} \qwx[-1] \qw & \qw
}}
\]
\caption{The controlled gate $C_{A,1}$ acts on the $2$-qudit $\ket{k_{1}, k_{2}}$ and gives $C_{A,1}\ket{k_{1},k_{2}}=\ket{A^{k_{2}}k_{1},k_{2}}$. The second qudit $k_{2}$ is the control qudit. \label{Fig:Controlled gate 2}}
\end{figure}
We sometimes allow more general controlled transformations of the form
\be
T=\sum_{l=0}^{d-1} \ket{\ell}\bra{\ell} \otimes T(\ell),
\ee
where the control is on the first qudit, and $T(\ell)$ can be an arbitrary transformation  on the target qudit. The picture for such a controlled transformation is illustrated in Figure~\ref{Fig:Controlled T}; a corresponding configuration with the second  qudit acting as the control  would also be possible.
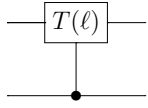
\begin{figure}[h]
\[
\scalebox{0.8}{
\Qcircuit @C=1.5em @R=2em  {
& \gate{T(\ell)} \qwx[1] \qw & \qw\\
& \control \qw & \qw
}}
\]
\caption{Controlled transformations. \label{Fig:Controlled T}}
\end{figure}

The measurement controlled gate is illustrated in  Figure~\ref{Fig:Measurement-controlled-gate}.
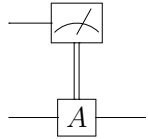
\begin{figure}[h]
\[
\scalebox{0.91}{
\Qcircuit @C=1.5em @R=2em  {
& \meter \cwx[1] \\
& \gate{A} &\qw 
}}
\]
\caption{Measurement controlled gate.}
\label{Fig:Measurement-controlled-gate}
\end{figure}
If the qudit is measured by the meter as $k$, then one applies $A^k$ to the target qudit. It costs 1 cdit to transmit the result, when the two qudits belong to different persons.

\subsection{Qubit case: $d=2$ and $\zeta=+i$}
In the case $d=2$ with $\zeta=+i=\sqrt{-1}$ the $1$-qubit matrices $X,Y,Z$ are the Pauli matrices $\sigma_{x},\sigma_{y},\sigma_{z}$, while $F=H=\frac{1}{\sqrt2}\bmm{1}{1}{1}{-1}$ is the Hadamard matrix, and $G=S=\bmm{1}{0}{0}{i}$ is the phase matrix.  For $2$-qubits, the matrix $C_{1,X}$ is CNOT. The matrices generate the Clifford group.
The elements of the Clifford group can be simulated in polynomial time on a classical computer via the Gottesman-Knill theorem~\cite{GottesmanKnill, Vidal, AaronsonGottesman}. 

\subsection{Simplifying tricks}
We give four elementary algebraic tricks to simplify the algebraic protocols; we illustrate them in Figures. \ref{Fig:Trick1}--\ref{Fig:Trick4}.  Here we consider the Gaussian $G$ defined in \eqref{F-and-G}  for general $d$. 
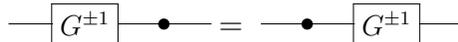
\begin{figure}[H]
\[
\scalebox{0.91}{\raisebox{.15cm}{
\Qcircuit @C=1.5em @R=2em  {
& \gate{G^{\pm 1}} &  \control \qw & \qw
}}}
=
\scalebox{0.91}{\raisebox{.15cm}{
\Qcircuit @C=1.5em @R=2em  {
& \control \qw &  \gate{G^{\pm 1}} & \qw
}}}
\]
\caption{Trick 1:
The control gate commutes with the Gaussian transformation on the control qudit.\label{Fig:Trick1}}
\end{figure}
\begin{figure}[H]
\[
\scalebox{0.91}{\raisebox{.15cm}{
\Qcircuit @C=1.5em @R=2em  {
& \gate{G^{\pm1}} & \meter
}}}
=
\scalebox{0.91}{\raisebox{.15cm}{
\Qcircuit @C=1.5em @R=2em  {
& \meter
}}}
\]
\caption{Trick 2:
The Gaussian transformation does not affect measurement of the meter, so we can remove it.\label{Fig:Trick2}}
\end{figure}
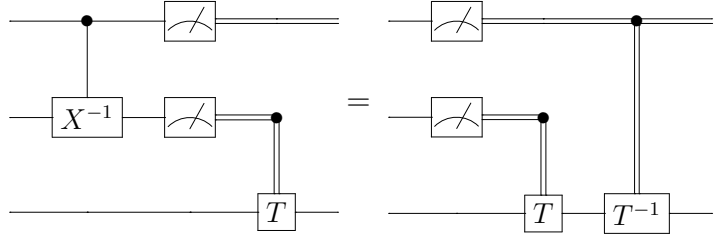
\begin{figure}[H]
\[
\scalebox{0.91}{\raisebox{1.3cm}{
\Qcircuit @C=1.5em @R=2em  {
&  \control \qw        & \meter & \cw &  \cw \\
&   \gate{X^{-1}} \qwx[-1]   & \meter & \control \cw \cwx[1] \\
& \qw                  &    \qw & \gate{T} & \qw
}}}
=
\scalebox{0.91}{\raisebox{1.3cm}{
\Qcircuit @C=1.5em @R=2em  {
& \meter & \cw & \control \cw \cwx[2] & \cw \\
& \meter & \control \cw \cwx[1] \\
&    \qw & \gate{T} & \gate {T^{-1}} & \qw
}}}
\]
\caption{Trick 3: We can remove the controlled transformation $C_{X,1}^{-1}$ before the double meters by changing the measurement controlled gate.\label{Fig:Trick3}}
\end{figure}
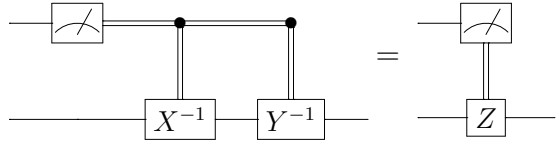
\begin{figure}[H]
\[
\scalebox{0.91}{\raisebox{.6cm}{
\Qcircuit @C=1.5em @R=2em  {
& \meter & \control \cw \cwx[1] & \control \cw \cwx[1]\\
&    \qw & \gate{X^{-1}} & \gate {Y^{-1}} & \qw
}}}
=
\scalebox{0.91}{\raisebox{.6cm}{
\Qcircuit @C=1.5em @R=2em  {
& \meter \cwx[1] \\
&  \gate{Z} & \qw
}}}
\]
\caption{Trick 4:
Since $\displaystyle Y^{-i}X^{-i}=\zeta^{-i^2} Z^i$, and the phase does not count in the protocol, we can simplify meter-controlled transformations.}\label{Fig:Trick4}
\end{figure}

\setcounter{equation}{0}
\section{Holographic Software \label{HoloSoft}}
In this section we give the dictionary to translate between pictorial protocols and the algebraic ones.  Any algebraic protocol can be translated into a pictorial protocol in a straightforward way.
From this picture we may be able to obtain new insights into the protocol.

We also give a dictionary for the inverse direction. Actually this is more interesting, as the pictures may be more intuitive: one says that 1 picture is worth 1,000 words.
In fact we give a new way to design protocols: we rely on the aesthetics of a picture as motivation for the structure of the protocol.  In this way, we can strive to introduce pictorial protocols which simulate human thought.

\subsection{Pictures for fundamental concepts}
Before we give the complete list of pictorial relations and our dictionary for translation, let us remark how some fundamental concepts in quantum information fit into our pictorial framework.
In \S\ref{Sect:PFA} we remark that one can write any
$n$-qudit transformation as an element in the parafermion algebra with $2n$ generators.
We represent the basis element $c_1^{k_1}c_2^{k_2}\cdots c_{2n}^{k_{2n}}$ in the parafermion algebra as a picture with $2n$ ``through'' strings\footnote{{By ``through'' string, we mean a neutral string that passes from the $j^{th}$ input to the $j^{th}$ output and that crosses no other string.}}, with the $j^{\rm th}$ string labelled by $k_j$ (on the left side).  The label is called the charge of the string, and the labels are positioned in an increasing vertical order:
\be
c_1^{k_1}c_2^{k_2}\cdots c_{2n}^{k_{2n}}
=
\raisebox{-.4cm}{
\tikz{
\draw (-1.5\mn,0)--(-1.5\mn,1);
\draw (-3\mn,0)--(-3\mn,1);
\draw (-6.5\mn,0)--(-6.5\mn,1);
\node at (-4\mn,1/2) {$\cdots$};
\node at (-1\mn,1/4) {\size{$k_1$}};
\node at (-2.5\mn,2/4) {\size{$k_2$}};
\node at (-5.5\mn,3/4) {\size{$k_{2n}$}};
}} \;.
\ee
The algebraic relations \eqref{ParafermionAlgebra} that permute the order of factors in the product of parafermions, become elementary relations between pictures, that we will give in \eqref{AddCharge}-\eqref{Equ:para isotopy}.  Besides these relations mentioned here, we give other pictorial relations in \S \ref{Sec:planar relation} and \S \ref{Sec:braided relations}. In addition, we give examples of how to apply these relations to quantum information.

We can also represent $n$-qudits as pictures. First we represent the $n$-qudit  zero-particle state $\vert\,\vec 0\,\rangle$, which up to a scalar is given by the  $n$-cap picture:
\be
d^{n/4}\,|\vec 0\rangle
\ =\
\raisebox{-.2cm}{\tikz{
\fqudit{0\mn}{0\nn}{1\mn}{1\nn}{}
\fqudit{4\mn}{0\nn}{1\mn}{1\nn}{}
\fqudit{-6\mn}{0\nn}{1\mn}{1\nn}{}
\node at (-4\mn,1\nn) {$\cdots$};
}}\quad.
\ee
The action of the parafermion algebra on the state $\vert\,\vec 0\,\rangle$ is captured by the joint relations between the charged strings and the caps given in Equation \eqref{Equ:SF1}.

It is extremely important that the multipartite entangled resource state $\Max$ (even in the case of multiple-persons) can be represented as the picture in Figure~\ref{Pic: Resource state}. This representation provides new insights in multipartite communication, which we explain later in this paper, and also in~\cite{ConstructiveSimulation}.

\subsection{Elementary notions}
We use a convention in identifying algebraic formulas with pictorial  ones: the objects on the left side of an equation are represented by the objects on the right side of the equation.

In our pictures, we call the points on top \textit{input points}, and the points on bottom \textit{output points}. The multiplication goes from bottom to top, and glues input points to output points. Tensor products go from left to right.

An $n$-qudit has 0 input points and $2n$ output points.
A dual $n$-qudit has $2n$ input points and 0 output points.
We call a picture with $n$ input points and $n$ output points an $n$-string transformation.
An $n$-qudit transformation is a $2n$-string transformation. (An $n$-qudit transformation is an $n$-string transformation in previous pictorial  approaches, such as in \cite{AC,KL-2}.) It is interesting that one can also talk about a $1$-string transformation that acts on ``$\frac{1}{2}$-qudits''. We refer the readers to \cite{ConstructiveSimulation} for an application of this concept.
We call
\be
\raisebox{-.4cm}{
\tikz{
\draw (0,0)--(0,1);
\node at (-1/6,1/2) {$k$};
}}
\ee
a charge-$k$ string, or a string with a  charge-$k$ particle. We write the label to the left of the string.

\subsection{Planar relations}\label{Sec:planar relation}
In this section we give relations between certain pictures.
The consistency of these relations is proved in \cite{JL}. Using these relations, we give a dictionary between qudits, transformations, measurements, and pictures.

\subsubsection{Addition of charge, and charge order}
\be\label{AddCharge}
\raisebox{-.5cm}{
\begin{tikzpicture}
\draw (0,0) --(0,1);
\node (0,0) at (-0.2,0.8) {$\ell$};
\node (0,0) at (-0.2,0.2) {$k$};
\node (0.45,0.25) at (0.45,0.5) {$=$};
\node (0.9,0.35) at (1.4,0.5) {$k + \ell$};
\draw (2,0) --(2,1);
\node (1.4,-0.05) at (2.3,0.3) {,};
\end{tikzpicture} }\qquad
\raisebox{-.5cm}{
\tikz{
\node (3.7,0.35) at (4.8,0.5) {$d$};
\draw (5,0) --(5,1);
\node (4.4,0.25) at (5.4,0.5) {$=$};
\draw (5.8,0) --(5.8,1);
\node (4.95,0) at (6.1,0.3) {.};
}}
\ee

\subsubsection{Para isotopy}\label{Sect:ParaIsotopy}
\be \label{Equ:para isotopy}
\raisebox{-.5cm}{
\begin{tikzpicture}
\draw (0,0) --(0,01);
\draw (0.2,0) --(0.2,01);
\draw [fill] (0.4,0.5) circle [radius=0.01];
\draw [fill] (0.5,0.5) circle [radius=0.01];
\draw [fill] (0.6,0.5) circle [radius=0.01];
\node (0,0) at (-0.15,0.2) {$k$};
\draw (0.8,0) --(0.8,1);
\draw (1.15,0) --(1.15,1);
\node (1.2,0) at (1,0.8) {$\ell$};
\end{tikzpicture}
}
=q^{k\ell}
\raisebox{-.4cm}{
\begin{tikzpicture}
\draw (0,0) --(0,01);
\draw (0.2,0) --(0.2,01);
\draw [fill] (0.4,0.5) circle [radius=0.01];
\draw [fill] (0.5,0.5) circle [radius=0.01];
\draw [fill] (0.6,0.5) circle [radius=0.01];
\node (0,0) at (-0.15,0.8) {$k$};
\draw (0.8,0) --(0.8,1);
\draw (1.15,0) --(1.15,1);
\node (1.2,0) at (1,0.2) {$\ell$};
\end{tikzpicture}}\;.
\ee
Here the strings between charge-$k$ string  and charge-$\ell$ string are not charged. We call $q^{k\ell}$ the twisting scalar.

\textbf{Notation:}
The twisted tensor product of pairs interpolates between the two vertical orders of the product.  In the twisted product, we write the labels at the same vertical height:
\beq\label{TwistedProduct}
\raisebox{-.4cm}{
\begin{tikzpicture}
\draw (0,0) --(0,01);
\draw (0.2,0) --(0.2,01);
\draw [fill] (0.4,0.5) circle [radius=0.01];
\draw [fill] (0.5,0.5) circle [radius=0.01];
\draw [fill] (0.6,0.5) circle [radius=0.01];
\node (0,0) at (-0.15,0.5) {$\scriptstyle{k}$};
\draw (0.8,0) --(0.8,1);
\draw (1.15,0) --(1.15,1);
\node (1.2,0) at (1,0.5) {$\scriptstyle{\ell}$};
\end{tikzpicture}}
&\equiv&\zeta^{-k\ell}
\raisebox{-.4cm}{
\begin{tikzpicture}
\draw (0,0) --(0,01);
\draw (0.2,0) --(0.2,01);
\draw [fill] (0.4,0.5) circle [radius=0.01];
\draw [fill] (0.5,0.5) circle [radius=0.01];
\draw [fill] (0.6,0.5) circle [radius=0.01];
\node (0,0) at (-0.15,0.2) {$\scriptstyle{k}$};
\draw (0.8,0) --(0.8,1);
\draw (1.15,0) --(1.15,1);
\node (1.2,0) at (1,0.8) {$\scriptstyle{\ell}$};
\end{tikzpicture}}
\nonumber\\
&=&\zeta^{k\ell}
\raisebox{-.4cm}{
\begin{tikzpicture}
\draw (0,0) --(0,01);
\draw (0.2,0) --(0.2,01);
\draw [fill] (0.4,0.5) circle [radius=0.01];
\draw [fill] (0.5,0.5) circle [radius=0.01];
\draw [fill] (0.6,0.5) circle [radius=0.01];
\node (0,0) at (-0.15,0.8) {$\scriptstyle{k}$};
\draw (0.8,0) --(0.8,1);
\draw (1.15,0) --(1.15,1);
\node (1.2,0) at (1,0.2) {$\scriptstyle{\ell}$};
\end{tikzpicture}
}\;.
\eeq

In this case $k,l\in \Z$, and $k$ and $k+d$ yield  different pictures. If the pair is neutral, namely $\ell=-k$, then the twisted tensor product is defined for $k\in \Z_d$. This twisted product was introduced in \cite{JP,JJ}.

\subsubsection{String Fourier relation}
\begin{align}
   \raisebox{-0.2cm}{
   \tikz{
   \fqudit {0}{0}{1\mn}{1\nn}{}
   \node at (-3\mn,1\nn) {\size{$\hskip -2.5cm k$}};
   }}
    &=\zeta^{k^2}~~
    \raisebox{-0.2cm}{\tikz{\fqudit {0}{0}{1\mn}{1\nn}{\phantom{kl} k}}}
    \ , \label{Equ:SF1}\\
    &\phantom{x}\nonumber\\
   \raisebox{-0.2cm}{
   \tikz{
   \fmeasure {0}{0}{1\mn}{1\nn}{};
   \node at (-3\mn,-1\nn) {$\hskip -2.5cm k$};
   }}
    &=\zeta^{-k^2}
    \raisebox{-0.2cm}{\tikz{\fmeasure {0}{0}{1\mn}{1\nn}{\phantom{kl} k}}}
    \ .\label{Equ:SF2}
\end{align}

\subsubsection{Quantum dimension}

\be\label{QuantumDimension}
\raisebox{-.4cm}{
\begin{tikzpicture}
\draw (0.8,0) circle [radius=0.5];
\end{tikzpicture}}
= \sqrt{d}
\;.
\ee

\subsubsection{Neutrality}%:
\be\label{Neutrality}
\raisebox{-.4cm}{
\begin{tikzpicture}
\node (0,0) at (0.45,0.05) {$k$};
\draw (1.1,0) circle [radius=0.5];
\end{tikzpicture}}
=0\;,\quad \text{for } d\nmid k.
\ee

\subsubsection{Temperley-Lieb relation}%:
\be
\raisebox{-.7cm}{
\tikz{
\draw (0,0)--(0,1/3+1/6) arc (180:0:1/6) arc (180:360:1/6) -- (2/3,1);
\draw (2/3+0,0+1/2)--(2/3+0,1/3+1/6+1/2) arc (0:180:1/6) arc (0:-180:1/6) -- (2/3+-2/3,1+1/2);
}}\
=
\raisebox{-.7cm}{
\tikz{
\draw (0,0)--(0,1+1/2);
}}\;,\qquad
\raisebox{-.7cm}{
\tikz{
\draw (0,0)--(0,1/3+1/6) arc (0:180:1/6) arc (0:-180:1/6) -- (-2/3,1);
\draw (-2/3,1/2)--(-2/3,1/3+1/6+1/2) arc (180:0:1/6) arc (180:360:1/6) -- (0,1+1/2);
}}\
=
\raisebox{-.7cm}{
\tikz{
\draw (0,0)--(0,1+1/2);
}}\; .
\ee

\textbf{Notation:}
Based on the Temperley-Lieb relation, a string only depends on the end points:
\be
\raisebox{-.5cm}{
\tikz{
\draw (0,0)--(0,1/3+1/6) arc (180:0:1/6) arc (180:360:1/6) -- (2/3,1);
}}\
=
\raisebox{-.5cm}{
\tikz{
\draw (0,0)--(2/3,1);
}}\;,\qquad
\raisebox{-.5cm}{
\tikz{
\draw (0,0)--(0,1/3+1/6) arc (0:180:1/6) arc (0:-180:1/6) -- (-2/3,1);
}}\
=
\raisebox{-.5cm}{
\tikz{
\draw (0,0)--(-2/3,1);
}}\; .
\ee

\subsubsection{Resolution of the identity}%:
\begin{align}\label{Equ:Resolution of the identity}
\raisebox{-.5cm}{
\tikz{
\draw (0,0)--(0,1+1/6);
\draw (2/3,0)--(2/3,1+1/6);
}}
&= d^{-1/2}\sum_{k=0}^{d-1}
\raisebox{-.5cm}{
\tikz{
\fqudit {0}{-3.5\nn}{1\mn}{.5\nn}{\phantom{-}k}
\fmeasure {0}{0}{1\mn}{.5\nn}{-k}
}}\quad .
\end{align}

\subsection{$1$-Qudit dictionary}
Now we give the first pictorial translations of the algebraic formulas.  It will be evident from the context of the picture, when a symbol such as  $k$ denotes a label, in contrast with $d^{-1/4}$ or $q^{k\ell}$ or $\zeta^{k^{2}}$, that denote a scalar multiple.

\subsubsection{Qudit}
Our picture for the qudit $\ket{k}$ is:
\begin{equation}\label{qudit 21}
\ket{k}=d^{-1/4}
\raisebox{-.2cm}{
\tikz{
   \fqudit {0}{0}{1\mn}{1\nn}{\phantom{-}k}
}}\ .
\end{equation}
We place the label on the righthand string in the cap by convention.

\subsubsection{Dual qudit}
The picture for the dual-qudit  $\bra{k}$  is:
\be
\label{dualqudit 21}
\bra{k}=d^{-1/4}
\raisebox{-.2cm}{
\tikz{
\fmeasure {0}{0}{1\mn}{1\nn}{\,-k}
}}\;.
\ee

\subsubsection{Transformations}
 Transformations $T$ of $1$-qudits are pictures with two input points and two output points.  The identity transformation is
\[
I =
\raisebox{-.47cm}{
\tikz{
\draw (0,0)--(0,1);
\draw (2/3,0)--(2/3,1);
}}\;.
\]

\subsubsection{Matrix Units}
The picture for the transformation $\ket{k}\bra{\ell}$ is
\be
\label{matrix units}
\ket{k}\bra{\ell}=d^{-1/2}
\raisebox{-.5cm}{
\tikz{
\fqudit {0}{-3.5\nn}{1\mn}{.5\nn}{\phantom{kk} k}
\fmeasure {0}{0}{1\mn}{.5\nn}{\ -\ell}
}}\;.
\ee

\subsubsection{Pauli matrices $X,Y,Z$ \label{Sect:PauliXYZ}}
The pictures for  Pauli $X, Y, Z$  are:
\be
\label{Pauli 21}
X=
\raisebox{-.4cm}{
\tikz{
\draw (0,0)--(0,1);
\draw (2/3,0)--(2/3,1);
\node at (-1/3+2/3,1/2) {1};
}}\;,\quad
Y=
\raisebox{-.4cm}{
\tikz{
\draw (0,0)--(0,1);
\draw (2/3,0)--(2/3,1);
\node at (-1/3,1/2) {-1};
}}\;,\quad
Z=
\raisebox{-.4cm}{
\tikz{
\draw (0,0)--(0,1);
\draw (2/3,0)--(2/3,1);
\node at (-1/3,1/2) {1};
\node at (-1/3+2/3,1/2) {-1};
}}\;.
\ee

\subsection{$1$-Qudit properties}
In this section we explain why the dictionary is holographic for $1$-qudits, and we show how the Pauli $X,Y,Z$ in \eqref{Pauli 21} actually correspond to the usual qudit Pauli matrices.

\begin{itemize}
\item{} Orthonormal Basis:
\be
\bra{\ell}k\rangle
=
d^{-1/2}\, \raisebox{-.6cm}{
\tikz{
\fqudit {0}{0\nn}{1\mn}{1\nn}{\phantom{kl} k}
\fmeasure {0}{0}{1\mn}{1\nn}{\,-\ell}
}}
= \delta_{\ell, k}\;.
\ee
Here we use the relations \eqref{AddCharge}, \eqref{QuantumDimension}, and \eqref{Neutrality}.

\item{} Transformations:
The matrix units  $\ket{k}\bra{\ell}$ are pictured in  \eqref{matrix units}.
\end{itemize}

Therefore single qudit transformations can be represented as pictures.
On the other hand, relation (\ref{Equ:Resolution of the identity}) indicates that any picture with two input points and two output points is a single qudit transformation. This gives an elementary dictionary for translation between single qudit transformations and pictures with two input points and two output points. In general, there is a correspondence between $n$-qudit transformations and pictures with $2n$ input points and $2n$ output points.

In this way, the pictorial computation is the same as the usual algebraic computation in quantum information.

\begin{itemize}
\item{Pauli $X,Y,Z$ Relations:}
Using the notation for qudits in \eqref{qudit 21} and \eqref{Pauli 21}, one can identify these three $2$-string transformations as the Pauli matrices defined in \eqref{XYZ-Defn-1}.
\beq\label{PauliReduce1}
\raisebox{-.25cm}{
\tikz{
\draw (0,1/3)--(0,1);
\draw (2/3,1/3)--(2/3,1);
\node at (-1/3+2/3,1/2) {\phantom{k} 1};
\draw (0,1) arc (180:0:1/3);
\node at (-1/3+2/3,1) {\phantom{k} $k$};
}}
&=&
\raisebox{-.25cm}{
\tikz{
\draw (0,1/3)--(0,1);
\draw (2/3,1/3)--(2/3,1);
\draw (0,1) arc (180:0:1/3);
\node at (-1/6+1/2,1) {\size{$k$+1}};
}}\;,\\
\label{PauliReduce2}
\raisebox{-.25cm}{
\tikz{
\draw (0,1/3)--(0,1);
\draw (2/3,1/3)--(2/3,1);
\draw (0,1) arc (180:0:1/3);
\node at (-1/6+2/3,1) {$k$};
\node at (-1/3,1/2) {\,-1};
}}
&=&
\zeta^{1-2k}
\raisebox{-.25cm}{
\tikz{
\draw (0,1/3)--(0,1);
\draw (2/3,1/3)--(2/3,1);
\draw (0,1) arc (180:0:1/3);
\node at (-1/8+1/2,1) {\size{$k$-1}};
}}\;,\\
\label{PauliReduce3}
\raisebox{-.25cm}{
\tikz{
\draw (0,1/3)--(0,1);
\draw (2/3,1/3)--(2/3,1);
\node at (-1/6,1/2) {1};
\draw (0,1) arc (180:0:1/3);
\node at (1/3,1) {\phantom{k} $k$};
\node at (1/3,1/2) {\phantom{l} -1};
}}
&=&q^k
\raisebox{-.25cm}{
\tikz{
\draw (0,1/3)--(0,1);
\draw (2/3,1/3)--(2/3,1);
\draw (0,1) arc (180:0:1/3);
\node at (1/3,1) {\phantom{l} $k$};
}}\;.
\eeq
The pictorial equalities in \eqref{PauliReduce1}--\eqref{PauliReduce3} are a consequence of the relations \eqref{AddCharge}--\eqref{Equ:SF1}.

\item{} Vertical reflection or adjoint:
The vertical reflection of pictures maps the particle of charge $k$ to the particle of charge $-k$.   This involution is an anti-linear, anti-isomorphism of pictures.  It interchanges  $\ket{k}$ with $\bra{k}$.  For qudits or transformations, the vertical reflection is the usual  adjoint $^{*}$.
\end{itemize}

\subsection{$n$-Qudit dictionary}
We mainly discuss the $2$-qudit case. One can easily generalize the argument to the case of $n$-qudits.

\subsubsection{Elementary dictionary}
There are two different ways to represent 2-qudits as pictures indicated by the arrow.
\be\label{Down-Basis}
\ket{k_1,k_2}^\searrow
=\frac{1}{d^{1/2}}\ \raisebox{-.2cm}{\tikz{
\fqudit{0\mn}{0\nn}{1\mn}{1\nn}{\phantom{ll} k_2}
\fqudit{4\mn}{0\nn}{1\mn}{2\nn}{\phantom{ll} k_1}
}} \;,
\ee
or
\be
\ket{k_1,k_2}^\nearrow
=\frac{1}{d^{1/2}}\ \raisebox{-.2cm}{\tikz{
\fqudit{0\mn}{0\nn}{1\mn}{2\nn}{\phantom{ll} k_2}
\fqudit{4\mn}{0\nn}{1\mn}{1\nn}{\phantom{ll} k_1}
}} \;. \label{Equ:increasing basis}
\ee
Then two representations give two different dictionaries, but they are unitary equivalent.
We fix the first choice $\ket{k_1,k_2}=\ket{k_1,k_2}^\searrow$ in our software, since it works out better with concepts in quantum information.

We represent an $n$-qudit $\vec{\ket{k}}=\ket{k_1,k_2,\cdots, k_n}$ as
\be\label{DecreasingBasis}
\vec{\ket{k}}
= \vec{\ket{k}}^\searrow
=\frac{1}{d^{n/4}}\ \raisebox{-.5cm}{\tikz{
\fqudit{0\mn}{0\nn}{1\mn}{2\nn}{\phantom{k} k_2}
\fqudit{4\mn}{0\nn}{1\mn}{3\nn}{\phantom{k} k_1}
\fqudit{-6\mn}{0\nn}{1\mn}{1\nn}{\phantom{k} k_n}
\node at (-4\mn,1\nn) {$\cdots$};
}}\quad.
\ee

We can represent the $n$-qudit transformation 
\[
\vec{\ket{k}}\vec{\bra{\ell}}=\ket{k_1,k_2,\cdots, k_n}\bra{\ell_1,\ell_2,\cdots, \ell_n}
\] 
as
\be
\vec{\ket{k}}\vec{\bra{\ell}}
=\frac{1}{d^{n/2}}
\raisebox{-1.2cm}{
\tikz{
\fqudit{0\mn}{0\nn}{1\mn}{2\nn}{\phantom{k} k_2}
\fqudit{4\mn}{0\nn}{1\mn}{3\nn}{\phantom{k} k_1}
\fqudit{-6\mn}{0\nn}{1\mn}{1\nn}{\phantom{k} k_n}
\node at (-4\mn,7\nn) {$\cdots$};
\fmeasure{0\mn}{9\nn}{1\mn}{2\nn}{-\ell_2}
\fmeasure{4\mn}{9\nn}{1\mn}{3\nn}{-\ell_1}
\fmeasure{-6\mn}{9\nn}{1\mn}{1\nn}{-\ell_n}
\node at (-4\mn,1\nn) {$\cdots$};
}}\quad.
\ee

We denote an $n$-qudit transformation $T$ as

\be
T=\underbrace{\raisebox{-.4cm}{
\tikz{
\draw (-1/6,1/3) rectangle (1/6+2/3,1-1/3);
\draw (0,0)--(0,1/3);
\draw (2/3,0)--(2/3,1/3);
\draw (0,1)--(0,1-1/3);
\draw (2/3,1)--(2/3,1-1/3);
\node at (1/3,1/2) {$T$};
\node at (1/3,1/6) {$\cdots$};
\node at (1/3,1-1/6) {$\cdots$};
}}
}_{2n} \;.
\ee

In the other direction, any picture with 0 input points and $2n$-outpoint points is an $n$-qudit.
Any picture with $2n$ input points and $2n$ outpoint points is an $n$-qudit transformation.

\subsubsection{Controlled transformations}
Suppose $T$
is a single qudit transformation.  Now we give the pictorial representation of the controlled transformations $C_{1,T}$ and $C_{T,1}$ in Figs.~\ref{Fig:Controlled gate 1} and \ref{Fig:Controlled gate 2}. 
\begin{align}
C_{1,T}&=\frac{1}{\sqrt{d}}\sum_{k=0}^{d-1}
~\raisebox{-.8cm}{
\tikz{
\fqudit {-3/3}{-1/2}{-1/3}{5/6}{k};
\fmeasure {-3/3}{1+1/3}{-1/3}{1/6}{-k};
\draw (0,0-1/2)--(0,4/3);
\draw (2/3,0-1/2)--(2/3,4/3);
\fill[white] (-1/6,-1/3) rectangle (1/6+2/3,1-4/6);
\draw (-1/6,-1/3) rectangle (1/6+2/3,1-4/6);
\node at (1/3,0) {$T^k$};
}}\;, \\
\phantom {a}\nonumber\\
C_{T,1}&=\frac{1}{\sqrt{d}}\sum_{k=0}^{d-1}
\raisebox{-.8cm}{
\tikz{
\fqudit {3/3}{-1/3}{-1/3}{1/3}{\phantom{kl} k};
\fmeasure {3/3}{1+1/3}{-1/3}{1/3}{\,-k};
\draw (-1/6,1/6) rectangle (1/6+2/3,1-1/6);
\draw (0,0-1/3)--(0,1/6);
\draw (2/3,0-1/3)--(2/3,1/6);
\draw (0,1+1/3)--(0,1-1/6);
\draw (2/3,1+1/3)--(2/3,1-1/6);
\node at (1/3,1/2) {$T^k$};
}}~\;.
\end{align}
In particular, $C_{Z}\equiv C_{Z,1}=C_{1,Z}$. 

\begin{lemma}\label{Lem:CZ}
Note that $C_{Z}\ket{k_1,k_2}=q^{k_1k_2}\ket{k_1,k_2}.$ We have that

\be
\raisebox{-.2cm}{\tikz{
\fsqudit{0/3}{0\nn}{1/3}{2\nn}{\phantom{kl}k_1}
\fsqudit{4/3}{0\nn}{1/3}{1\nn}{\phantom{kl}k_2}
\draw (-2.5/3,0) rectangle (4.5/3,-2\nn);
\node at (1/3,-1\nn) {\size{$C_{Z}$}};
\draw (-2/3,-2\nn)--(-2/3,-2.5\nn);
\draw (0/3,-2\nn)--(0/3,-2.5\nn);
\draw (2/3,-2\nn)--(2/3,-2.5\nn);
\draw (4/3,-2\nn)--(4/3,-2.5\nn);
}}
=
\raisebox{-.2cm}{\tikz{
\fsqudit{0/3}{0\nn}{1/3}{3.5\nn}{\phantom{kl}k_1}
\fsqudit{4/3}{0\nn}{1/3}{4.5\nn}{\phantom{kl}k_2}
}}\quad.
\ee
\end{lemma}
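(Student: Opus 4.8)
The stated scalar identity $C_Z\ket{k_1,k_2}=q^{k_1k_2}\ket{k_1,k_2}$ is immediate from $C_Z=\sum_{\ell}\ket{\ell}\bra{\ell}\otimes Z^{\ell}$ together with $Z\ket{k}=q^{k}\ket{k}$ from \eqref{XYZ-Defn-1}; the actual content of the lemma is the pictorial equality, and my plan is to derive it entirely inside the planar calculus of \S\ref{Sec:planar relation}. I would begin by replacing the box labelled $C_Z$ with the pictorial definition of $C_{1,Z}$ (legitimate since $C_Z\equiv C_{Z,1}=C_{1,Z}$), namely the sum $\frac{1}{\sqrt d}\sum_{k}$ of diagrams in which the first (control) qudit emits a charge-$k$ cap together with a charge-$(-k)$ measurement, while the second (target) strings carry $Z^{k}$. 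Stacking this sum on top of the two-cap picture of $\ket{k_1,k_2}$ from \eqref{Down-Basis} produces a single closed diagram depending on $k,k_1,k_2$, which I then simplify.

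The first simplification uses the orthonormality assembled from \eqref{AddCharge}, \eqref{QuantumDimension}, and \eqref{Neutrality}: the charge-$(-k)$ measurement on the control line pairs against the charge-$k_1$ cap of the first qudit, producing a factor $\delta_{k,k_1}$ and collapsing the sum to the single term $k=k_1$. What then remains is $Z^{k_1}$ applied to the cap of the second qudit. Here I would invoke the Pauli reduction \eqref{PauliReduce3}, iterated $k_1$ times (equivalently, reading off $Z^{k_1}\ket{k_2}=q^{k_1k_2}\ket{k_2}$ directly from the same charge-addition and string-Fourier moves), so that the charge-$1$/charge-$(-1)$ strings of each factor of $Z$ are absorbed into the cap and release the scalar $q^{k_1k_2}$.

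The final step is to recognize that this scalar is exactly what the right-hand diagram encodes, so that no explicit coefficient is written. The para-isotopy relation \eqref{Equ:para isotopy} says that interchanging the vertical order of a charge-$k_1$ and a charge-$k_2$ label separated by neutral strings costs $q^{k_1k_2}$. The changed cap heights on the right-hand side, where the $k_1$-label now sits below the $k_2$-label rather than above it as on the input, realize precisely this interchange. I would therefore close the argument by applying \eqref{Equ:para isotopy} to rewrite $q^{k_1k_2}\ket{k_1,k_2}$ as the displayed picture.

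I expect the main obstacle to be purely the bookkeeping of phases. Both the string-Fourier relations \eqref{Equ:SF1}--\eqref{Equ:SF2}, which carry $\zeta^{\pm k^2}$ factors whenever caps and measurements are created, and each absorption of a charge around a cap contribute scalars, and one must verify that all $\zeta$-powers cancel and that the surviving $q$-powers assemble to exactly $q^{k_1k_2}$ with no stray sign or $\zeta$. Keeping straight which of the two strings of each cap carries the charge (the right-string convention of \eqref{qudit 21}) and tracking the charge-$1$/charge-$(-1)$ assignment of $Z$ in \eqref{Pauli 21} through the $k_1$-fold product is where the care is needed; the topological moves themselves are routine.
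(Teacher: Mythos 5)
Your proposal is correct, and its closing step is exactly the paper's entire proof: the paper disposes of the lemma in one line, observing that once $C_{Z}\ket{k_1,k_2}=q^{k_1k_2}\ket{k_1,k_2}$ is granted (it is stated as part of the lemma), the pictorial identity is nothing but the para-isotopy relation \eqref{Equ:para isotopy}, which trades the interchange of the vertical heights of the charges $k_1$ and $k_2$ for the factor $q^{k_1k_2}$. Everything you do before that step---expanding the $C_Z$ box into the pictorial sum $\frac{1}{\sqrt{d}}\sum_{k}$ of a charge-$k$ cap over a charge-$(-k)$ cup on the control strand with $Z^{k}$ on the target, collapsing the sum to $k=k_1$ by pairing the cup against the charge-$k_1$ cap via \eqref{AddCharge}, \eqref{QuantumDimension}, \eqref{Neutrality}, and absorbing $Z^{k_1}$ into the second cap by iterating \eqref{PauliReduce3}---is a valid but strictly optional re-derivation, inside the planar calculus, of the scalar identity that the lemma already supplies algebraically. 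The trade-off is clear: the paper's route is shorter because it trusts the dictionary (the algebraic action of $C_Z$ transfers directly to the picture), while yours is longer but fully self-contained pictorially, and as a by-product it checks that the normalizations work out (the $\frac{1}{\sqrt{d}}$ in the controlled-gate picture cancels against the loop value $\sqrt{d}$). Two minor corrections to your write-up: the $C_Z$ picture must be glued \emph{below} the two caps, not on top of them, since states have only output points at the bottom and multiplication composes bottom to top (this does not change the computation you describe); and your anticipated difficulty with stray $\zeta$-powers does not materialize---when the cup charge $-k$ and the cap charge $k_1$ are merged along the single closed loop, exactly as in the paper's orthonormality computation, no cap-sliding moves \eqref{Equ:SF1}--\eqref{Equ:SF2} are needed, so no $\zeta^{\pm k^2}$ factors ever enter.
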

\begin{proof}
The equation follows from para isotopy in Equation \eqref{Equ:para isotopy}.
\end{proof}

\subsubsection{$1$-Qudit transformations  on $2$-qudits}
A $1$-qudit transformation $T$ can act on $2$-qudits by adding two strings on the left or on the right.
We can translate these pictorial transformations to algebraic ones as follows:

\begin{lemma}\label{Lem:T}
For a 1-qudit transformation $T$, we have that
\begin{align}
\raisebox{-.3cm}{
\tikz{
\draw (-1/6,1/6) rectangle (1/6+2/3,1-1/6);
\draw (0,0)--(0,1/6);
\draw (2/3,0)--(2/3,1/6);
\draw (0,1)--(0,1-1/6);
\draw (2/3,1)--(2/3,1-1/6);
\node at (1/3,1/2) {$T$};
\draw (-2/3,0)--(-2/3,1);
\draw (-4/3,0)--(-4/3,1);
}}
=&1 \otimes T \;,\\
& \nonumber\\
\raisebox{-.3cm}{
\tikz{
\draw (-1/6,1/6) rectangle (1/6+2/3,1-1/6);
\draw (0,0)--(0,1/6);
\draw (2/3,0)--(2/3,1/6);
\draw (0,1)--(0,1-1/6);
\draw (2/3,1)--(2/3,1-1/6);
\node at (1/3,1/2) {$T$};
\draw (4/3,0)--(4/3,1);
\draw (6/3,0)--(6/3,1);
}}~\;=&C_{Z} (T\otimes 1) C_{Z}^{-1}\;. \label{Equ:1T1}
\end{align}
Furthermore, if $T$ has charge $k$, then this action equals
\be
\raisebox{-.3cm}{
\tikz{
\draw (-1/6,1/6) rectangle (1/6+2/3,1-1/6);
\draw (0,0)--(0,1/6);
\draw (2/3,0)--(2/3,1/6);
\draw (0,1)--(0,1-1/6);
\draw (2/3,1)--(2/3,1-1/6);
\node at (1/3,1/2) {$T$};
\draw (4/3,0)--(4/3,1);
\draw (6/3,0)--(6/3,1);
}}
=T \otimes Z^{k}
\;. \label{Equ:1T2}
\ee

\end{lemma}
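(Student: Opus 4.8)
The plan is to deduce the charged identity \eqref{Equ:1T2} from the general formula \eqref{Equ:1T1}, which I may assume, and then reduce to a single matrix-element computation. By \eqref{Equ:1T1} the picture in question already equals $C_Z(T\otimes 1)C_Z^{-1}$, so it suffices to prove that $C_Z(T\otimes 1)C_Z^{-1}=T\otimes Z^k$ whenever $T$ is a $1$-qudit transformation of charge $k$. First I would record that, within the fixed charge-$k$ sector, both sides are linear in $T$, and that this sector is spanned by the homogeneous matrix units $\ket{m}\bra{n}$ whose charge is $k$; hence it is enough to verify the identity for a single such matrix unit $T=\ket{m}\bra{n}$.

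For that matrix unit the only input I need is Lemma~\ref{Lem:CZ}, namely $C_Z\ket{a,b}=q^{ab}\ket{a,b}$ and hence $C_Z^{-1}\ket{a,b}=q^{-ab}\ket{a,b}$. The computation I would carry out applies the three factors to a basis vector $\ket{a,b}$ in turn:
\[
C_Z(T\otimes 1)C_Z^{-1}\ket{a,b}
= q^{-ab}\,C_Z\bigl(\delta_{n,a}\ket{m,b}\bigr)
= \delta_{n,a}\,q^{(m-a)b}\ket{m,b}.
\]
On the support of $\delta_{n,a}$ one has $a=n$, so the accumulated phase is $q^{(m-n)b}$; identifying $m-n$ with the charge $k$ of $\ket{m}\bra{n}$ and using $Z^k\ket{b}=q^{kb}\ket{b}$ from \eqref{XYZ-Defn-1}, the right-hand side becomes $\bigl(\ket{m}\bra{n}\otimes Z^k\bigr)\ket{a,b}$. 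Summing over the matrix units constituting $T$ then gives $C_Z(T\otimes 1)C_Z^{-1}=T\otimes Z^k$, which is exactly \eqref{Equ:1T2}.

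Conceptually this is the pictorial assertion that conjugation by $C_Z$ merely realigns the heights of the two caps, so that by para isotopy \eqref{Equ:para isotopy} the charge $k$ carried by $T$ is transported past the second qudit at the cost of the twist $q^{k\,k_2}$; read as a transformation of the second qudit, that twist is precisely $Z^k$. The one step demanding care, and the only real obstacle, is the charge bookkeeping: the sign of the exponent produced by the $C_Z$-conjugation must be matched against the convention by which "$T$ has charge $k$'' is read off the string labels, i.e. that charge $k$ means $T$ raises the label by $k$ so that $m-n\equiv k\pmod d$. Once that sign is fixed consistently, the identity drops out of the displayed line.
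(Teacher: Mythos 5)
Your verification of the ``furthermore'' identity \eqref{Equ:1T2} is correct, and it follows the same route as the paper: there the proof of that part is the bare assertion that $T\otimes Z^{k}=C_{Z}(T\otimes 1)C_{Z}^{-1}$ for a charge-$k$ transformation $T$, and your matrix-unit computation supplies exactly the verification the paper omits. You also resolve the one delicate point the right way: the identity holds precisely when ``charge $k$'' is read off the string labels, i.e.\ $\ket{m}\bra{n}$ has charge $m-n$ (equivalently $ZTZ^{-1}=q^{k}T$). That is the convention consistent with the pictures for $X,Y,Z$ in \eqref{Pauli 21} and with Proposition \ref{Jordan-Wigner Form} and the Jordan--Wigner formulas downstream; note it is opposite in sign to the convention stated for $M_{\vec\ell,\vec k}$ in the paper's notation section, so flagging the bookkeeping was warranted, and you chose the sign that makes the lemma true.

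However, as a proof of the lemma the proposal is incomplete: the first displayed equation and \eqref{Equ:1T1} are themselves claims of the lemma, not prior results, so ``which I may assume'' leaves a genuine gap. The first equation is essentially definitional --- in the decreasing basis \eqref{Down-Basis} the second qudit is the right-hand pair of strings, so placing two through strings to the left of $T$ is $1\otimes T$ --- but \eqref{Equ:1T1} is the pictorial heart of the lemma: adding two strings on the right of $T$ makes $T$ act on the first qudit, whose cap sits at a different height than in the $1$-qudit picture \eqref{qudit 21}, and the para-isotopy phases produced by realigning those heights are exactly conjugation by $C_Z$; this is what Lemma \ref{Lem:CZ} encodes, and it is from that lemma that the paper derives \eqref{Equ:1T1}. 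Your closing paragraph describes this mechanism, but only as commentary on \eqref{Equ:1T2}, not as an argument for \eqref{Equ:1T1}. To close the gap, apply the left-hand picture of \eqref{Equ:1T1} to a basis picture $\ket{k_1,k_2}^{\searrow}$ and use Lemma \ref{Lem:CZ} (equivalently, para isotopy \eqref{Equ:para isotopy}) to identify the result with $C_{Z}(T\otimes 1)C_{Z}^{-1}\ket{k_1,k_2}$.
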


\begin{proof}
The first equation follows from the definition.
The second equation follows from Lemma \ref{Lem:CZ}.
The third equation follows from the algebraic identity $T \otimes Z^{k}=C_{Z} (T\otimes 1) C_{Z}^{-1}$, for a charged $k$ transformation $T$.
\end{proof}

Note using \eqref{Equ:1T2} is preferable to using  \eqref{Equ:1T1}, since $Z^{k}$ and $T$ can be performed locally by two persons.

\begin{proposition}\label{Jordan-Wigner Form}
If $T$  is a charge $k$, multi-qudit transformation, then
\begin{align}\label{Jordan-WignerForm}
&\raisebox{-1.1cm}{
\tikz{
\draw (-1/6,1/3) rectangle (1/6+2/3,1-1/3);
\draw (0,0)--(0,1/3);
\draw (2/3,0)--(2/3,1/3);
\draw (0,1)--(0,1-1/3);
\draw (2/3,1)--(2/3,1-1/3);
\node at (1/3,1/2) {$T$};
\node at (1/3,1/6) {$\cdots$};
\node at (1/3,1-1/6) {$\cdots$};
\node at (5/3,5/24) {$\underbrace{
\tikz{
\draw (-2/3,0)--(-2/3,1);
\draw (-3/3,0)--(-3/3,1);
\draw (-5/3,0)--(-5/3,1);
\draw (-6/3,0)--(-6/3,1);
\node at (-4/3,1/2) {$\cdots$};
}
}_{2n}$};
\node at (-3/3,5/24) {$\underbrace{
\tikz{
\draw (-2/3,0)--(-2/3,1);
\draw (-3/3,0)--(-3/3,1);
\draw (-5/3,0)--(-5/3,1);
\draw (-6/3,0)--(-6/3,1);
\node at (-4/3,1/2) {$\cdots$};
}
}_{2m}$};
}} \nonumber \\
=&\underbrace{1\otimes \cdots \otimes 1}_{2m} \otimes T \otimes  \underbrace{Z^{k} \otimes \cdots \otimes Z^{k}}_{2n} \;.
\end{align}
\end{proposition}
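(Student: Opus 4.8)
The plan is to prove the formula by induction on the numbers $m$ and $n$ of qudits adjoined to the left and right of $T$, reducing each step to the single-adjunction statements of Lemma \ref{Lem:T}. Write $\lambda$ for the picture operation ``adjoin one qudit on the left by two through-strings'' and $\rho$ for ``adjoin one qudit on the right by two through-strings.'' Both $\lambda$ and $\rho$ are algebra homomorphisms: if a box is a vertical composite (a product), then drawing a pair of through-strings alongside it runs those strings straight through both factors, so $\rho(ST)=\rho(S)\rho(T)$ and likewise for $\lambda$; and since $\lambda,\rho$ act on opposite sides they commute. It therefore suffices to determine $\lambda$ and $\rho$ on a generating set and then iterate. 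I would take the generators to be the parafermions $c_{1},\dots,c_{2p}$ of the algebra representing $T$, each of charge $1$.

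For the right side I would show $\rho(c_{j})=c_{j}\otimes Z$ for every $j$. For $c_{j}$ on the qudit adjacent to the new one this is exactly \eqref{Equ:1T2} of Lemma \ref{Lem:T}. For general $j$ one slides the adjoined (neutral) strings past the single charge using para isotopy \eqref{Equ:para isotopy}: the intervening strings are neutral, so the twisting scalar $q^{k\ell}$ is trivial there, and only the one charge produces a factor, depositing $Z$ on the new qudit. The map sending a charge-$j$ transformation to $(\,\cdot\,)\otimes Z^{j}$ is the unique algebra homomorphism agreeing with these values on generators (charges add under multiplication and the exponents of $Z$ add in lockstep), so $\rho$ must equal it; hence $\rho(T)=T\otimes Z^{k}$ for any $T$ of total charge $k$. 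Crucially $Z^{k}$ is neutral, so $\rho$ preserves the total charge, and the $n$-fold iteration gives $\rho^{n}(T)=T\otimes (Z^{k})^{\otimes n}$, one Jordan--Wigner factor per adjoined right qudit.

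For the left side the first equation of Lemma \ref{Lem:T} gives $\lambda(c_{j})=1\otimes c_{j}$ with no Jordan--Wigner string, the asymmetry being dictated by the $\searrow$ convention fixed in \eqref{DecreasingBasis}. By the homomorphism property $\lambda(T)=1\otimes T$, and iterating yields $\lambda^{m}(T)=1^{\otimes m}\otimes T$. Composing the two (they commute) and using the trivial base case $m=n=0$ produces $\underbrace{1\otimes\cdots\otimes 1}_{2m}\otimes T\otimes\underbrace{Z^{k}\otimes\cdots\otimes Z^{k}}_{2n}$, which is the assertion.

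I expect the main obstacle to be precisely the generator computation $\rho(c_{j})=c_{j}\otimes Z$ for a charge not adjacent to the adjoined qudit, that is, the fact that the factor attached to each right qudit is governed by the \emph{total} charge $k$ of $T$ and reaches across the intervening qudits. Reading \eqref{Equ:1T1} literally as a single controlled-$Z$ at the interface would record only the local charge there and give the wrong exponent; the correct $Z^{k}$ appears only once one verifies, via para isotopy \eqref{Equ:para isotopy}, that the single charge slides past the intervening neutral strings without twisting and deposits $Z$ on the new qudit. Making this sliding precise, and confirming that the neutral intervening strings contribute trivially, is the one point that needs genuine work beyond a direct citation of Lemma \ref{Lem:T}.
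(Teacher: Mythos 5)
Your proof is correct, and it takes a genuinely different route from the paper's. The paper's entire proof is the sentence ``Repeating Equation \eqref{Equ:1T2} in Lemma \ref{Lem:T}, we obtain the equality'': it iterates the one-pair adjunction formulas directly on the whole transformation $T$, each right adjunction tensoring $Z^{k}$ and each left one tensoring $1$, the iteration being legitimate because $Z^{k}$ and $1$ are neutral, so the charge remains $k$ at every step. That argument tacitly applies Lemma \ref{Lem:T} to multi-qudit $T$, although the lemma is stated for $1$-qudit transformations; as you observe, reading \eqref{Equ:1T1} with a single controlled-$Z$ at the interface would record only the charge adjacent to the new qudit, so the multi-qudit extension really does require the para-isotopy fact you isolate. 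Your proof supplies that extension by a different decomposition: multiplicativity and linearity of the adjunction maps $\lambda,\rho$ reduce everything to the parafermion generators, where the one-generator formulas $\rho(c_{j})=c_{j}\otimes Z$ and $\lambda(c_{j})=1\otimes c_{j}$ are proved directly, and additivity of charge under multiplication then forces the exponent $k$. In effect you invert the paper's logical order: the paper obtains its Jordan--Wigner formulas as a corollary of this proposition, whereas you prove generator-level Jordan--Wigner statements first and assemble the proposition from them. The paper's route buys brevity; yours buys an explicit justification of the multi-qudit step and makes transparent why the tail is $Z^{k}$, depending only on the total charge and not on how that charge is distributed inside $T$. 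Two minor points: para isotopy exchanges the heights of charges rather than ``sliding strings past charges,'' so the computation you flag as needing genuine work amounts to checking that, in the decreasing-height convention, only the new qudit's cap and cup charges exchange heights with the single charge of $c_{j}$, each matrix element acquiring exactly the factor $q^{i}$ that constitutes $Z$ (neutrality of the intervening strings is, as you say, why nothing else appears); and your count of $n$ factors $Z^{k}$ against the paper's underbrace labelled $2n$ repeats the paper's own conflation of string count with tensor-factor count, which is harmless.
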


\begin{proof}
Repeating Equation \eqref{Equ:1T2} in Lemma \ref{Lem:T}, we obtain the equality.
\end{proof}

\subsubsection{Jordan-Wigner transformations}
The Jordan-Wigner transformation is an isomorphism between the parafermion algebra with $2n$ generators and $n$-qudit transformations, namely  the $n$-fold tensor product of the $d\times d$ matrix algebra $M_{d}(\mathbb{C})$.
As a particular case of Proposition \ref{Jordan-Wigner Form}, we obtain the qudit Jordan-Wigner transformation for $T=X$, $Y$, or $Z$. 
We give an intuitive pictorial interpretation of this transformation in Proposition \ref{Prop:JWasLocal}.

\begin{proposition}
The Jordan-Wigner transformation is given by
\begin{align}\label{Equ:J-W x}
\raisebox{-.4cm}{
\tikz{
\draw (0\mn,0)--(0,1);
\draw (-1\mn,0)--(-1\mn,1);
\node at (1/6,1/2) {\rm{\,1}};
\draw (-2\mn,0)--(-2\mn,1);
\draw (-3\mn,0)--(-3\mn,1);
\draw (-5\mn,0)--(-5\mn,1);
\draw (-6\mn,0)--(-6\mn,1);
\node at (-4\mn,1/2) {$\cdots$};
}}
&=
X \otimes Z \otimes \cdots \otimes Z
 \;,\\ \label{Equ:J-W y}
\raisebox{-.4cm}{
\tikz{
\draw (0,0)--(0,1);
\draw (-1\mn,0)--(-1\mn,1);
\node at (-1/6,1/2) {\rm-1};
\draw (-2\mn,0)--(-2\mn,1);
\draw (-3\mn,0)--(-3\mn,1);
\draw (-5\mn,0)--(-5\mn,1);
\draw (-6\mn,0)--(-6\mn,1);
\node at (-4\mn,1/2) {$\cdots$};
}}
&=
Y \otimes Z^{-1} \otimes \cdots \otimes Z^{-1}
\;,\\ \label{Equ:J-W z}
\raisebox{-.4cm}{
\tikz{
\draw (0,0)--(0,1);
\draw (-1\mn,0)--(-1\mn,1);
\node at (1/6,1/2) {{\rm-1}};
\node at (-1/6,1/2) {\rm1};
\draw (-2\mn,0)--(-2\mn,1);
\draw (-3\mn,0)--(-3\mn,1);
\draw (-5\mn,0)--(-5\mn,1);
\draw (-6\mn,0)--(-6\mn,1);
\node at (-4\mn,1/2) {$\cdots$};
}}
&=
Z \otimes 1 \otimes \cdots \otimes 1
\;.
\end{align}
Equivalently, we can represent Pauli matrices on $n$-qudits as diagrams.
\begin{align}
X \otimes  1 \otimes \cdots \otimes 1
&= ~~~\raisebox{-.4cm}{
\tikz{
\draw (0,0)--(0,1);
\draw (-1\mn,0)--(-1\mn,1);
\node at (1/6,1/2) {\rm 1};
\draw (-2\mn,0)--(-2\mn,1);
\draw (-3\mn,0)--(-3\mn,1);
\draw (-6\mn,0)--(-6\mn,1);
\draw (-7\mn,0)--(-7\mn,1);
\node at (-4\mn,1/2) {$\cdots$};
\node at (-1.5\mn,1/2) {\rm -1};
\node at (-2.5\mn,1/2) {\rm 1};
\node at (-5.5\mn,1/2) {\rm -1};
\node at (-6.5\mn,1/2) {\rm 1};
}} \;,\\
Y \otimes  1 \otimes \cdots \otimes 1
&= \raisebox{-.4cm}{
\tikz{
\draw (0,0)--(0,1);
\draw (-1\mn,0)--(-1\mn,1);
\node at (-1/6,1/2) {\rm -1};
\draw (-2\mn,0)--(-2\mn,1);
\draw (-3\mn,0)--(-3\mn,1);
\draw (-6\mn,0)--(-6\mn,1);
\draw (-7\mn,0)--(-7\mn,1);
\node at (-4\mn,1/2) {$\cdots$};
\node at (-1.5\mn,1/2) {\rm 1};
\node at (-2.5\mn,1/2) {\rm -1};
\node at (-5.5\mn,1/2) {\rm 1};
\node at (-6.5\mn,1/2) {\rm -1};
}} \;,\\
Z \otimes 1 \otimes \cdots \otimes 1
&= \; \raisebox{-.4cm}{
\tikz{
\draw (0,0)--(0,1);
\draw (-1\mn,0)--(-1\mn,1);
\node at (1/6,1/2) {\rm -1};
\node at (-1/6,1/2) {\rm 1};
\draw (-2\mn,0)--(-2\mn,1);
\draw (-3\mn,0)--(-3\mn,1);
\draw (-6\mn,0)--(-6\mn,1);
\draw (-7\mn,0)--(-7\mn,1);
\node at (-4\mn,1/2) {$\cdots$};
}} \;.
\end{align}
\end{proposition}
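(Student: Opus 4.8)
The plan is to obtain all six identities as direct consequences of Proposition \ref{Jordan-Wigner Form}, so that the only genuine inputs are the charges of the three elementary Pauli pictures together with the cancellation $Z^{\pm1}Z^{\mp1}=1$. First I would read the charges straight off the pictorial definitions \eqref{Pauli 21}: the diagram for $X$ carries a single label $1$, hence charge $1$; the diagram for $Y$ carries a single label $-1$, hence charge $-1$; and the diagram for $Z$ carries labels $1$ and $-1$, hence is neutral with charge $0$. These values are consistent with \eqref{XYZ-Defn-1}, since by \eqref{PauliReduce1}--\eqref{PauliReduce3} the picture $X$ raises the charge of a cap by one, $Y$ lowers it by one, and $Z$ preserves it.

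Next I would apply Proposition \ref{Jordan-Wigner Form} with $m=0$ and with $T$ equal to each of these three pictures in turn. Taking $T=X$ with $k=1$ gives $X\otimes Z\otimes\cdots\otimes Z$, which is \eqref{Equ:J-W x}; taking $T=Y$ with $k=-1$ gives $Y\otimes Z^{-1}\otimes\cdots\otimes Z^{-1}$, which is \eqref{Equ:J-W y}; and taking $T=Z$ with $k=0$ gives $Z\otimes 1\otimes\cdots\otimes 1$, which is \eqref{Equ:J-W z}. Thus the three Jordan--Wigner formulas require nothing beyond substitution into the preceding proposition.

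For the three \emph{local} Pauli diagrams I would cancel the trailing $Z$-strings. Because the dictionary sends vertical composition of pictures to multiplication of transformations, I can write
\[
X\otimes 1\otimes\cdots\otimes 1=(X\otimes Z\otimes\cdots\otimes Z)\,(1\otimes Z^{-1}\otimes\cdots\otimes Z^{-1}),
\]
where the second factor is the picture carrying charge $-1$ on the left string and $+1$ on the right string of each of the qudits $2,\dots,n$. Stacking the two pictures and using addition of charge \eqref{AddCharge} on each string collapses the labels onto the single-row diagram shown. The $Y$ case is identical after multiplying by $1\otimes Z\otimes\cdots\otimes Z$ (so that each trailing qudit acquires charge $+1$ on the left and $-1$ on the right, cancelling the $Z^{-1}$ of \eqref{Equ:J-W y}), while the $Z$ case is already \eqref{Equ:J-W z} and needs no further step.

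I expect the only delicate point to be the bookkeeping of the signs and placements of the charges in the collapsed single-row diagrams: checking that the trailing factors really read as $Z^{-1}$ (respectively $Z$) rather than their inverses, and that no spurious twisting scalar survives when the labels are slid to a common height. This is controlled by the fact that the picture-to-transformation map is an algebra homomorphism, so any para-isotopy phase \eqref{Equ:para isotopy} incurred in aligning the labels is already bookkept on the algebraic side; hence the displayed equalities hold exactly.
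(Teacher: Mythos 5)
Your proposal is correct and follows essentially the same route as the paper, whose entire proof is a citation of Proposition \ref{Jordan-Wigner Form}: the first three identities are exactly that proposition applied to $T=X,Y,Z$ with charges $1,-1,0$. Your extra cancellation step for the local Pauli diagrams---composing with the neutral $(\mp 1,\pm 1)$ pictures and collapsing via \eqref{AddCharge}---is the natural elaboration of what the paper leaves implicit, and the twisting-scalar worry you flag does vanish, most directly because each trailing pair is neutral, so the para-isotopy phases \eqref{Equ:para isotopy} incurred in sliding labels to a common height cancel in pairs.
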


\begin{proof}
This follows from Proposition \ref{Jordan-Wigner Form}. 
\end{proof}

\begin{remark}
If we work on the increasing basis in Equation \eqref{Equ:increasing basis}, then we obtain the following Jordan-Wigner transformation:
\begin{align}
\raisebox{-.4cm}{
\tikz{
\draw (0,0)--(0,1);
\draw (1/3,0)--(1/3,1);
\node at (1/6,1/2) {\rm \,1};
\draw (-2/3,0)--(-2/3,1);
\draw (-3/3,0)--(-3/3,1);
\draw (-5/3,0)--(-5/3,1);
\draw (-6/3,0)--(-6/3,1);
\node at (-4/3,1/2) {$\cdots$};
}}\;
&=
Z^{-1} \otimes \cdots \otimes Z^{-1} \otimes X
\;,\\
\raisebox{-.4cm}{
\tikz{
\draw (0,0)--(0,1);
\draw (1/3,0)--(1/3,1);
\node at (-1/6,1/2) {\rm -1};
\draw (-2/3,0)--(-2/3,1);
\draw (-3/3,0)--(-3/3,1);
\draw (-5/3,0)--(-5/3,1);
\draw (-6/3,0)--(-6/3,1);
\node at (-4/3,1/2) {$\cdots$};
}}
\; &=
 Z \otimes \cdots \otimes Z \otimes Y
 \;,\\
\raisebox{-.4cm}{
\tikz{
\draw (0,0)--(0,1);
\draw (1/3,0)--(1/3,1);
\node at (1/6,1/2) {\rm -1};
\node at (-1/6,1/2) {\rm 1};
\draw (-2/3,0)--(-2/3,1);
\draw (-3/3,0)--(-3/3,1);
\draw (-5/3,0)--(-5/3,1);
\draw (-6/3,0)--(-6/3,1);
\node at (-4/3,1/2) {$\cdots$};
}}
&=
1 \otimes \cdots \otimes 1 \otimes Z
\;.
\end{align}
Equivalently,
\begin{align}
1 \otimes \cdots \otimes 1 \otimes X
=& \raisebox{-.4cm}{
\tikz{
\draw (0,0)--(0,1);
\draw (1/3,0)--(1/3,1);
\node at (1/6,1/2) {\;\rm 1};
\draw (-2/3,0)--(-2/3,1);
\draw (-3/3,0)--(-3/3,1);
\draw (-6/3,0)--(-6/3,1);
\draw (-7/3,0)--(-7/3,1);
\node at (-4.8/3,1/2) {$\cdots$};
\node at (-5/6,1/2) {\,\rm -1};
\node at (-7/6,1/2) {\;\rm 1};
\node at (-13/6,1/2) {\,\rm -1};
\node at (-15/6,1/2) {\;\rm 1};
}} \;,\\
1 \otimes \cdots \otimes 1 \otimes Y
=& \raisebox{-.4cm}{
\tikz{
\draw (0,0)--(0,1);
\draw (1/3,0)--(1/3,1);
\node at (-1/6,1/2) {\rm -1};
\draw (-2/3,0)--(-2/3,1);
\draw (-3/3,0)--(-3/3,1);
\draw (-6/3,0)--(-6/3,1);
\draw (-7/3,0)--(-7/3,1);
\node at (-4.9/3,1/2) {$\cdots$};
\node at (-5/6,1/2) {\;\rm 1};
\node at (-7/6,1/2) {\rm -1};
\node at (-13/6,1/2) {\;\rm 1};
\node at (-15/6,1/2) {\rm -1};
}} \;,\\
1 \otimes \cdots \otimes 1 \otimes Z
=&\quad  \raisebox{-.4cm}{
\tikz{
\draw (0,0)--(0,1);
\draw (1/3,0)--(1/3,1);
\node at (1/6,1/2) {\,\rm -1};
\node at (-1/6,1/2) {\rm 1};
\draw (-2/3,0)--(-2/3,1);
\draw (-3/3,0)--(-3/3,1);
\draw (-6/3,0)--(-6/3,1);
\draw (-7/3,0)--(-7/3,1);
\node at (-4.6/3,1/2) {$\cdots$};
}} \;.
\end{align}
\end{remark}

\subsubsection{Measurement dictionary I}
When a protocol has a meter, and the measurement of this meter is $\ell$, it is the same as applying the dual qudit $\bra{\ell}$ to the corresponding qudit.

\begin{proposition}
If the measurement of a meter $m_{j}$ on the $j^{\rm th}$ qudit of an $n$-qudit is $\ell$, then the diagram is
\be
m_j=l \longrightarrow
\raisebox{-1.1cm}{
\tikz{
\fmeasure {2\mn}{1}{1\mn}{1\nn}{\text{-}\ell}
\node at (6\mn,5/24) {$\underbrace{
\tikz{
\draw (-2/3,0)--(-2/3,1);
\draw (-3/3,0)--(-3/3,1);
\draw (-6/3,0)--(-6/3,1);
\draw (-7/3,0)--(-7/3,1);
%\node at (-5/3,1/2) {$\cdots$};
\node at (-4.6/3,1/2) {$\cdots$};
}
}_{2(j-1)}$};
%\node at (5\mn,1/2) {$\cdots$};
\node at (-4\mn,5/24) {$\underbrace{
\tikz{
\draw (-2/3,0)--(-2/3,1);
\draw (-3/3,0)--(-3/3,1);
\draw (-6/3,0)--(-6/3,1);
\draw (-7/3,0)--(-7/3,1);
\node at (-4.6/3,1/2) {$\cdots$};
\node at (-2.5/3,1/2) {-$\ell$};
\node at (-3.5/3,1/2) {$\ell$};
\node at (-6.5/3,1/2) {-$\ell$};
\node at (-7.5/3,1/2) {$\ell$};
}
}_{2(n-j-1)}$};
}}\;.
\ee\end{proposition}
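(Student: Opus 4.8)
The plan is to reduce the statement to the Jordan--Wigner form already established in Proposition~\ref{Jordan-Wigner Form}. By the observation stated just above as Measurement dictionary~I, recording the outcome $\ell$ for the meter $m_j$ on the $j^{\rm th}$ qudit is, as an algebraic operation, nothing but the application of the dual qudit $\bra{\ell}$ to that qudit. So it suffices to show that $\bra{\ell}$, applied to the $j^{\rm th}$ of $n$ qudits written in the decreasing basis \eqref{DecreasingBasis}, has the pictorial form displayed in the proposition.

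First I would replace the meter by its functional $\bra{\ell}$ and invoke the picture \eqref{dualqudit 21}, which represents $\bra{\ell}$ as a measurement cap carrying charge $-\ell$. The essential point is that this is a \emph{charged} elementary picture, of charge $-\ell$, occupying the $j^{\rm th}$ qudit slot. Next I would treat this cap exactly as the charged box $T$ is treated in Lemma~\ref{Lem:T}: by \eqref{Equ:1T2}, a charge $-\ell$ one-qudit operation placed to the left of a neighbouring qudit's pair of strings equals that operation tensored with $Z^{-\ell}$ on the neighbour, the requisite phase being fixed by para isotopy \eqref{Equ:para isotopy} (equivalently by the $C_Z$ conjugation in \eqref{Equ:1T1}). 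Iterating \eqref{Equ:1T2} across every qudit lying to the right of the cap---this is precisely the content of Proposition~\ref{Jordan-Wigner Form} with $T=\bra{\ell}$ and $k=-\ell$---deposits a factor $Z^{-\ell}$ on each of the qudits $j+1,\dots,n$ and leaves the qudits $1,\dots,j-1$ acted on only by the identity.

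Finally I would unfold each $Z^{-\ell}$ into its two-string form. By the Pauli picture \eqref{Pauli 21}, $Z$ is the pair of strings carrying charges $+1$ (right) and $-1$ (left), so $Z^{-\ell}$ is the pair carrying $-\ell$ (right) and $+\ell$ (left); laying these out along the affected bundle reproduces exactly the alternating charges $-\ell,\ell,-\ell,\ell,\dots$ shown in the diagram, while the opposite bundle consists of uncharged through strings and the measurement cap of charge $-\ell$ remains in position $j$. This matches the claimed picture.

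The main obstacle I anticipate is bookkeeping rather than anything conceptual. One must check that the dual qudit, which is a cap (two inputs, no outputs) rather than a square box, is genuinely governed by the same charge-propagation rule as the transformations in Lemma~\ref{Lem:T}; this is legitimate because the derivation of \eqref{Equ:1T2} uses only that the inserted picture has a well-defined charge together with para isotopy \eqref{Equ:para isotopy}, both of which hold for the charged cap. The remaining care is in confirming---against the decreasing-basis convention \eqref{DecreasingBasis}---that the $Z^{-\ell}$ factors accumulate on the higher-index side (qudits $j+1,\dots,n$) with the correct sign of the exponent, and in getting the exact count of strings in each labelled bundle right, namely $2(j-1)$ uncharged strings on one side and $2(n-j)$ charged strings on the other.
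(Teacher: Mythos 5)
Your reduction is the right idea and, through Lemma \ref{Lem:T} and Proposition \ref{Jordan-Wigner Form}, it packages exactly the mechanism the paper invokes (the paper's proof is a one-line appeal to para isotopy \eqref{Equ:para isotopy}, which is also what underlies the Jordan--Wigner form). Your first three steps are correct: the \emph{bare} picture --- the charge $-\ell$ cap of \eqref{dualqudit 21} in slot $j$, with uncharged through strings elsewhere --- equals $1^{\otimes(j-1)}\otimes\bra{\ell}\otimes(Z^{-\ell})^{\otimes(n-j)}$, and the derivation indeed applies to a charged cap just as to a charged box. The gap is in your final step, where you identify this operator with the proposition's diagram. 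That is not what the diagram depicts. The proposition's diagram represents the \emph{plain} measurement $1^{\otimes(j-1)}\otimes\bra{\ell}\otimes 1^{\otimes(n-j)}$; the alternating charges on the right-hand strings are inserted precisely to \emph{cancel} the $Z^{-\ell}$ tails that your Jordan--Wigner step produces, not to implement them. Concretely: since a label is written to the left of the string it decorates, \eqref{Pauli 21} says $Z$ carries $+1$ on its \emph{left} string and $-1$ on its \emph{right} string --- you assert the opposite --- so each charged pair in the figure ($+\ell$ on the left string, $-\ell$ on the right) is $Z^{+\ell}$, not $Z^{-\ell}$. Your two slips (targeting $Z^{-\ell}$, and swapping the strings of $Z$) compensate, which is why your description seems to match the figure; but the operator identity you assert is false, and what your argument actually proves, once \eqref{Pauli 21} is read correctly, is the \emph{converse} statement in the paragraph after the proposition: the bare diagram is implemented by measuring and then applying $Z^{-\ell}$ to the later qudits, as in Figure \ref{Fig:dual qudit}.

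The repair is one line. Invert your identity: since $Z^{\ell}Z^{-\ell}=1$,
\[
1^{\otimes(j-1)}\otimes\bra{\ell}\otimes 1^{\otimes(n-j)}
=\bigl[\text{bare picture}\bigr]\circ
\bigl(1^{\otimes(j-1)}\otimes 1\otimes(Z^{\ell})^{\otimes(n-j)}\bigr)\;,
\]
and unfolding each $Z^{\ell}$ into its two-string form ($+\ell$ on the left string, $-\ell$ on the right) and absorbing these pairs into the through strings gives exactly the claimed diagram; since each such pair is neutral, its vertical placement produces no further para-isotopy phases, so the picture is unambiguous. (Incidentally, your count of $2(n-j)$ charged strings is the consistent one; the label $2(n-j-1)$ in the paper's figure appears to be an off-by-one.)
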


\begin{proof}
It follows from the para isotopy in Equation \eqref{Equ:para isotopy}.
\end{proof}

Conversely, the diagram
\be
\raisebox{-1.1cm}{
\tikz{
\fmeasure {2\mn}{1}{1\mn}{1\nn}{\text{-}\ell}
\node at (6\mn,5/24) {$\underbrace{
\tikz{
\draw (-2/3,0)--(-2/3,1);
\draw (-3/3,0)--(-3/3,1);
\draw (-6/3,0)--(-6/3,1);
\draw (-7/3,0)--(-7/3,1);
\node at (-4.6/3,1/2) {$\cdots$};
%\node at (-5/3,1/2) {$\cdots$};
}
}_{2(j-1)}$};
%\node at (5\mn,1/2) {$\cdots$};
\node at (-4\mn,5/24) {$\underbrace{
\tikz{
\draw (-2/3,0)--(-2/3,1);
\draw (-3/3,0)--(-3/3,1);
\draw (-6/3,0)--(-6/3,1);
\draw (-7/3,0)--(-7/3,1);
\node at (-4.5/3,1/2) {$\cdots$};
}
}_{2(n-j-1)}$};
}}
\ee

\noindent means that there is a meter on this $j^{\rm th}$ qudit of an $n$-qudit, and the measurement is $\ell$. Moreover, the result is sent to persons who possess the last $(n-j-1)$ qudits. Then the persons apply $Z^{-\ell}$ to each of the $(n-j-1)$ target qudits. The corresponding protocol is in Figure~\ref{Fig:dual qudit}.
Of course we can not predict the result of the measurement, so the pictorial protocol must work for all $\ell$.

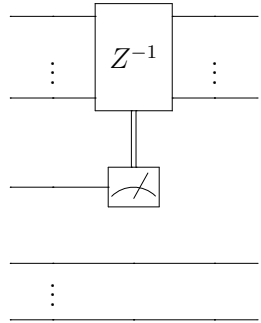
\begin{figure}[h]
\[
\scalebox{.91}{
\Qcircuit @C=1.5em @R=2em  {
& \qw &\multigate{1}{Z^{-1}} & \qw & \qw \\
& \ustick{ \vdots} \qw & \ghost{Z^{-1}} & \ustick{ \vdots} \qw & \qw \\
& \qw & \meter \cwx[-1]  \\
& \qw & \qw & \qw & \qw \\
& \ustick{ \vdots} \qw & \qw & \qw & \qw
}}
\]
\caption{Measurement controlled-$Z^{{-1}}$ gate: If the result of the measurement is $\ell$, then one applies  $Z^{-\ell}$ to the target qudits.}\label{Fig:dual qudit}
\end{figure}

\subsection{Braided relations}\label{Sec:braided relations}
\subsubsection{The braid\label{Sect:Braid}}
We begin by defining a positive and negative braid in terms of planar diagrams.   The braid acts on two strings, and although it is represented in the plane it satisfies the three Reidermeister moves characteristic of a braid.  These relations allow one to lift the planar relations to three-dimensional ones.  We refer the readers to \S8 of \cite{JL} for the proof of the braided relations that we state in this section.

Define $\displaystyle \omega = \frac{1}{\sqrt{d}} \sum_{j=0}^{d-1} \zeta^{j^{2}}$.  Then $|\omega|=1$, as shown in Proposition 2.15 of \cite{JL}.
Let $\omega^{1/2}$ be a fixed square root of $\omega$, and define the positive braid $b_{+}$ as
\beq\label{first-pos-braid}
b_{+}
&=\raisebox{-.3cm}{
\tikz{
\fbraid{0}{2\nn}{2\mn}{0}
}}
&\ \equiv
\frac{1}{\sqrt{\omega d}}\,
 \sum_{k=0}^{d-1}
\raisebox{-.3cm}{
\tikz{
\draw (0\mn,0) --(0\mn,3\nn);
\draw (2\mn,0) --(2\mn,3\nn);
\node at (1\mn,1\nn) {$-k$};
\node at (3\mn,2\nn) {$k$};
}}\\
&=&
\frac{1}{\sqrt{\omega d}}\,
\sum_{k=0}^{d-1} \zeta^{k^{2}}\,
\raisebox{-.3cm}{
\tikz{
\draw (0\mn,0) --(0\mn,3\nn);
\draw (2\mn,0) --(2\mn,3\nn);
\node at (1\mn,1.5\nn) {$-k$};
\node at (3\mn,1.5\nn) {$k$};
}}\quad.\nonumber
\eeq
Here we give two different expressions for the braid.  The  second formula involves the twisted product given in \eqref{TwistedProduct}.

The braid is a unitary gate. Its adjoint $b_{+}^*$ equals the inverse braid, the negative braid $b_{+}^{-1}=b_{-}$. In diagrams,
\beq\label{first-neg-braid}
 b_{+}^{*} = b_{-}
&=&\raisebox{-.3cm}{
\tikz{
\fbraid{2\mn}{2\nn}{0\mn}{0}
}}
=
\frac{\sqrt\omega }{\sqrt{d}}\,
 \sum_{k=0}^{d-1}
\raisebox{-.3cm}{
\tikz{
\draw (0\mn,0) --(0\mn,3\nn);
\draw (2\mn,0) --(2\mn,3\nn);
\node at (1\mn,2\nn) {$-k$};
\node at (3\mn,1\nn) {$k$};
}}\\
&=&
\frac{\sqrt\omega }{\sqrt{d}}\,
\sum_{k=0}^{d-1} \zeta^{-k^{2}}\,
\raisebox{-.3cm}{
\tikz{
\draw (0\mn,0) --(0\mn,3\nn);
\draw (2\mn,0) --(2\mn,3\nn);
\node at (1\mn,1.5\nn) {$-k$};
\node at (3\mn,1.5\nn) {$k$};
}}\quad.\nonumber
\eeq

The positive and negative braids \eqref{first-pos-braid}--\eqref{first-neg-braid} also have the equivalent definitions:
\beq
\raisebox{-.3cm}{
\tikz{
\fbraid{2\mn}{0}{0\mn}{2\nn}
}}
&=&
\frac{\sqrt\omega }{\sqrt{d}}\,
\sum_{k=0}^{d-1} \zeta^{-k^{2}}\,
\raisebox{-.6cm}{
\tikz{
\fqudit {0}{0\nn}{1\mn}{.5\nn}{\phantom{ll}k}
\fmeasure {0}{3.5\nn}{1\mn}{.5\nn}{-k}
}}\;,
\eeq
\be\label{second-neg-braid}
\raisebox{-.3cm}{
\tikz{
\fbraid{0\mn}{0}{2\mn}{2\nn}
}}
=
\frac{1}{\sqrt{\omega d}}\,
\sum_{k=0}^{d-1} \zeta^{k^{2}}\,
\raisebox{-.6cm}{
\tikz{
\fqudit {0}{0\nn}{1\mn}{.5\nn}{\phantom{ll}k}
\fmeasure {0}{3.5\nn}{1\mn}{.5\nn}{-k}
}} \;.
\ee

\bigskip
These definitions lead to the following braided relations:
\subsubsection{Braid-Fourier relation}
\be
\raisebox{-.47cm}{
\tikz{
\draw (0,0)--(1/3,1/3) arc (135:-45:1.414/6) -- (1/3,-1/3);
\draw (0,-1/3)--(1/6,-1/6) arc (-45:15:1.414/6);
\draw (0,0) arc (-45:135-360:1.414/6) --(0,2/3);
\draw (1/3,2/3)--(1/6,1/2) arc (135:195:1.414/6);
}}
=
\raisebox{-.3cm}{
\tikz{
\draw (1.5,6.5) --(1.75,6.75);
\draw (1,6.75) --(1.75,6);
\draw (1,6) --(1.25,6.25);
}}\quad.
\ee
The SFT of the positive braid is the negative braid, as proved in Theorem 8.1 of \cite{JL} using \eqref{Equ:para isotopy}, \eqref{Equ:SF2}, \eqref{Equ:Resolution of the identity}, along with the identity
	\be
d^{-1/2}\sum_{k=0}^{d-1}q^{k\ell }\zeta^{k^{2}}=\omega\, \zeta^{-\ell^{2}}\;.
	\ee
Thus  drawing a braid at an arbitrary angle causes no confusion.
%This equation follows from \eqref{Equ:para isotopy}, \eqref{Equ:SF2}, \eqref{Equ:Resolution of the identity}, along with the identity
%	\be
%d^{-1/2}\sum_{k=0}^{d-1}q^{k\ell }\zeta^{k^{2}}=\omega\, \zeta^{-\ell^{2}}\;.
%	\ee

\subsubsection{The {particle-braid relation}}
\be
%\begin{figure}[h]
\raisebox{-.5cm}{
\tikz{
\fbraid{0}{0}{3\mn}{3\nn}
\node at (1.8\mn,.5\nn) {\size{$k$}};
}} =
\raisebox{-.5cm}{
\tikz{
\fbraid{0}{0}{3\mn}{3\nn}
\node at (.2\mn,2\nn) {\size{$k$}};
}}\quad.
%\caption{Particle-braid relation}
\label{Fig:Braid-qudit}
%\end{figure}
\ee
This relation demonstrates that any charged diagram can
pass freely under (but not over) the braid. See Theorem 8.2 of \cite{JL}.

%\item
\subsubsection{Reidemeister move I}
\be\label{Positive Twist}
\raisebox{-.47cm}{
\tikz{
\draw (0,0)--(2/3,2/3);
\draw (2/3,2/3) arc (135:-135: 1.414/6);
\draw (0,1)--(1/3,2/3);
}}
=\omega^{-1/2}
\raisebox{-.47cm}{
\tikz{
\draw (0,0)--(0,1);
}}\quad.
\ee
\be\label{Negative Twist}
%\hskip 1.3in
\raisebox{-.47cm}{
\tikz{
\draw (0,0)--(1/3,1/3);
\draw (2/3,2/3) arc (135:-135: 1.414/6);
\draw (0,1)--(2/3,1/3);
}}
=\omega^{1/2}
\raisebox{-.47cm}{
\tikz{
\draw (0,0)--(0,1);
}}\quad.
\ee

%\item
\subsubsection{Reidemeister move II}
\be
\raisebox{-.47cm}{
\tikz{
\draw (1/3,1/3)--(2/3,2/3);
\draw (2/3,2/3) arc (-45:45: 1.414/6);
\draw (2/3,1)--(1/3,1+1/3);
\draw (2/3+1/6,1/3)--(2/3,1/3+1/6);
\draw (2/3-1/6,2/3) arc (225:135: 1.414/6);
\draw (2/3+1/6,1+1/3)--(2/3,1+1/3-1/6);
}}
\ = \
\raisebox{-.47cm}{
\tikz{
\draw (0,0)--(0,1);
\draw (2/3,0)--(2/3,1);
}}\quad.
\ee
\iffalse
\raisebox{-.47cm}{
\tikz{
\draw (1/3,1/3)--(1/3+1/6,1/3+1/6);
\draw (2/3,2/3) arc (-45:45: 1.414/6);
\draw (2/3-1/6,1+1/6)--(1/3,1+1/3);
\draw (2/3+1/6,1/3)--(2/3-1/6,1/3+1/6+1/6);
\draw (2/3-1/6,2/3) arc (225:135: 1.414/6);
\draw (2/3+1/6,1+1/3)--(2/3-1/6,1+1/3-1/6-1/6);
}}
\raisebox{-.47cm}{
\tikz{
\draw (0,0)--(0,1);
\draw (2/3,0)--(2/3,1);
}}
\fi

%\item
\subsubsection{Reidemeister move III}
\be
\raisebox{-.47cm}{
\tikz{
\draw (0/3,0/3)--(0/3,1/3);
\draw (0/3,2/3)--(0/3,3/3);
\draw (2/3,1/3)--(2/3,2/3);
\fbraid{0/3}{1/3}{1/3}{2/3}
\fbraid{1/3}{0/3}{2/3}{1/3}
\fbraid{1/3}{2/3}{2/3}{3/3}
}}
=
\raisebox{-.47cm}{
\tikz{
\draw (2/3,0/3)--(2/3,1/3);
\draw (2/3,2/3)--(2/3,3/3);
\draw (0/3,1/3)--(0/3,2/3);
\fbraid{1/3}{1/3}{2/3}{2/3}
\fbraid{0/3}{0/3}{1/3}{1/3}
\fbraid{0/3}{2/3}{1/3}{3/3}
}}\quad.
\ee

%\item
%\end{itemize}

\subsection{Two string braids and local transformations} \label{Sect:Local Transformations}
From the 1-string braid constructed in \S \ref{Sect:Braid}, we obtain a positive and a negative 2-string braid,
\be
\raisebox{-.5cm}{\begin{tikzpicture}
\draw (1,0)--(0,1);
\draw (3/2,0)--(1/2,1);
\drawWL {}{0,0}{1,1};
\drawWL {}{1/2,0}{3/2,1};
\node at (3,1/2) {and};
\node at (6,1/2) {.};
\begin{scope}[shift={(4,1)},yscale=-1,xscale=1]
\draw (1,0)--(0,1);
\draw (3/2,0)--(1/2,1);
\drawWL {}{0,0}{1,1};
\drawWL {}{1/2,0}{3/2,1};
\end{scope}
\end{tikzpicture}}
\ee

Theoretically, there are $d$ different 2-string braids.
Their actions on 2-qudits are defined by $b_m \ket{k,l}=q^{mkl}\ket{l,k}$, $m\in \mathbb{Z}_d$.
Any neutral element can move over and under any two-string braid.
Our positive braid is $b_{-1}$ and our negative braid is $b_{1}$. Their interpolation $b_0$ is invariant under the 2-string rotation and adjoint operation, thus we represent this operator as
\be
\raisebox{-.5cm}{
\begin{tikzpicture}
\draw (1,0)--(0,1);
\draw (3/2,0)--(1/2,1);
\draw (0,0)--(1,1);
\draw (1/2,0)--(3/2,1);
\end{tikzpicture}}\;.
\ee
Since $b_0^2=I$, we call it a symmetry.
We use the symmetry $b_0$ to swap the order of the qudits.

Now we give a pictorial representation for local transformations.
Suppose Alice and Bob share a 5-qudit. The $1^{\rm st}$,  $3^{\rm rd}$, and $5^{\rm th}$ qudits belong to  Alice, and the $2^{\rm nd}$ and $4^{\rm th}$ qudits belong to Bob.
Now Bob wants to apply a transformation $T$ on his 2-qudits, namely a local transformation. Then we can represent the transformation pictorialally as follows,
\be \label{Equ:LocalTransformation}
\raisebox{-1.4cm}{\begin{tikzpicture}
\begin{scope}[yscale=1,xscale=-1]
\draw (0,0)--(0,3);
\draw (.5,0)--(.5,3);
\draw (1,0)--(1,1);
\draw (1,2)--(1,3);
\draw (1.5,0)--(1.5,1);
\draw (1.5,2)--(1.5,3);
\draw (2,0)--(3,1)--(3,2)--(2,3);
\draw (2.5,0)--(3.5,1)--(3.5,2)--(2.5,3);
\draw (3,0)--(2,1);
\draw (2,2)--(3,3);
\draw (3.5,0)--(2.5,1);
\draw (2.5,2)--(3.5,3);
\draw (4,0)--(4,3);
\draw (4.5,0)--(4.5,3);
\draw (0.8,1) rectangle (2.7,2);
\node at (1.75,1.5) {$T$};
\end{scope}
\end{tikzpicture}}\;.
\ee
One can generalize this representation for arbitrary cases, meaning any number of qudits and any partition.
Therefore we obtain a pictorial representation for local transformations using the symmetry $b_0$.
As examples of local transformations,  recall the Jordan-Wigner transformation for Pauli $X,Y,Z$ in terms of the symmetry $b_0$.

\begin{proposition}\label{Prop:JWasLocal}
When a charged string passes the symmetry $b_0$, we have the following relations: 
\be
\raisebox{-1.4cm}{
\begin{tikzpicture}
\begin{scope}[yscale=1,xscale=-1]
\begin{scope}[shift={(-2,0)}]
\draw (2,0)--(3,1)--(3,2)--(2,3);
\draw (2.5,0)--(3.5,1)--(3.5,2)--(2.5,3);
\end{scope}
\draw (2,0)--(3,1)--(3,2)--(2,3);
\draw (2.5,0)--(3.5,1)--(3.5,2)--(2.5,3);
\draw (3,0)--(0,1)--(0,2)--(3,3);
\draw (3.5,0)--(0.5,1)--(0.5,2)--(3.5,3);
\node at (.25,1.5) {$1$};
\node at (1.25,.25) {$\cdots$};
\node at (1.25,2.75) {$\cdots$};
\node at (2,1.5) {$\cdots$};
\end{scope}
\end{tikzpicture}
}
=
\raisebox{-.4cm}{
\tikz{
\draw (0,0)--(0,1);
\draw (-1\mn,0)--(-1\mn,1);
\node at (1/6,1/2) {1};
\draw (-2\mn,0)--(-2\mn,1);
\draw (-3\mn,0)--(-3\mn,1);
\draw (-6\mn,0)--(-6\mn,1);
\draw (-7\mn,0)--(-7\mn,1);
\node at (-4\mn,1/2) {$\cdots$};
\node at (-1.5\mn,1/2) {-1};
\node at (-2.5\mn,1/2) {1};
\node at (-5.5\mn,1/2) {-1};
\node at (-6.5\mn,1/2) {1};
}} \;,
\ee

\be
\raisebox{-1.4cm}{
\begin{tikzpicture}
\begin{scope}[yscale=1,xscale=-1]
\begin{scope}[shift={(-2,0)}]
\draw (2,0)--(3,1)--(3,2)--(2,3);
\draw (2.5,0)--(3.5,1)--(3.5,2)--(2.5,3);
\end{scope}
\draw (2,0)--(3,1)--(3,2)--(2,3);
\draw (2.5,0)--(3.5,1)--(3.5,2)--(2.5,3);
\draw (3,0)--(0,1)--(0,2)--(3,3);
\draw (3.5,0)--(0.5,1)--(0.5,2)--(3.5,3);
\node at (.75,1.5) {$-1$};
\node at (1.25,.25) {$\cdots$};
\node at (1.25,2.75) {$\cdots$};
\node at (2,1.5) {$\cdots$};
\end{scope}
\end{tikzpicture}
}
=
\raisebox{-.4cm}{
\tikz{
\draw (0,0)--(0,1);
\draw (-1\mn,0)--(-1\mn,1);
\node at (-1/6,1/2) {-1};
\draw (-2\mn,0)--(-2\mn,1);
\draw (-3\mn,0)--(-3\mn,1);
\draw (-6\mn,0)--(-6\mn,1);
\draw (-7\mn,0)--(-7\mn,1);
\node at (-4\mn,1/2) {$\cdots$};
\node at (-1.5\mn,1/2) {1};
\node at (-2.5\mn,1/2) {-1};
\node at (-5.5\mn,1/2) {1};
\node at (-6.5\mn,1/2) {-1};
}} \;,
\ee

\be
\raisebox{-1.4cm}{
\begin{tikzpicture}
\begin{scope}[yscale=1,xscale=-1]
\begin{scope}[shift={(-2,0)}]
\draw (2,0)--(3,1)--(3,2)--(2,3);
\draw (2.5,0)--(3.5,1)--(3.5,2)--(2.5,3);
\end{scope}
\draw (2,0)--(3,1)--(3,2)--(2,3);
\draw (2.5,0)--(3.5,1)--(3.5,2)--(2.5,3);
\draw (3,0)--(0,1)--(0,2)--(3,3);
\draw (3.5,0)--(0.5,1)--(0.5,2)--(3.5,3);
\node at (.75,1.5) {$1$};
\node at (.25,1.5) {$-1$};
\node at (1.25,.25) {$\cdots$};
\node at (1.25,2.75) {$\cdots$};
\node at (2,1.5) {$\cdots$};
\end{scope}
\end{tikzpicture}
}
=
\raisebox{-.4cm}{
\tikz{
\draw (0,0)--(0,1);
\draw (-1\mn,0)--(-1\mn,1);
\node at (1/6,1/2) {-1};
\node at (-1/6,1/2) {1};
\draw (-2\mn,0)--(-2\mn,1);
\draw (-3\mn,0)--(-3\mn,1);
\draw (-6\mn,0)--(-6\mn,1);
\draw (-7\mn,0)--(-7\mn,1);
\node at (-4\mn,1/2) {$\cdots$};
}} \;.
\ee

\end{proposition}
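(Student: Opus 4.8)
The plan is to observe that each right-hand side is nothing but the through-string (parafermion) picture of a single-qudit Pauli operator on the first qudit: by the Jordan--Wigner transformation established in the preceding proposition (cf.\ \eqref{Equ:J-W x}--\eqref{Equ:J-W z} and the equivalent reformulations displayed there) these pictures represent $X\otimes 1\otimes\cdots\otimes 1$, $Y\otimes 1\otimes\cdots\otimes 1$, and $Z\otimes 1\otimes\cdots\otimes 1$, respectively. So it suffices to show that the left-hand sides --- the local-transformation diagrams assembled from the symmetry $b_0$ --- represent these same operators. I would read each left-hand picture through the template \eqref{Equ:LocalTransformation}: a single qudit is transported to an isolated region by a cascade of symmetries $b_0$, the appropriate charged-string Pauli of \eqref{Pauli 21} is inserted there (a single charge $1$ for $X$, a single charge $-1$ for $Y$, the neutral pair $1,-1$ for $Z$), and the qudit is transported back by the reverse cascade.

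First I would use that $b_0$ is the plain swap, $b_0\ket{k,l}=\ket{l,k}$ with $b_0^2=I$. Hence the outgoing and returning cascades are mutually inverse permutations of the qudits, carrying no phase. Conjugating the isolated single-qudit Pauli by this pair of cascades returns the operator to its original tensor slot while leaving all other slots untouched, which gives exactly $X\otimes 1\otimes\cdots\otimes 1$, and likewise $Y\otimes 1\otimes\cdots\otimes 1$ and $Z\otimes 1\otimes\cdots\otimes 1$. This already identifies both sides with the same operator and proves the three identities.

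To expose the pictorial content, which is the real point of the statement, I would instead carry out the equality directly in diagrams by sliding the charged string down through the symmetry crossings. The tools are para isotopy \eqref{Equ:para isotopy} together with the behavior of a charge at a crossing: a neutral composite passes freely over and under $b_0$, but a lone charge does not, and as the charge string is dragged past each neutral through-string of a neighboring qudit it leaves behind a compensating charge on that string. Iterating this across all the crossings converts the single localized charge into the spread-out $1,-1,1,-1,\dots$ pattern of the right-hand side, which is precisely the Jordan--Wigner tail $Z^{\pm}\otimes\cdots\otimes Z^{\pm}$ furnished by Proposition~\ref{Jordan-Wigner Form}.

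The main obstacle is exactly this direct pictorial step: one must pin down how a single \emph{non-neutral} charge passes through the symmetric interpolation $b_0$, since the free-passage rule applies only to neutral elements and the particle-braid relation \eqref{Fig:Braid-qudit} is stated for the braid rather than for $b_0$. Verifying that each such crossing contributes exactly one unit of $Z^{\pm 1}$ charge --- with the correct sign distinguishing $X$ from $Y$, and with the neutral cancellation that leaves $Z$ localized --- is the one computation that needs care. Once it is in hand the remainder is bookkeeping, and the operator-level argument of the second paragraph serves as an independent check on the signs.
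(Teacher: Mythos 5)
Your proposal is correct and takes essentially the same route as the paper: the paper's one-line proof likewise rests on identifying both sides with $X\otimes 1\otimes\cdots\otimes 1$ (resp.\ $Y\otimes 1\otimes\cdots\otimes 1$, $Z\otimes 1\otimes\cdots\otimes 1$), citing the Jordan--Wigner relations \eqref{Equ:J-W x}--\eqref{Equ:J-W z} for the right-hand pictures while reading the left-hand pictures as the $b_0$-local-transformation template of \eqref{Equ:LocalTransformation}. Your explicit swap-conjugation argument (using $b_0\ket{k,l}=\ket{l,k}$, $b_0^2=I$, and that the charge is deposited at the far end where the Jordan--Wigner tail is empty) simply spells out what the paper leaves implicit.
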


\begin{proof}
We use the Jordan-Wigner transformation, given in Equations \eqref{Equ:J-W x},\eqref{Equ:J-W y}, and \eqref{Equ:J-W z}.
\end{proof}

\setcounter{equation}{0}
\section{The String Fourier Transform and Entanglement\label{Sect:SFT}}
In \cite{JL} we gave a general definition of the string Fourier transform $\FS$ on planar diagrams. 
Here we analyze the special case of the SFT acting on $n$-qudits in the product basis. The SFT cyclically permutes the output strings.  The picture for  $\FS |\vec k\rangle$, where  $|\vec k\rangle $ is the decreasing basis state of \eqref{DecreasingBasis}, is given by the following (up to the normalization factor $d^{-n/4}$):
\be\label{SFT-nqudits}
\scalebox{.8}{\raisebox{-1.1cm}{
\tikz{
\fqudit{0}{0}{8\mn}{0.01\nn}{}
\fqudit{2\nn}{0}{1\mn}{4\nn}{\phantom{ll}k_{1}}
\fqudit{5.5\nn}{0}{1\mn}{2.5\nn}{\phantom{ll}k_{2}}
\fqudit{12.5\nn}{0}{1\mn}{1\nn}{\phantom{ll}k_{n}}
\fmeasure{0}{0}{-1\nn}{0.05\nn}{}
\node at (10\nn,.5) {\dots};
}}}\;.
\ee

We show in Theorem 8.2  of \cite{JL} that charges can pass freely under strings.  Therefore we can bring the string running over the top of diagram \eqref{SFT-nqudits} over the charges $k_{1}, \ldots, k_{n}$ to yield the diagram for transformation $\FS$.  The resulting transformation that acts on $|\vec k\rangle$ has 
 $2n$ input strings and $2n$ output strings, and it has charge $0$.  
It acts  on $n$-qudits as we illustrate $\FS$ in Figure~\ref{Fig:String Fourier Transform-nq}.  
\begin{figure}[h]
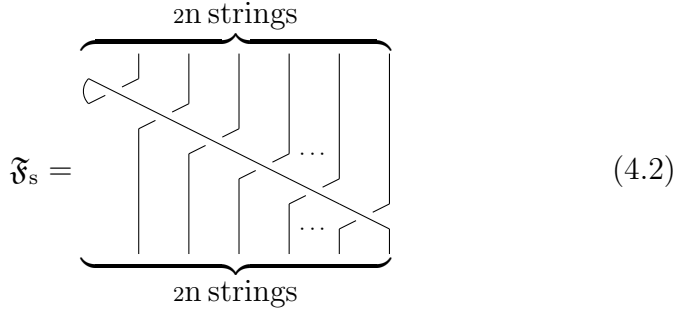

\be
\FS=\overbrace{\underbrace{\raisebox{-1cm}{\tikz{
\fbraid{4/3}{-2\nn}{2/3}{-1\nn}
\fbraid{2/3}{-1\nn}{0/3}{0\nn}
\node at (1/3,1\nn) {\tiny{$\cdots$}};
\node at (1/3,-2\nn) {\tiny{$\cdots$}};
\fbraid{-0/3}{0\nn}{-2/3}{1\nn}
\fbraid{-2/3}{1\nn}{-4/3}{2\nn}
\fbraid{-4/3}{2\nn}{-6/3}{3\nn}
\fbraid{-6/3}{3\nn}{-8/3}{4\nn}
\draw (4/3,-2\nn) -- (4/3,-3\nn);
\draw (4/3,-1\nn) -- (4/3,5\nn);
\draw (2/3,-2\nn) -- (2/3,-3\nn);
\draw (2/3,0\nn) -- (2/3,5\nn);
\draw (0/3,1\nn) -- (0/3,5\nn);
\draw (0/3,-1\nn) -- (0/3,-3\nn);
\draw (-2/3,2\nn) -- (-2/3,5\nn);
\draw (-2/3,0\nn) -- (-2/3,-3\nn);
\draw (-4/3,3\nn) -- (-4/3,5\nn);
\draw (-4/3,1\nn) -- (-4/3,-3\nn);
\draw (-6/3,4\nn) -- (-6/3,5\nn);
\draw (-6/3,2\nn) -- (-6/3,-3\nn);
\draw (-8/3,4\nn) to [bend right=45] (-8/3,3\nn);
}}}_{2\mbox{n} \; \mbox{strings}}}^{2\mbox{n} \; \mbox{strings}}
\ee
\caption{String Fourier transform on n-qudits.}\label{Fig:String Fourier Transform-nq}
\end{figure}

\iffalse
\begin{figure}[h]
\tikz{
\fbraid{4/3}{-2\nn}{2/3}{-1\nn}
\fbraid{2/3}{-1\nn}{0/3}{0\nn}
\fbraid{-0/3}{0\nn}{-2/3}{1\nn}
\fbraid{-2/3}{1\nn}{-4/3}{2\nn}
\fbraid{-4/3}{2\nn}{-6/3}{3\nn}
\fbraid{-6/3}{3\nn}{-8/3}{4\nn}
\draw (4/3,-2\nn) -- (4/3,-3\nn);
\draw (4/3,-1\nn) -- (4/3,5\nn);
\draw (2/3,-2\nn) -- (2/3,-3\nn);
\draw (2/3,0\nn) -- (2/3,5\nn);
\draw (0/3,1\nn) -- (0/3,5\nn);
\draw (0/3,-1\nn) -- (0/3,-3\nn);
\draw (-2/3,2\nn) -- (-2/3,5\nn);
\draw (-2/3,0\nn) -- (-2/3,-3\nn);
\draw (-4/3,3\nn) -- (-4/3,5\nn);
\draw (-4/3,1\nn) -- (-4/3,-3\nn);
\draw (-6/3,4\nn) -- (-6/3,5\nn);
\draw (-6/3,2\nn) -- (-6/3,-3\nn);
\draw (-8/3,4\nn) to [bend right=45] (-8/3,3\nn);
}
\caption{String Fourier transform on 3-qudits.}\label{Fig:String Fourier Transform-3q}
\end{figure}
\fi

The $n$-qudit diagram  for $\FS$ in Figure~\ref{Fig:String Fourier Transform-nq} can also be written algebraically.  Let $b_{j,j+1,-}$ denote the negative braid in \eqref{first-neg-braid}, acting on the $j^{\rm th}$ and $(j+1)^{\rm th}$ strings. Each such transformation is local. Therefore we obtain the representation of the string Fourier transformation $\FS$ on $n$-qudits  as the product of $2n-1$ local transformations, 
\be\label{AlgebraicSFT}
\FS=\frac{1}{\sqrt{\omega}}  b_{2n-1,2n,-} \,b_{2n-2,2n-1,-}
\cdots b_{1,2,-}
= \frac{1}{\sqrt\omega}\prod_{1\leqslant j \leqslant 2n-1}^{\leftarrow} b_{j,j+1,-}\;,
\ee
with the order in the product for increasing indices from right to left. The arrow over the product sign denotes the order right to left in the product.  The remaining phase arises from  one final braid, making $2n$ braids in all composing $\FS$ on $n$-qudits. The final braid  does not appear in this explicit product, as it has been capped on the left. We apply the first Reidermeister move~\eqref{Positive Twist} to that braid, and this one obtains~$\omega^{-1/2}$.  (The picture for this move is actually a rotation of  the picture \eqref{Positive Twist} by the angle~$\pi$.)  Thus the product \eqref{AlgebraicSFT} is exactly the transformation in Figure~\ref{Fig:String Fourier Transform-nq}.

We find our topological representation of the string Fourier transform $\FS$ conceptually simpler than~\eqref{AlgebraicSFT}.  We hope that $\FS$ can also be used to advantage in quantum computation, see the general  discussion in Appendix~\ref{Sect:TQC}. 

\subsection{Matrix elements of the SFT $\FS$ on  $n$-qudits\label{Sect:SFT-Matrix}}
The SFT acts as a very interesting unitary transformation $\FS$ on the Hilbert space of $n$-qudits, that has dimension $d^{n}$.  The diagram is given above, and we can take Figure \ref{Fig:String Fourier Transform-nq} as a definition in the $2$-qudit case, with a similar picture for $n$-qudits.
We calculate the matrix elements   $ \langle\vec{\ell}\,\vert\, \FS \vert\vec{k}\rangle$ of $\FS$ in the descending qudit basis $\vert\vec{k}\rangle$ of \eqref{DecreasingBasis}, 
and the corresponding dual  basis $\langle\vec{\ell}\vert$.

\begin{theorem}\label{Thm:SFT}
The matrix elements of $\FS$ in the decreasing product basis $|\vec{k}\,\rangle$ are given algebraically by 
\be\label{SFT-Matrix-1}
\langle\vec{\ell}\,\vert\, \FS \vert\vec{k}\rangle
= d^{\frac{1-n}{2}}\,
 \omega_{\vec \ell, \vec k}\
\delta_{|\vec \ell|,|\vec k|}\;,
\quad\text{where}\quad
\omega_{\vec \ell, \vec k}
= \zeta^{|\vec{\ell}|^2} \prod_{1\leq j_1<j_2 \leq n} q^{-\ell_{j_1}k_{j_2}}\;.
\ee
Thus
	\be\label{SFT-ProductQudit}
	\FS\vert\vec{k}\rangle
	= \frac{1}{d^{\frac{n-1}{2}}} \sum_{\vec \ell:|\vec \ell |= | \vec k | }
	\omega_{\vec \ell, \vec k} \,\vert \vec{\ell} \rangle\;.
	\ee
Pictorially, the matrix elements of $d^{n/2}\FS$ are
\be\label{SFT-Matrix-0}
\raisebox{-2.3cm}{\tikz{
\fbraid{4/3}{-2\nn}{2/3}{-1\nn}
\fbraid{2/3}{-1\nn}{0/3}{0\nn}
\node at (1/3,1.5\nn) {\size{$\cdots$}};
\node at (1.1/3,-2.5\nn) {\size{$\cdots$}};
\fbraid{-0/3}{0\nn}{-2/3}{1\nn}
\fbraid{-2/3}{1\nn}{-4/3}{2\nn}
\fbraid{-4/3}{2\nn}{-6/3}{3\nn}
\fbraid{-6/3}{3\nn}{-8/3}{4\nn}
\draw (4/3,-2\nn) -- (4/3,-3\nn);
\draw (4/3,-1\nn) -- (4/3,5\nn);
\draw (2/3,-2\nn) -- (2/3,-3\nn);
\draw (2/3,0\nn) -- (2/3,5\nn);
\draw (0/3,1\nn) -- (0/3,5\nn);
\draw (0/3,-1\nn) -- (0/3,-3\nn);
\draw (-2/3,2\nn) -- (-2/3,5\nn);
\draw (-2/3,0\nn) -- (-2/3,-3\nn);
\draw (-4/3,3\nn) -- (-4/3,5\nn);
\draw (-4/3,1\nn) -- (-4/3,-3\nn);
\draw (-6/3,4\nn) -- (-6/3,5\nn);
\draw (-6/3,2\nn) -- (-6/3,-3\nn);
\draw (-8/3,4\nn) to [bend right=45] (-8/3,3\nn);
\fqudit {6\mn}{5\nn}{1\mn}{2\nn}{\phantom{ll}k_{1}}
\fqudit {2\mn}{5\nn}{1\mn}{1\nn}{\phantom{ll}k_{2}}
\fqudit {-2\mn}{5\nn}{1\mn}{0\nn}{\phantom{ll}k_{n}}
\fmeasure {6\mn}{-3\nn}{1\mn}{2\nn}{-\ell_{1}}
\fmeasure {2\mn}{-3\nn}{1\mn}{1\nn}{-\ell_{2}}
\fmeasure {-2\mn}{-3\nn}{1\mn}{0\nn}{-\ell_{n}}
}}
= \omega_{\vec \ell, \vec k}
\raisebox{-.5cm}{
\tikz{
\fqudit {0\mn}{0\nn}{1\mn}{1\nn}{\phantom{k}|\vec k|}
\fmeasure {0\mn}{0\nn}{1\mn}{1\nn}{-|\vec \ell |}
}
}\;.
\ee
\end{theorem}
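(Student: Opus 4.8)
The plan is to evaluate the closed diagram on the left of \eqref{SFT-Matrix-0} and show that it collapses to $\omega_{\vec\ell,\vec k}$ times the single charged loop on the right. I would take as given the braided form of $d^{n/2}\FS$ from Figure~\ref{Fig:String Fourier Transform-nq} (equivalently the local product \eqref{AlgebraicSFT}), glue the nested caps of the decreasing basis \eqref{DecreasingBasis} carrying $k_1,\dots,k_n$ on top, and the nested cups carrying $-\ell_1,\dots,-\ell_n$ on the bottom. Reading off the coefficient of each $|\vec\ell\rangle$ then gives \eqref{SFT-ProductQudit}, so the whole theorem reduces to evaluating this one picture.

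First I would push every charge out of the braided window. By the particle-braid relation \eqref{Fig:Braid-qudit} (Theorem 8.2 of \cite{JL}) a charge slides freely under a braid along its own strand, so each $k_j$ can be dragged down and each $-\ell_j$ up at no cost until they collect in a single charge-free band. What is left inside is a pure, uncharged braid, namely the cyclic shift of the $2n$ strands, closed off by the caps and cups; I would reduce it using the Reidemeister moves, the Temperley-Lieb relation, and the quantum-dimension relation \eqref{QuantumDimension}. The cyclic rotation together with the nesting contracts the connectivity to a single loop, and combining its value $\sqrt d$ from \eqref{QuantumDimension} with the $d^{-n/4}$ normalizations of $|\vec k\rangle$ and $\langle\vec\ell|$ and the stated $d^{n/2}$ fixes the scalar $d^{(1-n)/2}$ in \eqref{SFT-Matrix-1}.

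The phase $\omega_{\vec\ell,\vec k}$ is assembled from two accounting steps, and this is where I expect the real work to lie. As the collected charges are merged onto the surviving loop they must be brought to a common height and placed on the correct strand; each time a bottom charge $\ell_{j_1}$ is carried past a top charge $k_{j_2}$ I would invoke para isotopy \eqref{Equ:para isotopy} and pick up a twist $q^{-\ell_{j_1}k_{j_2}}$, and the triangular crossing pattern dictated by the cyclic shift should reproduce exactly $\prod_{1\le j_1<j_2\le n} q^{-\ell_{j_1}k_{j_2}}$. Separately, transferring charges from the left to the right strand of the caps and cups, while absorbing the $\zeta^{\pm k^2}$ weights of the braids, is governed by the string-Fourier relations \eqref{Equ:SF1}--\eqref{Equ:SF2}, which I expect to collect into the single factor $\zeta^{|\vec\ell|^2}$. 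Getting these exponents and signs exactly right, rather than off by a twist, an overall sign, or a spurious diagonal $j_1=j_2$ term, is the main obstacle, and I would pin the pattern down on the $n=2$ picture of Figure~\ref{Fig:String Fourier Transform-nq} before asserting the general case.

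Finally, after the charges are combined by addition of charge \eqref{AddCharge}, the surviving loop carries total charge $|\vec k|-|\vec\ell|$; since $\FS$ has charge $0$, neutrality \eqref{Neutrality} annihilates the diagram unless $|\vec k|\equiv|\vec\ell|$, which produces the factor $\delta_{|\vec\ell|,|\vec k|}$. Assembling the prefactor, the two phase families, and this Kronecker delta yields \eqref{SFT-Matrix-1}, completing the computation.
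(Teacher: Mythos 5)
Your strategy is the same as the paper's: close the braided picture for $d^{n/2}\FS$ with the charged caps and cups, read off $\delta_{|\vec\ell|,|\vec k|}$ from neutrality of the single resulting loop, get $d^{(1-n)/2}$ from the quantum dimension \eqref{QuantumDimension} together with the $d^{-n/4}$ normalizations, and produce the phase by moving charges with para isotopy and the string Fourier relations; your delta-function and normalization bookkeeping are correct. The genuine gap is in the phase accounting, which you yourself call ``the real work'' and then leave as an expectation, and the split you propose is in fact wrong as stated. The string Fourier relations \eqref{Equ:SF1}--\eqref{Equ:SF2} only ever produce factors $\zeta^{\pm\ell_j^2}$, i.e.\ squares of \emph{individual} charges, so no amount of cup/cap traversals can yield $\zeta^{|\vec\ell|^2}=\zeta^{\sum_j\ell_j^2}\,q^{\sum_{j_1<j_2}\ell_{j_1}\ell_{j_2}}$: the cross terms $q^{\ell_{j_1}\ell_{j_2}}$ can only arise from para-isotopy passings of the bottom charges past \emph{one another}, a category of crossings that appears nowhere in your accounting (you count only $\ell$-past-$k$ passings). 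The paper's proof makes this explicit by fixing the order of moves $-\ell_n,k_n,-\ell_{n-1},k_{n-1},\ldots,-\ell_1$ toward the position of $k_1$: moving $-\ell_n$ yields $\zeta^{\ell_n^2}$ from its cup \emph{and} $q^{\ell_n(\ell_1+\cdots+\ell_{n-1})}$ from passing the other $-\ell_j$, while its passings past the $k_j$ cancel in up/down pairs; the factors $q^{-\ell_{j_1}k_{j_2}}$ then come from moving the $k_j$'s, whose cap/cup phases cancel. Executed as you describe, your scheme returns $\zeta^{\sum_j\ell_j^2}\prod_{j_1<j_2}q^{-\ell_{j_1}k_{j_2}}$, which differs from $\omega_{\vec\ell,\vec k}$ by exactly the omitted $\ell$-$\ell$ cross terms.

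A secondary problem is the ``at no cost'' collection step. By \eqref{Fig:Braid-qudit} charges pass freely \emph{under} but not over the braid, and in the cyclic-shift picture the wrap-around strand is the over-strand at every crossing, so a charge sitting on that strand cannot be slid through the braided window along its own strand; it must first be carried around a cup or cap turn, which is precisely an \eqref{Equ:SF2}-type move costing $\zeta^{\pm\ell^2}$. Hence the ``collect charges for free, contract the uncharged braid, then merge'' separation cannot be made as cleanly as you assume: the contraction and the phase generation are entangled, which is why the paper works directly with the single closed loop and a definite ordering of charge moves. The method is sound (it is the paper's method), but since the entire content of the theorem beyond the delta and the prefactor is the phase $\omega_{\vec\ell,\vec k}$, a proof must fix an order of moves and identify which passings cancel and which survive; deferring that to a later check of the $n=2$ picture leaves the theorem unproved, and the accounting you do commit to would give the wrong answer.
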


\begin{proof}
We give the picture proof by para isotopy, starting from the picture for the matrix elements of Figure \ref{Fig:String Fourier Transform-nq} given by  the left side of \eqref{SFT-Matrix-0}.  Notice that there is only one continuous loop, so the diagram vanishes unless the total charge is zero.  This corresponds to the condition $|\vec k|=|\vec \ell|$, which is the same as that given by the delta function $\delta_{|\vec \ell|,|\vec k|}$.  So we need only  compute the phase of the diagram.

We  move the charges in the order $-\ell_{n}, k_{n}, -\ell_{n-1},k_{n-1}, \ldots, k_{2},-\ell_{1}$ to the left to the position of $k_{1}$.  Consider first moving  $-\ell_{n}$ to the left under the cup, producing a phase $\zeta^{\ell_{n}^{2}}$.  All further passage of $-\ell_{n}$ to  the position of $k_{1}$  by moving over caps is compensated by movement under an equal number of cups.  So that aspect of the motion produces no further phase.  Furthermore moving $-\ell_{n}$ up past each $k_{j}$ to its right produces a phase that is compensated by moving $-\ell_{n}$ down past the same $k_{j}$ still on its right. So moving $-\ell_{n}$ past the $k_{j}$'s produces no net phase.  However passing each $-\ell_{j}$ produces the net phase $q^{\ell_{n}(\sum_{1\leqslant j\leqslant n-1} \ell_{j})} $.  This completes the movement of $-\ell_{n}$.

Next move $k_{n}$ to the left to the position of  $k_{1}$.  The phases moving past caps and cups cancel. The same is true for the phases from passing other $k_{j}$'s.   We only obtain a net phase $q^{-k_{n}(\sum_{1\leqslant j \leqslant n-1}\ell_{n})}$.

Continue by moving the remaining charges to the position of $k_{1}$, in the order specified above.  We obtain the overall phase $\omega_{\vec \ell, \vec k}$. 
\end{proof}

\begin{corollary}
The matrix elements of $\FS^{-1}=\FS^{*}$ is given by
\be\label{SFT-Matrix-2}
\langle \vec \ell\, \vert \FS^{*} \vert \vec k \rangle
= d^{\frac{1-n}{2}}\,\overline{\omega}_{\vec \ell, \vec k}\
\delta_{|\vec \ell|,|\vec k|}\;.
\ee
\end{corollary}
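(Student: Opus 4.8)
The plan is to read \eqref{SFT-Matrix-2} off \eqref{SFT-Matrix-1} using only that $\FS$ is unitary, rather than rerunning the diagrammatic computation from scratch. Since $\FS$ is unitary we have $\FS^{-1}=\FS^{*}$, and the matrix elements of the adjoint are the conjugate transpose of those of $\FS$, i.e. $\langle\vec\ell\,|\,\FS^{*}\,|\vec k\rangle=\overline{\langle\vec k\,|\,\FS\,|\vec\ell\rangle}$. So the first step is to substitute the closed form \eqref{SFT-Matrix-1} for $\langle\vec k\,|\,\FS\,|\vec\ell\rangle$ and take the complex conjugate. The prefactor $d^{(1-n)/2}$ is a positive real and the Kronecker delta is real and symmetric, $\delta_{|\vec k|,|\vec\ell|}=\delta_{|\vec\ell|,|\vec k|}$, so only the phase factor $\omega$ is affected by conjugation.

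Next I would simplify the conjugated phase. Because $\zeta$ and $q$ are roots of unity, $\overline{\zeta}=\zeta^{-1}$ and $\overline{q}=q^{-1}$, so conjugation sends each exponent in $\omega$ to its negative; this is precisely the operation recorded by the overline in \eqref{SFT-Matrix-2}. Here $\overline{\omega}_{\vec\ell,\vec k}$ is understood as the coefficient of $\FS^{*}$ in the $(\vec\ell,\vec k)$ slot, that is $\overline{\omega}_{\vec\ell,\vec k}=\overline{\omega_{\vec k,\vec\ell}}$, consistently with the conjugate transpose. Collecting the pieces yields $\langle\vec\ell\,|\,\FS^{*}\,|\vec k\rangle=d^{(1-n)/2}\,\overline{\omega}_{\vec\ell,\vec k}\,\delta_{|\vec\ell|,|\vec k|}$, which is the assertion.

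The one point that needs care — and which I expect to be the main bookkeeping obstacle — is matching the index order in the product $\prod_{j_1<j_2}$. Conjugating \eqref{SFT-Matrix-1} literally produces a $q$-phase built from $\sum_{j_1<j_2}k_{j_1}\ell_{j_2}$, while the naive reading of $\overline{\omega}_{\vec\ell,\vec k}$ would suggest $\sum_{j_1<j_2}\ell_{j_1}k_{j_2}$; the two ordered sums differ, so one must be explicit that the overline already incorporates the transpose of indices that the adjoint carries out. I would therefore fix the convention before simplifying, and when comparing phases use the charge-conservation constraint imposed by $\delta_{|\vec\ell|,|\vec k|}$, namely $|\vec k|=|\vec\ell|$, together with the bilinear identity $\sum_{j_1<j_2}k_{j_1}\ell_{j_2}+\sum_{j_1<j_2}\ell_{j_1}k_{j_2}=|\vec k|\,|\vec\ell|-\sum_{j}k_{j}\ell_{j}$, carrying out the reduction on the support of the delta and checking the exponents agree modulo $d$.

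As an independent cross-check I would verify \eqref{SFT-Matrix-2} pictorially. Since the adjoint is vertical reflection, the diagram for $\FS^{*}$ is the reflection of Figure \ref{Fig:String Fourier Transform-nq}, which turns each negative braid into a positive braid, equivalently replacing $\zeta^{k^{2}}$ by $\zeta^{-k^{2}}$ in \eqref{first-pos-braid}--\eqref{first-neg-braid}. Running the same para-isotopy argument used in the proof of Theorem \ref{Thm:SFT}, now with the conjugated braids, should reproduce $\overline{\omega}_{\vec\ell,\vec k}$ directly and pin down all sign and phase conventions, confirming the algebraic derivation.
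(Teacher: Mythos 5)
Your proposal is correct and coincides with the paper's (implicit) argument: the corollary is stated with no separate proof precisely because it is the conjugate transpose of Theorem \ref{Thm:SFT}, i.e.\ $\langle\vec\ell\,|\FS^{*}|\vec k\rangle=\overline{\langle\vec k\,|\FS|\vec\ell\,\rangle}$ together with \eqref{SFT-Matrix-1}, which is exactly your first step. Your further point of care is well taken and is the only substantive issue: since $\omega_{\vec\ell,\vec k}\neq\omega_{\vec k,\vec\ell}$ in general (even on the support of $\delta_{|\vec\ell|,|\vec k|}$, e.g.\ $\vec k=(1,0)$, $\vec\ell=(0,1)$), the symbol $\overline{\omega}_{\vec\ell,\vec k}$ in \eqref{SFT-Matrix-2} must indeed be read as $\overline{\omega_{\vec k,\vec\ell}}$, the conjugate-transpose entry, exactly as you fix the convention.
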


Note that $\FS^{2n}$ is a $2\pi$ rotation, so 
\be
  \FS^{2n}\,|\vec{k}\,\rangle=q^{\abs{k}^2}\,\vert\vec{k}\rangle\;.
\ee

\subsection{The neutral $n$-qudit resource state $|{\rm Max}\rangle$}
We begin by defining an $n$-qudit resource state $\Max_{n}$.  Topologically, this state is the SFT of the product of neutral caps.  
\begin{definition}
Let  
\be
\Max_{n}=\FS | \vec{0} \rangle_{n}\;,
\quad\text{where}\quad |\vec{0}\rangle_{n}=|\underbrace{0,0,\ldots,0}_{n \text{ entries}}\rangle\;.
\ee
The pictorial representation {\rm Max}$_{n}$ of $\Max_{n}$ is  $d^{-\frac{n}{4}}$ times the diagram in Figure~\ref{Pic: Resource state}.
As earlier in this paper, we often omit the subscript $n$.
\end{definition}

\begin{figure}[h]
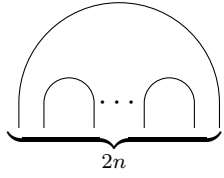

\[
\underbrace{
\raisebox{-1.5cm}{
\tikz{
\fqudit{0\mn}{0\nn}{1\mn}{1\nn}{}{}
\fqudit{-4\mn}{0\nn}{1\mn}{1\nn}{}{}
\node at (-3\mn,1\nn) {$\cdots$};
\fqudit{1\mn}{0\nn}{4\mn}{1\nn}{}{}
}}}_{2n}
\]
\caption{Pictorial representation of the multipartite entangled state $\ket{\text{Max}}$, ignoring normalization. There are $2n$ output points at the bottom.}
\label{Pic: Resource state}
\end{figure}

The standard $n$-qudit orthonormal basis $ \vert\,\vec k\,\rangle_{n}$ is characterized by a set of $n$ charges $\vec k =(k_{1},\ldots,k_{n})$, with values $k_{j}\in\Z_{d}$.  Let $\vert\vec k\vert=k_{1}+\cdots+k_{n}$ denote the total charge. 
\begin{proposition}\label{Prop:Max}
The multipartite resource state $\Max_{n}$ is  
\be \label{Maximally Entangled State}
	\Max_{n}=\frac{1}{d^{\frac{n-1}{2}}}\sum_{\vert \vec \ell\vert=0} \vert \vec \ell\,\rangle_{n}\;.
	\ee
\end{proposition}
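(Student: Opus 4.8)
The statement is an immediate specialization of Theorem \ref{Thm:SFT}, so the plan is to apply the closed form \eqref{SFT-ProductQudit} with the all-zero input $\vec k = \vec 0$ and then check that every surviving phase collapses to $1$. By definition $\Max_{n} = \FS\,|\vec 0\rangle_{n}$, so no separate diagrammatic work is strictly needed once the matrix-element formula is in hand.

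First I would set $\vec k = \vec 0$ in \eqref{SFT-ProductQudit}. Since $|\vec 0| = 0$, the charge-conservation constraint $|\vec\ell| = |\vec k|$ restricts the sum to neutral multi-indices $|\vec\ell| = 0$, which already produces exactly the index set appearing on the right-hand side of \eqref{Maximally Entangled State}. It then remains only to evaluate the coefficient $\omega_{\vec\ell,\vec 0}$ on this set.

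Next I would show $\omega_{\vec\ell,\vec 0} = 1$. Using $\omega_{\vec\ell,\vec k} = \zeta^{|\vec\ell|^2}\prod_{1\leq j_1<j_2\leq n}q^{-\ell_{j_1}k_{j_2}}$ from \eqref{SFT-Matrix-1}, two simplifications occur simultaneously: because every component $k_{j_2}$ vanishes, each factor $q^{-\ell_{j_1}k_{j_2}}$ equals $q^{0}=1$, so the product is trivial; and because the delta function has already forced $|\vec\ell| = 0$ on every surviving term, the Gaussian prefactor is $\zeta^{|\vec\ell|^2} = \zeta^{0} = 1$. Hence the coefficient is identically $1$, the normalization $d^{-(n-1)/2}$ carries over unchanged, and \eqref{Maximally Entangled State} follows.

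I do not anticipate any genuine obstacle, since the proposition is a corollary of the already-established matrix-element formula; the only point to be careful about is that \emph{both} mechanisms — the vanishing of $\vec k$ and the enforced neutrality of $\vec\ell$ — are needed to kill the phase, and one should not conflate them. As a sanity check, and as an independent route bypassing Theorem \ref{Thm:SFT}, I would also verify the claim pictorially: pairing the nested caps of Figure \ref{Pic: Resource state} against a dual basis vector $\langle\vec\ell|$ and expanding each cap via the resolution of identity \eqref{Equ:Resolution of the identity}, the diagram closes into a single loop whose nonvanishing (by neutrality \eqref{Neutrality}) forces $|\vec\ell| = 0$ and whose value (by the quantum-dimension relation \eqref{QuantumDimension}) reproduces the uniform coefficient, consistent with the normalization $d^{-n/4}$ in the definition.
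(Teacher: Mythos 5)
Your proof is correct and follows essentially the same route as the paper: the paper's proof also consists of specializing \eqref{SFT-ProductQudit} to $\vec k=\vec 0$ and observing that the phase $\omega_{\vec\ell,\vec k}$ reduces to $1$ on the surviving neutral terms. If anything, your version is slightly more careful than the paper's one-line justification, since you correctly note that killing the product $\prod q^{-\ell_{j_1}k_{j_2}}$ requires every component of $\vec k$ to vanish (not merely $|\vec k|=0$), while the Gaussian prefactor dies by the enforced neutrality of $\vec\ell$.
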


\begin{proof}
Use  \eqref{SFT-ProductQudit}, and note that  if  $|\vec \ell |=|\vec k|=0$, then  $\omega_{\vec \ell,\vec k}=1$.
\end{proof}

\begin{remark}
The coefficients of $\vert \vec k\,\rangle_{n}$ in the sum in \eqref{Maximally Entangled State} are all positive, because
we have chosen in \eqref{DecreasingBasis} the decreasing basis $|\vec k\rangle^{\searrow}$ for our qudits. 
\end{remark}

\begin{corollary}
For $n=2$, the state $\Max_{2}$ is the generalization of the Bell state as a neutral $2$-qudit, expressed as a sum of neutral pairs,
\be\label{ResourceStateMax}
\FS\ket{0,0}=
\Max_{2}
=\frac{1}{\sqrt{d}}\sum_{\ell=0}^{d-1} \ket{\ell, -\ell}\;.
\ee
Correspondingly, one has a family of 2-qudit generalized Bell states
\be\label{OtherBellStates}
\FS \ket{k,-k} = \frac{1}{\sqrt{d}}\sum_{\ell=0}^{d-1} 
	q^{k\ell} \ket{\ell,-\ell}\;,
	\quad\text{for}\quad
	k=0, 1, \ldots, d-1\;,
\ee
arising as the SFT (and also the Fourier transform) of the neutral 2-qudit product states.
\end{corollary}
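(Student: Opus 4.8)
The plan is to read off both identities as the $n=2$ specialization of Theorem \ref{Thm:SFT}, since $\ket{0,0}$ and $\ket{k,-k}$ are precisely neutral decreasing-basis states, to which the matrix-element formula \eqref{SFT-ProductQudit} applies directly. First I would record that for $n=2$ the normalization is $d^{(n-1)/2}=\sqrt{d}$, and that the double product in the phase $\omega_{\vec\ell,\vec k}$ of \eqref{SFT-Matrix-1} collapses to the single cross term $j_1=1,\ j_2=2$, so that
\[
\omega_{\vec\ell,\vec k}=\zeta^{|\vec\ell|^2}\,q^{-\ell_1 k_2}\;,
\quad\text{with}\quad
|\vec\ell|=\ell_1+\ell_2,\ \ |\vec k|=k_1+k_2\;.
\]

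For the first identity, the state $\Max_{2}=\FS\ket{0,0}$ is by definition the $n=2$ case of $\Max_{n}$, so it suffices to invoke Proposition \ref{Prop:Max}: since $|\vec 0|=0$, the sum in \eqref{Maximally Entangled State} runs over $\vec\ell$ with $|\vec\ell|=0$, i.e.\ $\vec\ell=(\ell,-\ell)$, and each coefficient equals $1$. Equivalently, setting $\vec k=(0,0)$ in the displayed phase forces $|\vec\ell|=0$ and gives $\omega_{\vec\ell,\vec k}=\zeta^{0}q^{-\ell_1\cdot 0}=1$, so that $\FS\ket{0,0}=\tfrac{1}{\sqrt d}\sum_{\ell=0}^{d-1}\ket{\ell,-\ell}$, as claimed.

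For the family \eqref{OtherBellStates}, I would substitute $\vec k=(k,-k)$ into \eqref{SFT-ProductQudit}. Then $|\vec k|=0$, so again only the neutral terms $\vec\ell=(\ell,-\ell)$ with $|\vec\ell|=0$ survive, and the phase is $\omega_{\vec\ell,\vec k}=\zeta^{0}q^{-\ell_1 k_2}=q^{-\ell(-k)}=q^{k\ell}$, using $k_2=-k$ and $\ell_1=\ell$. This yields $\FS\ket{k,-k}=\tfrac{1}{\sqrt d}\sum_{\ell=0}^{d-1}q^{k\ell}\ket{\ell,-\ell}$ for each $k$. No genuine obstacle arises here; the only points demanding care are the bookkeeping of the single surviving cross term and confirming $\zeta^{|\vec\ell|^2}=1$ on the neutral sector, both of which are immediate once the double product of \eqref{SFT-Matrix-1} is reduced for $n=2$.

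Finally, to justify the parenthetical identification with the ordinary Fourier transform, I would use the isometry $\ket{k,-k}\leftrightarrow\ket{k}$ between the neutral $2$-qudit sector and the $1$-qudit space, under which \eqref{OtherBellStates} reads $\tfrac{1}{\sqrt d}\sum_{\ell}q^{k\ell}\ket{\ell}$. Comparing this term by term with the definition \eqref{F-and-G} of $F$ shows that $\FS$ restricted to the neutral sector coincides with $F$; this comparison is purely a matching of coefficient families and uses only \eqref{F-and-G}, so it presents no difficulty beyond the identification itself.
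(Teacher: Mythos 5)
Your proposal is correct and follows essentially the same route as the paper: the corollary is an immediate specialization of Theorem \ref{Thm:SFT} (equation \eqref{SFT-ProductQudit}) to $n=2$, with the first identity also being the $n=2$ case of Proposition \ref{Prop:Max}, exactly as you invoke it. Your bookkeeping of the normalization $d^{(n-1)/2}=\sqrt{d}$, the collapse of the phase to $q^{-\ell_1 k_2}=q^{k\ell}$ on the neutral sector, and the identification with $F$ via \eqref{F-and-G} all match the paper's intended argument.
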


\subsection{The resource states $\Max_{n}$ and $\GHZ_{n}$\label{Sect:GHZ-Max}}
Greenberger, Horne, and Zeilinger \cite{GHZ} introduced a  multipartite resource state known as $\GHZ$ for $n$-qubit entanglement.  The corresponding $n$-qudit resource state is 
	\be\label{GHZResource}
	\GHZ_{n} =  \frac{1}{d^{\frac{1}{2}}}\sum_{k=0}^{d-1}
	\ket{k,k,\ldots,k}\;.
	\ee
Here we establish the  relation stated between $\GHZ_{n}$ and $\Max_{n}$. 

\begin{theorem}
The states $\GHZ_{n}$ and $\Max_{n}$ are Fourier duals of each other:
	\be\label{GHZ-Max}
	\GHZ_{n}
	= (F\otimes \cdots\otimes F)^{\pm1} \Max_{n}\;.
	\ee
\end{theorem}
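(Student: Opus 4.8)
The plan is to prove \eqref{GHZ-Max} by a direct computation in the standard product basis, using the explicit form of $\Max_n$ from Proposition~\ref{Prop:Max} together with the definition of the Fourier matrix $F$ in \eqref{F-and-G}. First I would substitute $\Max_n = d^{-(n-1)/2}\sum_{|\vec\ell|=0}|\vec\ell\,\rangle_n$ and apply $(F^{\pm1})^{\otimes n}$ termwise. Since $F^{\pm1}\ket{\ell}=d^{-1/2}\sum_m q^{\pm \ell m}\ket{m}$, the tensor power gives $(F^{\pm1})^{\otimes n}\ket{\vec\ell}=d^{-n/2}\sum_{\vec m} q^{\pm\sum_j \ell_j m_j}\ket{\vec m}$, so that $(F^{\pm1})^{\otimes n}\Max_n = d^{-(n-1)/2-n/2}\sum_{\vec m}\Big(\sum_{|\vec\ell|=0} q^{\pm\sum_j \ell_j m_j}\Big)\ket{\vec m}$.

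The key step is to evaluate the inner sum over the total-charge-zero sublattice. I would rewrite the constraint $|\vec\ell|=0$ as the character average $d^{-1}\sum_{t=0}^{d-1} q^{t|\vec\ell|}$, which equals $1$ when $d\mid|\vec\ell|$ and $0$ otherwise. Interchanging the order of summation then decouples the $\ell_j$, yielding $\sum_{|\vec\ell|=0} q^{\pm\sum_j \ell_j m_j} = d^{-1}\sum_t \prod_{j=1}^n\big(\sum_{\ell_j} q^{\ell_j(t\pm m_j)}\big) = d^{-1}\sum_t \prod_j d\,\delta_{t,\mp m_j}$. This vanishes unless all the $m_j$ coincide, in which case exactly one value of $t$ survives and the sum equals $d^{n-1}$. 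Collecting the powers of $d$, namely $d^{-(n-1)/2-n/2+(n-1)}=d^{-1/2}$, leaves $d^{-1/2}\sum_{k=0}^{d-1}\ket{k,k,\ldots,k}=\GHZ_n$, as claimed.

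I expect the only subtle point to be the treatment of the sign, that is, why both $F$ and $F^{-1}$ give the same result. This emerges automatically: the factor $\delta_{t,\mp m_j}$ forces all $m_j$ equal regardless of the sign, and the surviving contribution $d^{n-1}$ does not depend on it, so the computation produces $\GHZ_n$ for either choice. Conceptually this reflects that $\Max_n$ is a real state supported entirely on the total-charge-zero subspace, and the $\pm$ Fourier transform converts that single linear constraint into the diagonal support of $\GHZ_n$ symmetrically. The remaining work is purely the bookkeeping of scalar prefactors, which is routine.
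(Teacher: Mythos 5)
Your proposal is correct and follows essentially the same route as the paper's proof: both start from the explicit expansion $\Max_{n}=d^{-(n-1)/2}\sum_{|\vec\ell\,|=0}|\vec\ell\,\rangle$ of Proposition \ref{Prop:Max}, apply $F^{\pm1}$ factor by factor, and collapse the resulting constrained character sum via the orthogonality relation $\sum_{\ell}q^{\ell m}=d\,\delta_{m,0}$, arriving at the diagonal support and the normalization $d^{-1/2}$ of $\GHZ_{n}$. The only divergence is mechanical: the paper eliminates the charge constraint by substituting $k_{n}=-k_{1}-\cdots-k_{n-1}$ and summing the $n-1$ free variables, whereas you enforce it with the character average $d^{-1}\sum_{t}q^{t|\vec\ell\,|}$ and decouple all $n$ variables---a variant whose minor advantage is that the independence of the $\pm$ sign appears automatically rather than as a closing remark.
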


\begin{proof}

We use the representation \eqref{Maximally Entangled State} for $\Max$. Then
\beqs
(F\otimes \cdots \otimes F) \Max_{n}
&=& \frac{1}{d^{\frac{n-1}{2} + \frac{n}{2}}}
\sum_{{ |\vec k|=0\,,\,\vec \ell}
}
q^{\vec k \cdot \vec \ell} \,\vec{\ket {\ell}}_{n}\;.
\eeqs
Insert the restriction $k_{n}=-k_{1}-\cdots -k_{n-1}$, and carry out the $(n-1)$ sums over $k_{1}, \ldots, k_{n-1}$ for fixed $\vec \ell$.  These sums vanish unless $\ell_{j}=\ell_{n}$, for each $j=1,\ldots, n-1$.  In case all the equalities hold, there are $d^{n-1}$ equal, non-zero terms.  Thus the answer is as claimed in \eqref{GHZ-Max}, namely
\be
(F\otimes \cdots \otimes F) \Max_{n}
= \frac{1}{d^{\frac{1}{2}}}
\sum_{\ell}
\ket {\ell,\ldots,\ell}_{n}=\GHZ_{n}\;.
\ee
Note that the same result arises with $F^{-1}$ in place of $F$.
\end{proof}

While the states $\Max_{n}$ and $\GHZ_{n}$ are the same state up to unitary equivalence,  there are two reasons that we prefer the state $\Max$ as the resource state for $n$-qudits:
\begin{itemize}
\item{}  The state $\Max_{n}$ has a \textit{topological interpretation}.

\item{} The state $\Max_{n}$ is \textit{charge neutral}.
\end{itemize}
\begin{remark}
Recently we have found another picture language, the three-dimensional quon language \cite{QuonLanguage}, in which $\GHZ_{n}$ and $\Max_{n}$ have both these properties.
\end{remark}

\subsection{The resource state bases $|{\rm Max}_{\vec k}\rangle$ and $|{\rm GHZ}_{\vec k} \rangle$}
We obtain a resource state basis  $|{\rm Max}_{\vec k}\rangle$ by letting the SFT act on the $n$-qudit product states $|\vec {k}\rangle$ in the basis \eqref{DecreasingBasis}.  We obtain a closed form for the resulting states using Theorem \ref{Thm:SFT}.  We obtain the following:

\begin{proposition}[\bf Entangled Resource States]
Consider the  $n$-qudit product basis elements $|\vec k\,\rangle$ defined in \eqref{DecreasingBasis}.  The SFT transforms these vectors to the generalized ${\rm Max}_{\vec k}$ basis-set,
	\be\label{SFTn-product}
		 |{\rm Max}_{\vec{k}}\rangle
		 = \FS \,| \vec k\,\rangle
		= \zeta^{-|\vec{k}|^{2}}
		 \frac{1}{d^{\frac{n-1}{2}}}
		 \sum_{\{\vec \ell: |\vec \ell |=|\vec k |\}}
		q^{k_{1}\ell_{1} + (k_{1}+k_{2})\ell_{2} +\cdots + (k_{1}+k_{2}+\cdots+k_{n})\ell_{n}}\, 
		|\vec \ell\,\rangle\;.
	\ee
Likewise the corresponding  ${\rm GHZ}_{\vec k}$ resource states are the inverse Fourier transforms of the ${\rm Max}_{\vec k}$ basis,
	\beq\label{GHZn-product}
	|{\rm GHZ}_{\vec k} \rangle
	&=& (F\otimes \cdots \otimes F)^{-1} |{\rm Max}_{\vec{k}}\rangle\\
	&=& \zeta^{-|\vec {k}|^{2}}\frac{1}{d^{\frac{1}{2}}}
	\sum_{s\in \mathbb{Z}_{d}}
	q^{-s|\vec{k}|}
	|k_{1}+s, k_{1}+k_{2}+s, \ldots, |\vec {k}|  +s   \rangle\;.\nonumber
	\eeq

\end{proposition}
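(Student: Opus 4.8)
The plan is to derive both formulas directly from Theorem~\ref{Thm:SFT}. For the first formula \eqref{SFTn-product}, I would begin with the matrix-element expression \eqref{SFT-ProductQudit}, namely $\FS\vert\vec k\rangle = d^{-(n-1)/2}\sum_{\vec\ell:\,|\vec\ell|=|\vec k|}\omega_{\vec\ell,\vec k}\,|\vec\ell\rangle$, and simply rewrite the phase $\omega_{\vec\ell,\vec k}=\zeta^{|\vec\ell|^2}\prod_{j_1<j_2}q^{-\ell_{j_1}k_{j_2}}$ into the telescoping form stated in the proposition. The key observation is that on the support of the sum the constraint $|\vec\ell|=|\vec k|$ holds, so I may replace $\zeta^{|\vec\ell|^2}$ by $\zeta^{|\vec k|^2}$. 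The product exponent is $-\sum_i \ell_i\big(\sum_{j>i}k_j\big)$; using $\sum_{j>i}k_j = |\vec k|-\sum_{j\le i}k_j$ together with $\sum_i\ell_i=|\vec k|$ converts this into $|\vec k|^2 - \sum_i\ell_i\big(\sum_{j\le i}k_j\big)$. Finally, absorbing the factor $q^{-|\vec k|^2}=\zeta^{-2|\vec k|^2}$ into $\zeta^{|\vec k|^2}$ produces the overall prefactor $\zeta^{-|\vec k|^2}$ and the desired partial-sum exponent $k_1\ell_1+(k_1+k_2)\ell_2+\cdots+(k_1+\cdots+k_n)\ell_n$. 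This step is purely bookkeeping once one recognizes that the diagonal constraint is precisely what makes the two exponent conventions agree.

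For the second formula \eqref{GHZn-product}, I would apply $(F\otimes\cdots\otimes F)^{-1}$ to the closed form just obtained. Writing $a_i=k_1+\cdots+k_i$ and using the inverse Fourier relation $F^{-1}\ket{m}=d^{-1/2}\sum_{\ell}q^{-m\ell}\ket{\ell}$ implied by \eqref{F-and-G}, I would expand $(F^{-1})^{\otimes n}|{\rm Max}_{\vec k}\rangle$ and interchange the order of summation so that the inner sum runs over $\vec\ell$ with $|\vec\ell|=|\vec k|$ against a fixed output $\vec m$, with accumulated phase $q^{\sum_i\ell_i(a_i-m_i)}$. Parametrizing that constrained sum by the free coordinates $\ell_1,\ldots,\ell_{n-1}$ (with $\ell_n=|\vec k|-\sum_{i<n}\ell_i$) factorizes it into $n-1$ independent character sums, each equal to $d$ times a Kronecker delta. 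The resulting $n-1$ delta constraints force $m_i=a_i+s$ for every $i$, where $s=m_n-|\vec k|$ is the single surviving free parameter, and the leftover phase collapses to $q^{-s|\vec k|}$.

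The last step is to assemble the normalization: the factor $d^{n-1}$ from the delta functions combines with the prefactors $d^{-(n-1)/2}$ and $d^{-n/2}$ to give exactly $d^{-1/2}$, yielding $|{\rm GHZ}_{\vec k}\rangle = \zeta^{-|\vec k|^2}d^{-1/2}\sum_{s\in\Z_d}q^{-s|\vec k|}\,|k_1+s,\ldots,|\vec k|+s\rangle$ as claimed. I expect the main obstacle to be the careful handling of the constraint $|\vec\ell|=|\vec k|$ inside the inverse-Fourier sum: one must correctly isolate the dependent coordinate $\ell_n$, track the residual phase $q^{|\vec k|(|\vec k|-m_n)}$ that it contributes, and verify that the $n-1$ delta functions collapse consistently to the single parameter $s$ rather than over-constraining the output. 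Everything else is routine character-sum manipulation.
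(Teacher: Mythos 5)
Your proposal is correct, and its overall route is the same as the paper's: both identities are derived from Theorem \ref{Thm:SFT}, the first by rewriting the exponent of $\omega_{\vec\ell,\vec k}$ on the support $|\vec\ell|=|\vec k|$, the second by applying $(F^{-1})^{\otimes n}$ and collapsing character sums.

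Two remarks. First, there is a sign slip in your first paragraph: using $\sum_{j>i}k_j=|\vec k|-\sum_{j\le i}k_j$ and $\sum_i\ell_i=|\vec k|$, the exponent $-\sum_i \ell_i\bigl(\sum_{j>i}k_j\bigr)$ converts to $-|\vec k|^2+\sum_i\ell_i(k_1+\cdots+k_i)$, not $|\vec k|^2-\sum_i\ell_i(k_1+\cdots+k_i)$. Your final assembly (absorbing $q^{-|\vec k|^2}=\zeta^{-2|\vec k|^2}$ into $\zeta^{|\vec k|^2}$ to obtain the prefactor $\zeta^{-|\vec k|^2}$ and the exponent $k_1\ell_1+(k_1+k_2)\ell_2+\cdots+|\vec k|\ell_n$) silently uses the correct signs, so nothing downstream is affected, but the intermediate sentence should be corrected. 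Second, your handling of the constrained sum in the GHZ computation differs mildly from the paper's: the paper first replaces $\delta_{|\vec\ell|,|\vec k|}$ by the character sum $\frac{1}{d}\sum_s q^{s(|\vec\ell|-|\vec k|)}$, so that all $n$ coordinates of $\vec\ell$ are summed freely and produce $n$ Kronecker deltas forcing $w_i=k_1+\cdots+k_i+s$, with $s$ entering as an auxiliary variable; you instead keep the constraint, eliminate $\ell_n=|\vec k|-\sum_{i<n}\ell_i$, and obtain $n-1$ deltas, with the free parameter emerging as $s=m_n-|\vec k|$ and residual phase $q^{|\vec k|(|\vec k|-m_n)}=q^{-s|\vec k|}$. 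These are equivalent bookkeeping for the same computation: yours avoids introducing the delta representation at the cost of breaking the symmetry among coordinates, while the paper's keeps all $n$ coordinates on the same footing at the cost of one extra summation variable. Both yield the normalization $d^{-1/2}$ and the stated formula \eqref{GHZn-product}.
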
 
\begin{proof}
For \eqref{SFTn-product}, note that 
	\beq
	\FS |\vec k\rangle 
	&=& \sum_{\vec \ell} \langle \vec \ell|\FS |\vec k\rangle \ |\vec \ell\rangle\nonumber\\
	&=& d^{\frac{1-n}{2}}\,\sum_{\vec\ell}
\delta_{|\vec \ell|,|\vec k|}
	\zeta^{|\vec{\ell}|^2} q^{{-\ell_{1}(k_{2}+\cdots+k_{n})}-\cdots-\ell_{n-1}k_{n}}\ |\vec \ell\rangle\nonumber\\
	&=& \zeta^{|\vec{k}|^2}d^{\frac{1-n}{2}}\,\sum_{\vec\ell}
\delta_{|\vec \ell|,|\vec k|}
	 q^{\ell_{1}k_{1}-\ell_{1}|\vec k|+\cdot-\cdots+\ell_{n-1}(k_{1}+\cdots+k_{n-1})-\ell_{n-1}|\vec k|   +\ell_{n} |\vec k| - \ell_{n}|\vec k|}\ |\vec \ell\rangle\nonumber\\
	&=& \zeta^{|\vec{k}|^2}d^{\frac{1-n}{2}}\,\sum_{\vec\ell}
\delta_{|\vec \ell|,|\vec k|}
	 q^{k_{1}\ell_{1}+(k_{1}+k_{2})\ell_{2}+\cdots+ |\vec k|\ell_{n}}
	 q^{-|\vec \ell| \,|\vec k|}
	 \ |\vec \ell\rangle\nonumber\\
	&=& \zeta^{-|\vec{k}|^2}d^{\frac{1-n}{2}}\,\sum_{\vec\ell}
\delta_{|\vec \ell|,|\vec k|}
	 q^{k_{1}\ell_{1}+(k_{1}+k_{2})\ell_{2}+\cdots+ |\vec k|\ell_{n}}
	 \ |\vec \ell\rangle\nonumber\\
	 &=&
	 \zeta^{-|\vec{k}|^{2}}
		 \frac{1}{d^{\frac{n-1}{2}}}
		 \sum_{\{\vec \ell: |\vec \ell |=|\vec k |\}}
		q^{k_{1}\ell_{1} + (k_{1}+k_{2})\ell_{2} +\cdots + (k_{1}+k_{2}+\cdots+k_{n})\ell_{n}}\, 
		|\vec \ell\,\rangle\;.\nonumber
	\eeq
In order to establish \eqref{GHZn-product} use \eqref{SFTn-product} to write
\beqs
	\FS |\vec k\rangle 
	&=& \zeta^{-|\vec{k}|^2}\frac{1}{d^{\frac{n+1}{2}}}\,\sum_{\vec\ell,s}
	q^{s(|\vec \ell|-|\vec k|)}
	 q^{k_{1}\ell_{1}+(k_{1}+k_{2})\ell_{2}+\cdots+ |\vec k|\ell_{n}}
	 \ |\vec \ell\rangle\nonumber\;,
\eeqs
so 	
\beqs
	(F\otimes \cdots\otimes F)^{-1}\FS |\vec k\rangle 
	&=& \zeta^{-|\vec{k}|^2}\frac{1}{d^{\frac{2n+1}{2}}}\,\sum_{\vec w, \vec\ell,s}
	q^{-\ell_{1}w_{1}-\cdots-\ell_{n}w_{n}}
	q^{s(|\vec \ell|-|\vec k|)}
	 q^{k_{1}\ell_{1}+(k_{1}+k_{2})\ell_{2}+\cdots+ |\vec k|\ell_{n}}
	 \ |\vec w\rangle\nonumber\;.
\eeqs
Perform the $\vec \ell\in \mathbb{Z}_{d}^{n}$ sums, which give zero unless $w_{1}=k_{1}+s$, $w_{2}=k_{1}+k_{2}+s$, $\ldots,$ and $w_{n}=|\vec k|+s$. Thus
\beqs
	(F\otimes \cdots\otimes F)^{-1}\FS |\vec k\rangle 
	&=& \zeta^{-|\vec{k}|^2}\frac{1}{d^{\frac{1}{2}}}\,\sum_{s\in\mathbb{Z}_{d}}
	q^{-k|\vec k|}
	 \ | k_{1}+s, k_{1}+k_{2}+s, \ldots, |\vec k|+s\rangle\nonumber\;.
\eeqs
This is \eqref{GHZn-product} as claimed.
\end{proof}

\subsection{The red dashed line}\label{Sect:DottedLine}
We give the corresponding pictorial representation of the  $n=2$ resource
state $\Max_{2}$  in Figure~\ref{Fig:Resource state}.
\begin{center}
\begin{figure}[h]
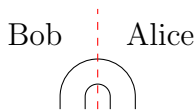

\[
\raisebox{-.5cm}{
\tikz{
\fdoublequdit{1\mn}{0}{.5\mn}{.5\nn}{}{}
\draw [red,dashed] (-.5\mn,0\nn)-- (-.5\mn,4\nn);
\node at (-3\mn,3\nn) {Alice};
\node at (2\mn,3\nn) {Bob};
}}
\]
\caption{Pictorial resource state: Only the strings in the resource state are allowed to pass the red (dashed) line between Alice and Bob. \label{Fig:Resource state}}
\end{figure}
\end{center}
In a multipartite communication protocol, 
people have different physical  locations.
We indicate these locations pictorially as the regions separated by red (dashed) lines.
Only strings of the resource state are allowed to pass these red lines in order to connect different regions.
The intersection of the strings and the red lines represents the entanglement of the resource state.  (The red dashed line is only for explanation, not a part of the protocol.)

We use $\Max$ as a resource to connect diagrams belonging to multiple persons in a quantum network. 

\begin{definition}
We say that a protocol costs one $n$-edit, when it uses one $n$-qudit $\Max$ as a resource. 
\end{definition}

\noindent{\bf Notation:}
When the multipartite entangled state $\Max$ occurs in protocols, we indicate the corresponding  $n$-qudit resource by the picture in Figure~\ref{Fig:Protocol Max}.
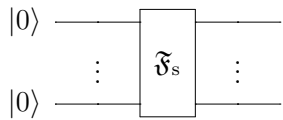
\begin{figure}[h]
\[
\scalebox{0.91}{
\Qcircuit @C=1.5em @R=2em {
\lstick{|0 \rangle} & \qw &\multigate{1}{\FS} & \qw & \qw \\
\lstick{|0 \rangle} & \ustick{\raisebox{.1cm}{\vdots}} \qw & \ghost{\FS} & \ustick{ \raisebox{.1cm}{\vdots}} \qw & \qw
}}
\]
\caption{Notation for $\ket{\text{Max}}$ in protocol, the multipartite entangled resource state. There are $n$ input and $n$ output lines.}\label{Fig:Protocol Max}
\end{figure}

\subsection{Maximal Entanglement and the SFT\label{Sect:Entropy}}
There are several possible ways to define the entanglement entropy for multi-qudits.  We give one particular definition for an $n$-qudit density matrix $\rho$, namely a positive matrix with trace one. For a vector state, denote $\rho_{\vert \vec{k}\rangle}=\vert \vec{k} \rangle\,\langle \vec {k} \vert$. Let $S$ denote an element of $\{1,2,\ldots,n\}$ and $S'$ denote its complement.  Define the entanglement entropy for the set $S$ as
	\be
	\E_{S}(\rho)
	\equiv\E(\Tr_{S'} (\rho))\;,
	\ee
where $\E(\rho)$ denotes the von Neumann entropy $-\Tr_{N^{\perp}}(\rho\ln\rho)$, where $N$ is the null space of $\rho$, and  $\Tr_{S'}$ denotes the partial trace on $S'$.  A pure state has entropy zero. Here we consider charge-zero pure product states          $|\vec{k}\rangle$ of the form \eqref{DecreasingBasis}, and the entropy of their SFTs $\FS|\vec{k}\rangle$ given in \eqref{SFTn-product}.  The following result,  independent of the choice of  $S$ and of $\vert \vec{k}\rangle$, shows that the SFT maps neutral product states to states with maximal entanglement entropy. This is the origin of the name Max. 

We consider the decreasing basis states $|\vec{k}\rangle$ defined in \eqref{DecreasingBasis}, and their SFTs $\FS |\vec k\rangle$, given pictorially up to normaliztion in \eqref{Max-kn},  and given algebraically in~\eqref{SFTn-product}.
\begin{theorem}\label{Thm:MaxEntangle}
For each single element $S$ and each basis  $n$-qudit, charge-zero  product state $\vert \vec{k} \rangle$, the SFT produces a state $\FS\vert \vec{k} \rangle$ with  entanglement entropy
	\be
		\E_{S}(\rho_{\FS\vert \vec{k}\rangle})
		=  \ln d\;.
	\ee
The constant $\ln d$ is the maximal entropy for states. 
\end{theorem}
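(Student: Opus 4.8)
The plan is to compute the reduced density matrix $\rho_S=\Tr_{S'}\big(\rho_{\FS\vert\vec k\rangle}\big)$ on the single retained qudit $S=\{s\}$ explicitly, and to show that it equals the maximally mixed state $\tfrac1d I_d$, whose von Neumann entropy is $\ln d$. I would start from the closed form \eqref{SFTn-product}. Since $\vert\vec k\rangle$ is a charge-zero state we have $\abs{\vec k}=0$, so the prefactor $\zeta^{-\abs{\vec k}^2}=1$ and
\[
\FS\vert\vec k\rangle
=\frac{1}{d^{(n-1)/2}}\sum_{\{\vec\ell\,:\,\abs{\vec\ell}=0\}} q^{\,\phi(\vec\ell)}\,\vert\vec\ell\,\rangle\;,
\]
where $\phi(\vec\ell)=\sum_{j=1}^{n}(k_1+\cdots+k_j)\,\ell_j$. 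Every coefficient has the same modulus $d^{-(n-1)/2}$, and the constraint $\abs{\vec\ell}=0$ has exactly $d^{\,n-1}$ solutions in $\Z_d^{\,n}$, which reconfirms that the vector is normalized.

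The key step is to observe that charge neutrality forces $\rho_S$ to be \emph{diagonal}. Writing a matrix element $(\rho_S)_{a,b}$ as a sum over the coordinates $\{\ell_j\}_{j\neq s}$ of the traced-out qudits, a nonzero contribution requires two basis vectors that agree in every coordinate except position $s$ --- one carrying charge $a$ there and the other charge $b$ --- both of which must satisfy $\abs{\vec\ell}=0$ in order to appear in the sum. Since the off-$s$ coordinates coincide, both totals being zero forces $a=b=-\sum_{j\neq s}\ell_j$. Hence all off-diagonal entries of $\rho_S$ vanish.

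It remains to evaluate the diagonal entries. For fixed $a\in\Z_d$, the term $(\rho_S)_{a,a}$ is a sum of $\abs{\text{coefficient}}^2=d^{-(n-1)}$ over all completions $\{\ell_j\}_{j\neq s}$ satisfying $\sum_{j\neq s}\ell_j=-a$; the phases $q^{\phi}$ cancel between the bra and the ket and play no role. The number of such completions is $d^{\,n-2}$, namely a free choice in $\Z_d$ for each of the $n-1$ remaining coordinates subject to one linear constraint. Therefore $(\rho_S)_{a,a}=d^{\,n-2}/d^{\,n-1}=1/d$ for every $a$, so $\rho_S=\tfrac1d I_d$. Its von Neumann entropy is $-\Tr(\rho_S\ln\rho_S)=\ln d$, and since $\ln d$ is the largest entropy attainable by any density matrix on a $d$-dimensional space (realized only by the maximally mixed state), the asserted maximality follows.

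The only point that genuinely needs care is the diagonality argument: it is the charge-conservation delta function $\delta_{\abs{\vec\ell},\abs{\vec k}}$ in \eqref{SFTn-product} that annihilates every coherence once a single qudit is kept, after which only a counting argument remains. This is also precisely where the hypothesis that $S$ is a single element enters, since it is exactly the one-qudit marginal whose dimension is $d$ and whose uniform diagonal yields the clean value $\ln d$.
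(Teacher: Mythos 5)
Your proof is correct and follows essentially the same route as the paper: both arguments reduce to showing that the single-qudit marginal $\Tr_{S'}(\rho_{\FS\vert\vec k\rangle})$ equals the maximally mixed state $\tfrac1d I_d$, whose entropy $\ln d$ is maximal. The paper compresses this into the single assertion ``By Theorem \ref{Thm:SFT}, $\Tr_{S'}(\rho_{\FS\vert \vec{k}\rangle})=\frac{1}{d}$,'' and your diagonality-via-charge-conservation and counting argument is exactly the computation that assertion leaves implicit.
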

\begin{proof}
The operator $\Tr_{S'} (\rho)$ has trace equal to one in $M_d(\mathbb{C})$, for any element $S$ of $\{1,2,\ldots,n\}$. By the concavity of the entropy, $\E(\Tr_{S'} (\rho))$ achieves its maximum $\ln d$ at $\Tr_{S'} (\rho)=\frac{1}{d}$. By Theorem \ref{Thm:SFT}, $\Tr_{S'}(\rho_{\FS\vert \vec{k}\rangle})=\frac{1}{d}$, so $\E_{S}(\rho_{\FS\vert \vec{k}\rangle})= \ln d$.
\end{proof}

\begin{corollary}[\bf Neutral Maximally Entangled Basis]
 Each vector in the neutral Max basis set~$\FS |\vec k\rangle$, including  the state $\Max=\FS | \vec{0} \rangle$, 
achieves the maximal entanglement entropy $\ln d$. 
\end{corollary}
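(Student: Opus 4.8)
The plan is to reduce the claim to the single fact that tracing out every qudit except the one indexed by $S$ leaves the maximally mixed state $\tfrac1d I_d$ on $M_d(\mathbb{C})$. Once this is established, the von Neumann entropy of $\tfrac1d I_d$ equals $\ln d$, and since $\tfrac1d I_d$ is the unique maximizer of $-\Tr(\rho\ln\rho)$ among trace-one states on a $d$-dimensional space (by concavity together with the trace constraint, as quoted in the statement), this is simultaneously the maximal value. So the entire content sits in the reduced-density-matrix computation.

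To carry this out I would use the closed form \eqref{SFTn-product} from Theorem \ref{Thm:SFT}, writing $\FS\vert\vec k\rangle=\sum_{\vec\ell}c_{\vec\ell}\,\vert\vec\ell\rangle$ with $c_{\vec\ell}=0$ unless $\vert\vec\ell\vert=\vert\vec k\vert$ and, crucially, with constant modulus $\vert c_{\vec\ell}\vert=d^{-(n-1)/2}$ on the charge shell $\vert\vec\ell\vert=\vert\vec k\vert$; only the phase of $c_{\vec\ell}$ depends on $\vec\ell$. Fixing the singleton $S=\{j\}$ and forming $\rho_{\FS\vert\vec k\rangle}=\sum_{\vec\ell,\vec m}c_{\vec\ell}\overline{c_{\vec m}}\,\vert\vec\ell\rangle\langle\vec m\vert$, the partial trace over the complementary indices $S'$ sets $\ell_i=m_i$ for every $i\neq j$ and sums over them, so the $(a,b)$ entry of $\Tr_{S'}(\rho_{\FS\vert\vec k\rangle})$ collects exactly the tuples with $\ell_j=a$, $m_j=b$, and $\ell_i=m_i$ otherwise.

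Two elementary observations then finish it. First, a nonvanishing term needs $\vert\vec\ell\vert=\vert\vec m\vert=\vert\vec k\vert$, whereas $\ell_i=m_i$ for $i\neq j$ forces $\vert\vec\ell\vert-\vert\vec m\vert=a-b$; hence $a=b$ and the reduced matrix is diagonal. Second, the $a$-th diagonal entry is $\sum_{\vec\ell:\,\ell_j=a,\,\vert\vec\ell\vert=\vert\vec k\vert}\vert c_{\vec\ell}\vert^2$, and there are precisely $d^{n-2}$ such tuples (the remaining $n-1$ charges range over $\Z_d$ subject to the single total-charge constraint), each contributing $d^{-(n-1)}$; the product $d^{n-2}\cdot d^{-(n-1)}=\tfrac1d$ is independent of $a$. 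Thus $\Tr_{S'}(\rho_{\FS\vert\vec k\rangle})=\tfrac1d I_d$ for every choice of $\vec k$ and of the index $S$, and the entropy is $\ln d$.

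I do not anticipate a real obstacle; the one point deserving care is that the $\vec\ell$-dependent phases in \eqref{SFTn-product} are irrelevant, which is precisely why the constant-modulus property of the coefficients on the charge shell is the heart of the matter. As an alternative I would compute $\Tr_{S'}(\rho_{\FS\vert\vec k\rangle})$ pictorially, taking the diagram for $\FS\vert\vec k\rangle$ in \eqref{Max-kn} together with its vertical reflection, capping the string pairs belonging to $S'$, and evaluating the resulting closed loops with \eqref{QuantumDimension} and \eqref{Neutrality}; this bypasses the coefficient bookkeeping at the cost of tracking the $d^{-n/4}$ normalizations carefully.
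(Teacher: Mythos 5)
Your proposal is correct and takes essentially the same route as the paper: the corollary is an immediate consequence of Theorem~\ref{Thm:MaxEntangle}, whose proof likewise reduces everything to the claim $\Tr_{S'}(\rho_{\FS\vert\vec k\rangle})=\tfrac1d$ (cited there as following from Theorem~\ref{Thm:SFT}) and then invokes concavity of the von Neumann entropy to identify $\ln d$ as the maximum. Your explicit partial-trace computation—constant modulus $d^{-(n-1)/2}$ of the coefficients on the charge shell, off-diagonal entries killed by the charge constraint, and the count $d^{n-2}\cdot d^{-(n-1)}=\tfrac1d$—simply fills in the step the paper leaves to the reader.
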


\subsection{The Clifford group}
Here we discuss the relation between the string Fourier transform $\FS$ acting on $n$-qudits and the Clifford group, with some additional comments in Appendix~\ref{Sect:TQC}.

It is known that the $1$-qudit transformations $X,Y,Z,F,G$ with  $C_{Z}$ generate the $n$-qudit Clifford group; see Theorem 7 of \cite{Farinholt}.
When $n=1$, we infer from \eqref{Equ:SF1}, \eqref{Fig:Braid-qudit}, and \eqref{Positive Twist}, that
\be
\omega^{1/2}
\raisebox{-.3cm}{
\tikz{
\fbraid{0\mn}{0}{2\mn}{2\nn}
}}
=\FS=G.
\ee
We conclude that $\FS$ on 1-qudits is in the Clifford group. 

\iffalse
\beq
\FS&=&G \\
\raisebox{-.5cm}{
\tikz{
\fbraid{0\mn}{0}{2\mn}{2\nn}
\fqudit{2\mn}{2\nn}{1\mn}{0\nn}{k}
}}
&=&\omega^{-1/2}\zeta^{k^2}
\raisebox{-.5cm}{
\tikz{
\fqudit{2\mn}{2\nn}{1\mn}{2\nn}{k}
}}
\eeq

This transformation is the unitary that implements the negative braid on the cap, up to the phase $\omega^{-1/2}$, see \eqref{first-neg-braid}.  The inverse $\FS^{-1}=\omega^{1/2}G^{-1}$  implements the positive braid, see \eqref{first-pos-braid}.
\fi

In the $n=2$ case,  $\FS$ is a $d^{2}\times d^{2}$ matrix. This matrix is block-diagonal, as it preserves the $d$ different $2$-qudit subspaces of fixed total charge, each of dimension $d$.

\begin{theorem}\label{Thm:FSClifford1}
For the $\FS$ on $2$-qudits, we have that
\be
C_Z=(GF^{-1}\otimes FG^{-1})\FS (1\otimes F^{-1}G^{-1}).
\ee
Thus $X,Y,Z,F,G$ and $\FS$ generate the Clifford group.
\end{theorem}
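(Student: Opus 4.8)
The plan is to prove the exact operator identity by computing the matrix elements of both sides in the product basis $\vert\vec k\rangle$ of \eqref{DecreasingBasis} and matching them against those of $C_Z$ from Lemma~\ref{Lem:CZ}, namely $\langle m_1,m_2\vert C_Z\vert k_1,k_2\rangle = q^{k_1k_2}\,\delta_{m_1,k_1}\delta_{m_2,k_2}$. For the right-hand side I would feed in the closed form for $\FS$ on $2$-qudits from Theorem~\ref{Thm:SFT}, $\langle \ell_1,\ell_2\vert \FS\vert r,s\rangle = d^{-1/2}\zeta^{(\ell_1+\ell_2)^2}q^{-\ell_1 s}\,\delta_{\ell_1+\ell_2,\,r+s}$, together with the entries of $F,G$ and their inverses read off from \eqref{F-and-G}: $\langle a\vert F^{\pm1}\vert b\rangle = d^{-1/2}q^{\pm ab}$ and $\langle a\vert G^{\pm1}\vert b\rangle=\zeta^{\pm a^2}\delta_{a,b}$. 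Composing these expresses $\langle m_1,m_2\vert (GF^{-1}\otimes FG^{-1})\FS(1\otimes F^{-1}G^{-1})\vert k_1,k_2\rangle$ as a sum over the internal indices of a product of phase factors $\zeta^{(\cdots)}q^{(\cdots)}$.

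First I would carry out the purely linear summations: the orthogonality relation $\sum_{k}q^{k\ell}=d\,\delta_{\ell,0}$ eliminates one internal index and pins the second output charge, reflecting that $C_Z$ acts trivially on that factor. The surviving summation is a genuine quadratic Gauss sum of the form $\sum_{p}\zeta^{\pm p^2}q^{p\ell}$, which I would evaluate by completing the square and invoking the identity $d^{-1/2}\sum_{k}q^{k\ell}\zeta^{k^2}=\omega\,\zeta^{-\ell^2}$ (and its complex conjugate) from the braid–Fourier subsection. Collecting the resulting quadratic and bilinear exponents, the composite should reduce to a diagonal operator whose diagonal entry I would read off and verify equals $q^{k_1k_2}$, matching $C_Z$. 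The delicate point—and the step I expect to be the main obstacle—is precisely this phase bookkeeping: one must manage several quadratic $\zeta$-exponents and bilinear $q$-exponents at once so that the off-diagonal amplitudes cancel and the diagonal amplitude collapses to exactly $q^{k_1k_2}$, with the two unit scalars $\omega,\bar\omega$ cancelling and leaving no spurious factor.

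Finally, for the generation statement I would rearrange the identity to isolate $\FS$. Since $(GF^{-1}\otimes FG^{-1})^{-1}=FG^{-1}\otimes GF^{-1}$ and $(1\otimes F^{-1}G^{-1})^{-1}=1\otimes GF$, one gets $\FS=(FG^{-1}\otimes GF^{-1})\,C_Z\,(1\otimes GF)$, exhibiting $\FS$ as a product of $C_Z$ with single-qudit gates built from $F$ and $G$; hence $\FS$ is itself a Clifford gate. Conversely, the identity writes $C_Z$ as a word in $F,G,\FS$, so $C_Z\in\langle X,Y,Z,F,G,\FS\rangle$; applying the identity to any chosen pair (with identities on the remaining qudits), and using that the swap is Clifford, produces $C_Z$ between every pair. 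By Theorem~7 of \cite{Farinholt}, $X,Y,Z,F,G$ together with $C_Z$ generate the $n$-qudit Clifford group, so the two generating sets yield the same group; since each of $X,Y,Z,F,G,\FS$ is Clifford, that group is exactly the Clifford group.
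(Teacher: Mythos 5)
Your strategy is the same one the paper's one\-/sentence proof appeals to: verify the identity entry by entry, using the matrix elements of $\FS$ from Theorem \ref{Thm:SFT}, the entries of $F$ and $G$ from \eqref{F-and-G}, one orthogonality sum and one Gauss sum, and then close the generation statement with Theorem 7 of \cite{Farinholt}. All of those ingredients are correctly identified, and the final group\-/theoretic argument is fine. The problem is that the entire substance of the proof is the step you defer --- ``the off\-/diagonal amplitudes cancel and the diagonal amplitude collapses to exactly $q^{k_1k_2}$'' --- and that step fails when the computation is actually carried out. Following your own outline: the charge\-/conservation delta in $\langle\ell_1,\ell_2|\FS|r,s\rangle$ eliminates $s$, and the $\ell_2$-sum indeed produces $d\,\delta_{m_2,k_2}$; but the surviving $\ell_1$-sum is
\[
\sum_{\ell_1}\zeta^{-\ell_1^2}\,q^{\ell_1(k_1-k_2-m_1)}
=\sqrt{d}\;\overline{\omega}\;\zeta^{(k_1-k_2-m_1)^2},
\]
a phase of modulus $\sqrt{d}$ for \emph{every} value of $m_1$: a Gauss sum never degenerates into the factor $\delta_{m_1,k_1}$ that you need. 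Altogether one finds
\[
\langle m_1,m_2|(GF^{-1}\otimes FG^{-1})\,\FS\,(1\otimes F^{-1}G^{-1})|k_1,k_2\rangle
= d^{-1/2}\,\overline{\omega}\;q^{k_1k_2}\,
\zeta^{\,m_1^2-k_2^2+(k_1-k_2-m_1)^2}\,\delta_{m_2,k_2}\,,
\]
so on the block $m_2=k_2$ every entry of the right\-/hand side has modulus $d^{-1/2}$; the operator is unitary but not diagonal, hence it cannot equal $C_Z$ for any $d\geq 2$. Concretely, for $d=2$, $\zeta=i$, the right\-/hand side maps $\ket{0,0}$ to $\tfrac{1-i}{2}\bigl(\ket{0,0}-\ket{1,0}\bigr)\neq C_Z\ket{0,0}$.

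This is not a bookkeeping defect that more care could repair. Under the conventions the paper itself fixes (Theorem \ref{Thm:SFT}, Lemma \ref{Lem:CZ}, \eqref{F-and-G}), no identity of the form $C_Z=(A\otimes B)\,\FS\,(1\otimes D)$ with single\-/qudit $A,B,D$ can hold: Theorem \ref{Thm:SFT} gives the decomposition $\FS=d^{-1/2}\sum_{a\in\Z_d}\bigl(GZ^aX^{-a}\bigr)\otimes\bigl(GX^a\bigr)$, whose first\-/leg operator space $G\cdot\mathrm{span}\{Z^aX^{-a}\}$ contains no nonzero matrix with a vanishing column, whereas the first\-/leg space of $C_Z$ is the diagonal algebra, which does; left multiplication by an invertible $A$ cannot cure this, so the right\-/hand factor must act nontrivially on the first qudit as well. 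In other words, the printed formula (whose verification is also all the paper offers as proof) is itself incorrect, and any honest run of the matrix\-/element computation refutes rather than confirms it. The conclusions of the theorem do survive: Proposition \ref{Prop:SFT2} exhibits $\FS=(G^{-1}\otimes G)\,C_{1,X}^{-1}(F\otimes 1)C_{1,X}$ as a word in Clifford gates, and a corrected sandwich identity --- necessarily with a nontrivial gate, built from $F$ and $G^{\pm1}$, in the first slot of the right\-/hand factor --- does exist; with such an identity in hand, your concluding Farinholt argument goes through verbatim. As written, however, the central step of your proposal is not merely unfinished: it is a step that fails.
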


\begin{proof}
One can check the equality by computing the matrix elements on both sides. 
\end{proof}

\begin{theorem}\label{Thm:FSClifford2}
The transformation $\FS$ on $n$-qudits is in the $n$-qudit Clifford group.
\end{theorem}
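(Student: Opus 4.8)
The plan is to reduce $\FS$ on $n$-qudits to a product of local braids, each of which I can recognize as a Clifford gate. The starting point is the algebraic decomposition \eqref{AlgebraicSFT},
\[
\FS=\frac{1}{\sqrt\omega}\,b_{2n-1,2n,-}\,b_{2n-2,2n-1,-}\cdots b_{1,2,-},
\]
which writes $\FS$ as a product of the $2n-1$ local negative braids $b_{j,j+1,-}$ (the remaining braid having been capped off to produce the phase $1/\sqrt\omega$). Since the Clifford group is closed under products and the scalar $1/\sqrt\omega$ is a phase, it suffices to prove that every factor $b_{j,j+1,-}$ lies in the Clifford group.

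First I would split the braids into two families according to the parity of $j$, recalling that qudit $i$ occupies strings $2i-1$ and $2i$. For odd $j=2i-1$, the braid $b_{2i-1,2i,-}$ acts only on the two strings of the single qudit $i$; by the locality of the braid it equals a $1$-qudit transformation tensored with the identity on the remaining qudits. From the $n=1$ computation in the preceding subsection, the $1$-qudit SFT is $\frac{1}{\sqrt\omega}\,b_{1,2,-}=G$, so $b_{2i-1,2i,-}=\sqrt\omega\,(1\otimes\cdots\otimes G\otimes\cdots\otimes 1)$ with $G$ in slot $i$. As $G$ is one of the generators $X,Y,Z,F,G$ of the Clifford group, every odd braid is Clifford.

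For even $j=2i$, the braid $b_{2i,2i+1,-}$ acts on the second string of qudit $i$ and the first string of qudit $i+1$, i.e. on the adjacent pair $(i,i+1)$, again tensored with the identity elsewhere. This is exactly the cross-qudit braid appearing in the $2$-qudit decomposition $\FS=\frac{1}{\sqrt\omega}\,b_{3,4,-}\,b_{2,3,-}\,b_{1,2,-}$. Solving for the middle factor gives $b_{2,3,-}=\sqrt\omega\,b_{3,4,-}^{-1}\,\FS\,b_{1,2,-}^{-1}$, where $b_{1,2,-}=\sqrt\omega\,(G\otimes 1)$ and $b_{3,4,-}=\sqrt\omega\,(1\otimes G)$ are Clifford by the previous paragraph, and $\FS$ on $2$-qudits is Clifford by Theorem \ref{Thm:FSClifford1}. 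Hence $b_{2,3,-}$, and therefore every even braid $b_{2i,2i+1,-}$, lies in the Clifford group.

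Combining the two families, each factor in \eqref{AlgebraicSFT} is Clifford, so $\FS$ on $n$-qudits is a product of Clifford gates and is itself Clifford. The step I expect to be most delicate is the locality claim --- that a braid on two adjacent strings acts on the $n$-qudit space as the corresponding one- or two-qudit gate tensored with identities --- since this is where the two-strings-per-qudit bookkeeping and the para-isotopy conventions must be handled carefully; the phase $1/\sqrt\omega$ and the capped final braid likewise require the convention that global phases are absorbed into (or factored out of) the Clifford group.
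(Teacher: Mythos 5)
Your proposal is correct and takes essentially the same route as the paper's own proof: the paper likewise identifies the within-qudit negative braid with $G$ up to a phase, expresses the cross-qudit braid $b_{2,3,-}$ as a phase times $(1 \otimes G^{-1})\FS (G^{-1}\otimes 1)$ so that Theorem \ref{Thm:FSClifford1} makes it Clifford, and then concludes through the braid decomposition \eqref{AlgebraicSFT} of $\FS$ on $n$-qudits. The only differences are expository --- your explicit odd/even parity split and the locality caveat you flag --- together with some global-phase bookkeeping ($\omega^{\pm 1/2}$ factors), which is harmless since a scalar phase never affects membership in the Clifford group.
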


\begin{proof}
We have shown that the negative braid
$$\tikz{\fbraid{-2\mn}{0}{0\mn}{2\nn}}$$
acts on a qudit basis $\ket{\ell}$ as a local transformation $\omega^{-1/2}G$.
It acts on the second and third strings of a 2-qudit as
\be\label{b23-braid}
b_{2,3,-}
= \raisebox{-.3cm}{
\tikz{
\fbraid{-2\mn}{0}{-0\mn}{2\nn}
\draw (2\mn,0)--(2\mn,2\nn);
\draw (-4\mn,0)--(-4\mn,2\nn);
}}\;.
\ee
Then
\begin{align}
b_{2,3,-}&=\omega(1 \otimes G^{-1})\FS (G^{-1}\otimes 1)\;.
\end{align}
Thus $b_{2,3,-}$ is in the Clifford group. Therefore $\FS$ on $n$-qudits is in the the Clifford group.
\end{proof}

There is another interesting formula to represent $\FS$ on 2-qudits:

\begin{proposition}\label{Prop:SFT2}
For the $\FS$ on $2$-qudits, we have that
\begin{align}
\FS&=(G^{-1} \otimes G) C_{1,X}^{-1} (F\otimes 1) C_{1,X} 
= C_{X,1}^{-1} (1\otimes F) C_{X,1} (G \otimes G^{-1})\;.
\end{align}
\end{proposition}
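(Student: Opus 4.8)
The plan is to verify both equalities by computing the action of each side on the decreasing basis states $\ket{k_1,k_2}$ and matching coefficients, exactly in the spirit of the proof of Theorem \ref{Thm:FSClifford1}. The target is the closed form for $\FS$ on $2$-qudits supplied by Theorem \ref{Thm:SFT}: on the charge-preserving components,
\[
\FS\ket{k_1,k_2}
= \frac{1}{\sqrt d}\sum_{\ell_1+\ell_2=k_1+k_2}
\zeta^{(k_1+k_2)^2}\,q^{-\ell_1 k_2}\,\ket{\ell_1,\ell_2}\;,
\]
and essentially all of the work reduces to reproducing this phase.

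For the first identity I would feed $\ket{k_1,k_2}$ through $(G^{-1}\otimes G)\,C_{1,X}^{-1}\,(F\otimes 1)\,C_{1,X}$ in four steps, reading right to left: first $C_{1,X}\ket{k_1,k_2}=\ket{k_1,k_2+k_1}$ from Figure \ref{Fig:Controlled gate 1}, then $F\otimes 1$ from \eqref{F-and-G}, then $C_{1,X}^{-1}$, then the Gaussians from \eqref{F-and-G}. Writing $\ell_1=m$ for the Fourier summation index and $\ell_2=k_1+k_2-\ell_1$, the accumulated phase is $q^{k_1\ell_1}\zeta^{-\ell_1^{2}}\zeta^{\ell_2^{2}}$. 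The key step is to collapse this using $\zeta^{2}=q$: since $\ell_2^{2}-\ell_1^{2}=(\ell_2-\ell_1)(\ell_1+\ell_2)$ and $\ell_1+\ell_2=|\vec k|$, one gets $\zeta^{\ell_2^{2}-\ell_1^{2}}=\zeta^{|\vec k|^{2}}q^{-\ell_1|\vec k|}$, so that the total phase becomes $\zeta^{|\vec k|^{2}}q^{-\ell_1 k_2}$, matching the target.

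For the second identity I would proceed identically, now using $C_{X,1}\ket{k_1,k_2}=\ket{k_1+k_2,k_2}$ from Figure \ref{Fig:Controlled gate 2}: applying $G\otimes G^{-1}$ first, then $C_{X,1}$, then $1\otimes F$, then $C_{X,1}^{-1}$, and setting $\ell_2=m$, $\ell_1=k_1+k_2-\ell_2$, the phase comes out to $\zeta^{k_1^{2}-k_2^{2}}q^{k_2\ell_2}$. The reconciliation again rests on $\zeta^{2}=q$: expanding $\zeta^{|\vec k|^{2}}q^{-|\vec k|k_2}=\zeta^{(k_1+k_2)^{2}-2(k_1+k_2)k_2}=\zeta^{k_1^{2}-k_2^{2}}$, and rewriting $q^{-\ell_1 k_2}=q^{-|\vec k|k_2}q^{\ell_2 k_2}$, shows this equals the target phase.

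The computations are routine, so the only real obstacle is bookkeeping: one must keep the half-integer exponents $\zeta$ separate from the integer exponents $q$, track which qudit is control versus target in each controlled gate, and place the summation index correctly among $\ell_1,\ell_2$. The single recurring device that makes everything close is the difference-of-squares identity $\zeta^{a^{2}-b^{2}}=\zeta^{(a-b)(a+b)}$ combined with $\zeta^{2}=q$. A more structural alternative would be to start from the Clifford relation $C_Z=(GF^{-1}\otimes FG^{-1})\FS(1\otimes F^{-1}G^{-1})$ of Theorem \ref{Thm:FSClifford1}, solve for $\FS$, and convert the $C_Z$ factors into $C_{1,X}$- and $C_{X,1}$-conjugates via $F$; but this requires pinning down $FZF^{-1}$ in the present conventions, so I expect the direct matrix-element check to be both shorter and less error-prone.
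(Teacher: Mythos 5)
Your proposal is correct and takes essentially the same approach as the paper: the paper's proof of Proposition~\ref{Prop:SFT2} consists of the single sentence ``This follows from a direct computation,'' and your matrix-element verification---propagating $\ket{k_1,k_2}$ through each factor and matching the resulting phase against $\zeta^{|\vec k|^2}q^{-\ell_1 k_2}$ from Theorem~\ref{Thm:SFT}---is precisely that computation, carried out correctly for both factorizations (the intermediate phases $q^{k_1\ell_1}\zeta^{\ell_2^2-\ell_1^2}$ and $\zeta^{k_1^2-k_2^2}q^{k_2\ell_2}$, and their reduction via $\zeta^2=q$, all check out).
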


\begin{proof}
This follows from a direct computation.
\end{proof}

Note that $G^{-1} \otimes G$ is identity on 0-charge 2-qudits, so
\be
\FS \ket{0,0}=C_{1,X}^{-1} (F\otimes 1) \ket{0,0} \;.
\ee
The right side of this expression is the original formula for the Bell state in terms of Hadamard and CNOT.

\subsection{Measurement dictionary II}
We give a dual 2-qudit as a double-cup diagram in  \eqref{Equ:dual 2-qudit 1}, \eqref{Equ:dual 2-qudit 2}.
By Proposition \ref{Prop:SFT2}, the two corresponding protocols are given in Figure~\ref{Fig:Measure phase space 1}, \ref{Fig:Measure phase space 2} depending on the choice of the control qudit. They are equivalent to the protocol for measurement in phase space. This measurement is the most common measurement in protocols, known as the Bell state measurement for the qubit case. Thus we recover the measurement from its topological structure.  This is the reverse of the historical route to go from the algebraic measurement to its topology.
\begin{align}
\raisebox{-.65cm}{
\tikz{
\fdoublemeasure{8\mn}{4\nn}{1\mn}{1\nn}{-\ell_1}{-\ell_2}
}}
&=
\raisebox{-1.6cm}{
\tikz{
\fbraid{2\mn}{0}{0\mn}{2\nn}
\fbraid{-2\mn}{4\nn}{0\mn}{2\nn}
\draw (-2\mn,0)--(-2\mn,2\nn);
\draw (2\mn,2\nn)--(2\mn,4\nn);
\draw (4\mn,0)--(4\mn,4\nn);
\fmeasure{0\mn}{0\nn}{1\mn}{1\nn}{-\ell_1}
\fmeasure{4\mn}{0\nn}{1\mn}{3\nn}{-\ell_2}
\draw [blue,dashed] (5\mn,2\nn)-- (-3\mn,2\nn);
\draw [blue,dashed] (5\mn,0\nn)-- (-3\mn,0\nn);
}}\label{Equ:dual 2-qudit 1}\\
&=
\raisebox{-1.6cm}{
\tikz{
\fbraid{0\mn}{0}{2\mn}{2\nn}
\fbraid{2\mn}{2\nn}{4\mn}{4\nn}
\draw (4\mn,0)--(4\mn,2\nn);
\draw (0\mn,2\nn)--(0\mn,4\nn);
\draw (-2\mn,0)--(-2\mn,4\nn);
\fmeasure{0\mn}{0\nn}{1\mn}{3\nn}{-\ell_2}
\fmeasure{4\mn}{0\nn}{1\mn}{1\nn}{-\ell_1}
\draw [blue,dashed] (5\mn,2\nn)-- (-3\mn,2\nn);
\draw [blue,dashed] (5\mn,0\nn)-- (-3\mn,0\nn);
}}\label{Equ:dual 2-qudit 2}\quad.
\end{align}

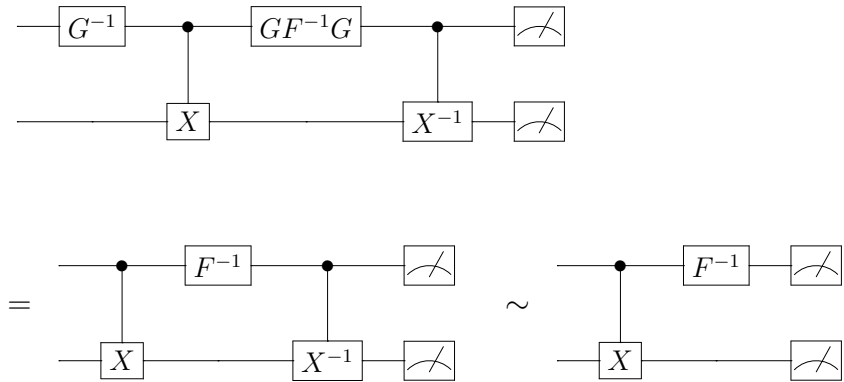
\begin{figure}[H]
\begin{align*}
&\scalebox{0.9}{
\raisebox{.7cm}{
\Qcircuit @C=1.5em @R=2em  {
& \gate{G^{-1}}                   & \control  \qw         & \gate{GF^{-1}G} &  \control \qw        & \meter &  \\
& \qw          & \gate{X} \qwx[-1] & \qw           &   \gate{X^{-1}} \qwx[-1]   & \meter  &
}}}\\
&\phantom{l} \nonumber\\
&\phantom{l} \nonumber\\
&=
\scalebox{0.9}{
\raisebox{.7cm}{
\Qcircuit @C=1.5em @R=2em  {
& \control  \qw         & \gate{F^{-1}} &  \control \qw        & \meter &  \\
& \gate{X} \qwx[-1] & \qw           &   \gate{X^{-1}} \qwx[-1]   & \meter  &
}}}
\sim 
\scalebox{0.9}{
\raisebox{.7cm}{
\Qcircuit @C=1.5em @R=2em  {
              & \control  \qw         & \gate{F^{-1}}   & \meter &  \\
        & \gate{X} \qwx[-1] & \qw             & \meter  &
}}}
\end{align*}
\caption{Measurement in the phase space: in picture language above, and translated to an algebraic protocol below.  The first algebraic protocol is the translation of the double-cup diagram on the right side of \eqref{Equ:dual 2-qudit 1}: here the measurement of the first and the second meters are $\ell_1$ and $\ell_2$ respectively. The simplification to the second algebraic protocol uses tricks in Figs.~\ref{Fig:Trick1}, \ref{Fig:Trick2}. It is equivalent to the measurement in the phase space using the trick in Figure~\ref{Fig:Trick3}.\label{Fig:Measure phase space 1}}
\end{figure}
\begin{figure}[H]
\begin{align*}
&\scalebox{0.9}{
\raisebox{.7cm}{
\Qcircuit @C=1.5em @R=2em  {
& \qw          & \gate{X} \qwx[1] & \qw           &   \gate{X^{-1}} \qwx[1]   & \meter  &\\
& \gate{G}                   & \control  \qw         & \gate{G^{-1}FG^{-1}} &  \control \qw        & \meter &
}}}\\
&\phantom{l} \nonumber\\
&=
\scalebox{0.9}{
\raisebox{.7cm}{
\Qcircuit @C=1.5em @R=2em  {
& \gate{X} \qwx[1] & \qw           &   \gate{X^{-1}} \qwx[1]   & \meter  & \\
& \control  \qw         & \gate{F} &  \control \qw        & \meter &
}}}
\sim 
\scalebox{0.9}{
\raisebox{.7cm}{
\Qcircuit @C=1.5em @R=2em  {
& \gate{X} \qwx[1] & \qw             & \meter  & \\
& \control  \qw         & \gate{F}   & \meter &
}}}
\end{align*}
\caption{Measurement in the phase space.  This protocol is a translation of \eqref{Equ:dual 2-qudit 2}.
 \label{Fig:Measure phase space 2}}
\end{figure}
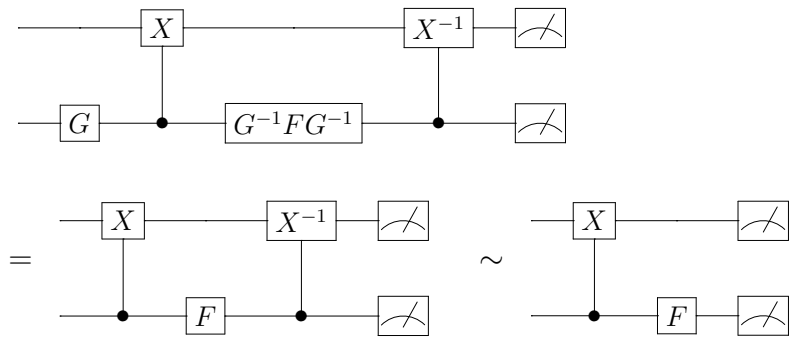

\setcounter{equation}{0}
\section{Pictorial Communication Protocols\label{Sect:Identification}}
\vbox{
Now we complete the dictionary of our holographic software.  We can use this dictionary to translate pictorial protocols to algebraic ones. When we translate between a pictorial realization of a protocol and an algebraic realization of that protocol, an overall (global) phase is irrelevant.  It does not affect a quantum-mechanical vector state, even though in this paper we often do keep track of this phase.

In this  section we  illustrate the robustness of the pictorial method, by giving examples. We identify the standard quantum teleportation protocol \cite{Bennett-etal}. As mentioned in the introduction, in a separate paper we present the new multipartite compressed teleportation (MCT) protocol \cite{ConstructiveSimulation}.}

Here we also construct an entanglement distillation protocol to produce the multipartite entangled resource state $\Max$ for $n$ persons. This protocol requires using $(n-1)$ edits, and $(n-1)$ cdits.  This cost is minimal, as is the cost in time, which is the transmission of one cdit.

\subsection{Teleportation\label{Sect:TeleportationProtocol}}
One can apply holographic software in an elementary way to ``design'' a quantum teleportation protocol.   We obtain the 1-qudit translation, that we discuss further in  \ref{sect:Topology-Algebra}.

\begin{proposition}[\bf 1-Qudit Quantum Teleportation]
 The pictorial protocol
 \begin{align}
\raisebox{-1.5cm}{
\tikz{
\fqudit{0}{2\nn}{1\mn}{1\nn}{\phi_A}
\fdoublequdit{8\mn}{2\nn}{1\mn}{0\nn}{}{}
\fdoublemeasure{4\mn}{2\nn}{1\mn}{1\nn}{-\ell_1}{-\ell_2}
\node at (7\mn,-3\nn) {\size{$\ell_2$}};
\node at (9\mn,-5\nn) {\size{$\ell_1$}};
\draw (8\mn,2\nn)--(8\mn,-6\nn);
\draw (6\mn,2\nn)--(6\mn,-6\nn);
\draw [blue,dashed] (10\mn,2\nn)-- (-4\mn,2\nn);
\draw [blue,dashed] (10\mn,-4\nn)-- (-4\mn,-4\nn);
\draw [blue,dashed] (10\mn,-2\nn)-- (-4\mn,-2\nn);
\draw [red,dashed] (5\mn,-6\nn)-- (5\mn,7\nn);
\node at (3\mn,6\nn) {Alice};
\node at (7\mn,6\nn) {Bob};
}}\quad.\label{Pic:Teleportation3}
\end{align}
 translates  to the algebraic protocol illustrated in Figure~\ref{Fig:Teleportation}.  For $d=2$, this reduces to the original 1-qubit teleportation protocol of Bennett et al.~\cite{Bennett-etal}.
 \end{proposition}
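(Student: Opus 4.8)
The plan is to evaluate the picture \eqref{Pic:Teleportation3} piece by piece through the holographic dictionary, and then recognize the resulting algebraic expression as the circuit of Figure~\ref{Fig:Teleportation}. First I would read off the three ingredients sitting above the top blue line on Alice's side: the input $1$-qudit $\ket{\phi_A}$, a single cap as in \eqref{qudit 21}; the left half of the resource state $\Max_2$, whose double-cup picture is fixed by Figure~\ref{Pic: Resource state} and \eqref{ResourceStateMax}; and the double-measurement box, which by the Measurement Dictionary~I is the dual $2$-qudit obtained by capping the two input strings with $\bra{\ell_1,\ell_2}$. Thus the whole diagram represents $(\bra{M_{\ell_1,\ell_2}}\otimes I)\,(\ket{\phi_A}\otimes \Max_2)$, where $\bra{M_{\ell_1,\ell_2}}$ is exactly the double-cup measurement analysed in the Measurement Dictionary~II, Equations \eqref{Equ:dual 2-qudit 1}--\eqref{Equ:dual 2-qudit 2}.

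Next I would carry out the topological simplification that is the heart of teleportation. The double-measurement cap joins the input strand carrying $\ket{\phi_A}$ to one strand of $\Max_2$; using the Temperley--Lieb straightening of the resulting zig-zag, together with para isotopy \eqref{Equ:para isotopy} to slide the charges, the diagram collapses to a single strand carrying $\ket{\phi_A}$ down across the red line into Bob's region, multiplied by a factor depending only on the outcomes $\ell_1,\ell_2$. I would compute this factor explicitly from the resolution of identity \eqref{Equ:Resolution of the identity} and the string Fourier relations \eqref{Equ:SF1}--\eqref{Equ:SF2}: the net effect is that Bob holds $\ket{\phi_A}$ up to a Pauli factor built from $X^{\ell_1}$ and $Z^{-\ell_2}$ and an irrelevant global phase. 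The measurement-controlled corrections shown in \eqref{Pic:Teleportation3}, namely the powers of $X$ and $Z$ keyed to the transmitted values $\ell_1$ and $\ell_2$, then return $\ket{\phi_A}$; by Trick~4 of Figure~\ref{Fig:Trick4} these corrections can be combined, and the protocol costs one edit together with the cdits needed to send $\ell_1,\ell_2$.

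To obtain the stated algebraic protocol I would translate the double-cup measurement via Proposition~\ref{Prop:SFT2} and the phase-space protocols of Figures~\ref{Fig:Measure phase space 1}--\ref{Fig:Measure phase space 2}, which already rewrite $\bra{M_{\ell_1,\ell_2}}$ as a controlled-$X$, Fourier, and meter network; composing this with the $\Max_2$-preparation $\FS\ket{0,0}$ of \eqref{ResourceStateMax} and with Bob's correction yields precisely the circuit of Figure~\ref{Fig:Teleportation}. Specializing to $d=2$ with $\zeta=+i$, the matrices $X,Z$ become the Pauli matrices $\sigma_x,\sigma_z$, $F$ becomes the Hadamard $H$, the state $\Max_2$ becomes the Bell state, and the corrections reduce to the familiar conditional Pauli operations, recovering the Bennett et al.\ protocol.

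I expect the main obstacle to be bookkeeping the twisting scalars: each application of para isotopy contributes a factor $q^{k\ell}$ and each string Fourier move a factor $\zeta^{\pm k^2}$, and I must verify that, after Bob's outcome-dependent correction, these collapse into a single global phase, so that the teleported vector is genuinely $\ket{\phi_A}$ rather than a charge- or phase-distorted version of it. Checking that the correction unitary inferred from the topology matches the one produced by the algebraic translation of the Measurement Dictionary~II is the step where the conventions \eqref{XYZ-Defn-1} and the signs of $\ell_1,\ell_2$ must be tracked with care.
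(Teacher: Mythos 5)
Your proposal is correct and follows essentially the same route as the paper: translate the picture piece-by-piece with the holographic dictionary (the double cap as the resource state $\FS\ket{0,0}$ of \eqref{ResourceStateMax}, the charged double cup as the phase-space measurement of Measurement Dictionary~II and Proposition~\ref{Prop:SFT2}, the charged strings on Bob's side as Pauli corrections via \eqref{Pauli 21}), then simplify with Tricks~3 and~4 and specialize to $d=2$. The paper states the proposition without a formal proof environment, relying on exactly this dictionary argument together with the topological straightening discussion in its appendix, which your Temperley--Lieb collapse step and phase-bookkeeping caveat faithfully reconstruct.
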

 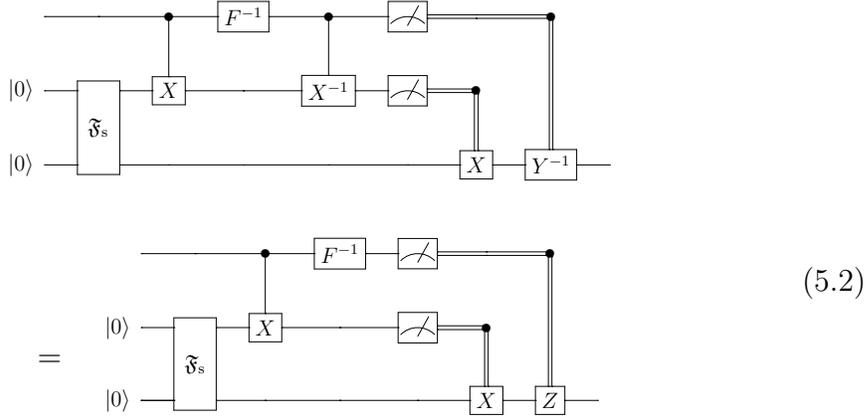
\begin{figure}[H]
 \begin{align}
&\raisebox{.6cm}{\scalebox{0.7}{
\Qcircuit @C=1.5em @R=2em {
& \qw & \control \qw & \gate{F^{-1}} & \control \qw & \meter & \cw & \control \cw \cwx[2] \\
\lstick{|0 \rangle} & \multigate{1}{\FS} & \gate{X} \qwx[-1] & \qw & \gate{X^{-1}} \qwx[-1] & \meter & \control \cw \cwx[1] \\
\lstick{|0 \rangle} & \ghost{\FS} & \qw & \qw & \qw & \qw & \gate{X} & \gate{Y^{-1}} & \qw
}}}\nonumber
\\
&&\phantom{l} \nonumber\\
\label{Fig:Teleportation-Simplified}
&\raisebox{-1cm}{=}\qquad \raisebox{.5cm}{
\scalebox{0.7}{
\Qcircuit @C=1.5em @R=2em {
& \qw & \control \qw & \gate{F^{-1}} & \meter & \cw & \control \cw \cwx[2] \\
\lstick{|0 \rangle} & \multigate{1}{\FS} & \gate{X} \qwx[-1] & \qw & \meter & \control \cw \cwx[1] \\
\lstick{|0 \rangle} & \ghost{\FS} \qw & \qw & \qw & \qw & \gate{X} & \gate{Z} & \qw
}}}
\end{align}
\caption{Teleportation protocol: 
%Measurement in the phase space: 
The first protocol represents the holographic translation of the pictorial protocol  \eqref{Pic:Teleportation3}.
It simplifies to the protocol \eqref{Fig:Teleportation-Simplified} using tricks in Figs.~\ref{Fig:Trick3} and \ref{Fig:Trick4}. \label{Fig:Teleportation}
}
\end{figure}

\subsection{Multipartite resource state\label{Sect:n qudit resource state}}
We introduce the multipartite entangled resource state in \eqref{Maximally Entangled State}. We can construct this $n$-qudit resource state using $(n-1)$ of the $2$-qudit resource states.
We give the pictorial protocol in  \eqref{Pic:Max} and the algebraic protocol in Figure~\ref{Fig:Max} for the case $n=3$. One can easily generalize the protocol to the case for arbitrary~$n$.

\begin{align}
&\quad ~ \raisebox{-.8cm}{
\tikz{
\draw (8\mn,-5\nn)--(8\mn,-4\nn) [bend left=55] to (-6\mn,-4\nn)-- (-6\mn,-5\nn);
\draw (6\mn,-5\nn)--(6\mn,-4\nn) [bend left=45] to (0\mn,-4\nn) -- (0\mn,-5\nn);
\draw (-2\mn,-5\nn)--(-2\mn,-4\nn) [bend left=45] to (-4\mn,-4\nn) -- (-4\mn,-5\nn);
\draw [red,dashed] (5\mn,-5\nn)-- (5\mn,1\nn);
\draw [red,dashed] (-3\mn,-5\nn)-- (-3\mn,1\nn);
\node at (-5\mn,0\nn) {Alice};
\node at (-1\mn,0\nn) {Bob};
\node at (7\mn,0\nn) {Carol};
}} \nonumber\\
&\phantom{l} \nonumber\\
=&\raisebox{-1.5cm}{
\tikz{
\fdoublequdit{8\mn}{4\nn}{1\mn}{1\nn}{}{}
\fdoublequdit{0\mn}{4\nn}{1\mn}{1\nn}{}{}
\draw (-6\mn,4\nn)--(-6\mn,-3\nn);
\draw (-4\mn,4\nn)--(-4\mn,-3\nn);
\draw (6\mn,4\nn)--(6\mn,-3\nn);
\draw (8\mn,4\nn)--(8\mn,-3\nn);
\node at (9\mn,-2\nn) {\size{$\ell_1$}};
\fbraid{0\mn}{0}{2\mn}{2\nn}
\fbraid{2\mn}{2\nn}{4\mn}{4\nn}
\draw (4\mn,0)--(4\mn,2\nn);
\draw (0\mn,2\nn)--(0\mn,4\nn);
\draw (0\mn,-3\nn)--(0\mn,0\nn);
\draw (-2\mn,-3\nn)--(-2\mn,4\nn);
\fmeasure{4\mn}{0\nn}{1\mn}{1\nn}{-\ell_1}
\draw [red,dashed] (5\mn,-3\nn)-- (5\mn,10\nn);
\draw [red,dashed] (-3\mn,-3\nn)-- (-3\mn,10\nn);
\node at (-5\mn,9\nn) {Alice};
\node at (-1\mn,9\nn) {Bob};
\node at (7\mn,9\nn) {Carol};
\draw [blue,dashed] (9\mn,4\nn)-- (-7\mn,4\nn);
\draw [blue,dashed] (9\mn,2\nn)-- (-7\mn,2\nn);
\draw [blue,dashed] (9\mn,0\nn)-- (-7\mn,0\nn);
}}\quad.\label{Pic:Max}
\end{align}

\begin{figure}[H]
\[
\scalebox{0.9}{
\Qcircuit @C=1.5em @R=2em {
\lstick{|0 \rangle} & \multigate{1}{\FS} \qw & \qw & \qw & \qw & \qw & \qw & \qw \\
\lstick{|0 \rangle} & \ghost{\FS} \qw & \gate{X} \qwx[1]& \qw & \gate{X^{-1}} \qwx[1] &\qw &\qw & \qw \\
\lstick{|0 \rangle} & \multigate{1}{\FS} \qw & \control \qw & \gate{F} & \control \qw & \meter & \control \cw \cwx[1] & \\
\lstick{|0 \rangle} & \ghost{\FS} & \qw & \qw & \qw \qw & \qw & \gate{Y^{-1}} & \qw
}}
\]
\caption{The construction of the $n$-edit resource for $n=3$.}\label{Fig:Max}
\end{figure}

\subsection{The generalized BVK protocol\label{BVK-Protocol}}
Here we give a more general construction of $\Max$ for a multipartite network, motivated by the Bose-Vedral-Knight protocol \cite{BVK}, and the challenge of Kimble to entangle nodes across a network for a quantum internet \cite{Kimble}.  Suppose there are $n$ parties and the $j^{\rm th}$ party  has $n_{j}$ persons with a shared multipartite entangled resource state $\Max$.  In each party there is one leader who shares an extra multipartite entangled resource state $\Max$.  Then we can construct a  multipartite entangled resource state $\Max$ for all members among the $n$ parties. 

\begin{theorem}[\bf Protocol for a Multipartite Resource State]
The  multipartite, pictorial resource state for $n$ parties with $n_{j}$ persons illustrated in Equation~\eqref{BVK-picture} translates to the algebraic protocol illustrated in Equation~\eqref{BVK-Figure}. 
\end{theorem}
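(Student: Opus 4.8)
The plan is to prove the translation claim the same way the teleportation protocol of \S\ref{Sect:TeleportationProtocol} and the $n=3$ construction of Figure~\ref{Fig:Max} were handled: apply the holographic dictionary of \S\ref{HoloSoft} and \S\ref{Sect:SFT} element by element to the composite picture \eqref{BVK-picture}, simplify with the tricks of Figures~\ref{Fig:Trick1}--\ref{Fig:Trick4}, and read off \eqref{BVK-Figure}. First I would fix the ingredients. Writing $N=\sum_{j=1}^{n} n_{j}$ for the total number of persons, each party $j$ contributes an intra-party resource state $\Max_{n_{j}}=\FS|\vec 0\rangle_{n_{j}}$, drawn as a product of neutral caps capped by the SFT braids of Figure~\ref{Fig:String Fourier Transform-nq}; the $n$ leaders additionally share the inter-party state $\Max_{n}$. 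In the picture each person occupies two output strings, so each leader carries two strings from its party's $\Max_{n_{j}}$ and two from the inter-party $\Max_{n}$, and the protocol fuses these at every leader.

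The central pictorial mechanism is entanglement swapping, which is just the joining of two caps along a shared pair of strings. I would realize each join by a measurement in the phase space, using the double-cup diagram \eqref{Equ:dual 2-qudit 2} together with the braid form of $\FS$ from Proposition~\ref{Prop:SFT2}. By the Temperley-Lieb relation -- a neutral string depends only on its endpoints -- connecting a leader's inter-party strings to its intra-party strings collapses the two separate $\Max$ caps into one longer cap. Iterating this over all $n$ leaders fuses the $n$ intra-party states and the inter-party state into a single charge-neutral superposition over all $\vec\ell\in\Z_{d}^{N}$ with $|\vec\ell|=0$, which is exactly $\Max_{N}$ by \eqref{Maximally Entangled State} and Proposition~\ref{Prop:Max}. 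This establishes that \eqref{BVK-picture} produces the claimed resource state, and the topological picture makes the fusion manifest.

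To extract the algebraic protocol \eqref{BVK-Figure}, I would translate each surviving piece: the capped-SFT blocks become $\FS$-gates preparing $\Max$ (Figure~\ref{Fig:Protocol Max}); each double-cup measurement becomes a Bell-type measurement in phase space (Figures~\ref{Fig:Measure phase space 1} and \ref{Fig:Measure phase space 2}); and, by the measurement dictionary together with para isotopy \eqref{Equ:para isotopy}, pulling a measured charge $-\ell$ past the remaining strings becomes a classically controlled correction $Z^{-\ell}$ on the downstream qudits, as in Figure~\ref{Fig:dual qudit}. Applying the four simplifying tricks then folds the controlled $C_{X,1}^{-1}$ blocks and the Gaussians into the measured corrections, matching \eqref{BVK-Figure}.

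The main obstacle will be the bookkeeping of the measurement-dependent phases and corrections. Each swap at leader $j$ returns an a priori unknown outcome $\ell_{j}$ that must be broadcast so downstream members can apply the matching $Z^{-\ell_{j}}$; the delicate point is to verify that, for \emph{every} assignment of outcomes $\{\ell_{j}\}$ simultaneously, the corrections restore charge neutrality and kill the relative phases $\omega_{\vec\ell,\vec k}$ coming from Theorem~\ref{Thm:SFT}, so that the output is $\Max_{N}$ independent of the $\ell_{j}$, and that the classical-communication cost telescopes to the stated count. Because each individual move is an instance of an already-proved relation, the proof reduces to assembling these steps and checking this phase cancellation, precisely mirroring the $n=3$ verification behind Figure~\ref{Fig:Max}.
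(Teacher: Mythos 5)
Your high-level plan---translate \eqref{BVK-picture} piece by piece with the holographic dictionary, mirroring the $n=3$ construction of \eqref{Pic:Max} and Figure~\ref{Fig:Max}---coincides with the paper's proof, which consists of the single sentence that the translation is a straightforward application of the dictionary. However, one concrete step in your elaboration would fail as written: the realization of each leader's join as a phase-space (Bell-type) measurement via the double-cup diagram \eqref{Equ:dual 2-qudit 2}. That diagram is a dual $2$-qudit: it has two charged cups, hence two meters, and it projects \emph{both} of the leader's qudits (his intra-party qudit and his inter-party qudit) onto a maximally entangled basis. In \eqref{BVK-picture}, by contrast, each leader performs a \emph{single} charged-cup measurement, joining one string of his inter-party qudit to one string of his intra-party qudit, so that only one qudit per leader is consumed and the leader keeps a qudit in the final state. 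This matches \eqref{BVK-Figure}, which has exactly one meter per leader (on the inter-party line), and it is essential to the statement being proved: the theorem produces $\Max$ on \emph{all} $N=\sum_j n_j$ members. If each leader instead performs the full two-meter measurement of \eqref{Equ:dual 2-qudit 2}, both of his qudits are consumed, the leaders are excluded from the output state (one gets a resource state on only the $N-n$ non-leaders), the protocol uses $2n$ meters and $2n$ classical outcomes rather than $n$, and the resulting circuit cannot equal \eqref{BVK-Figure}.

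The correct local ingredient is the braids-plus-single-cup join of \eqref{Pic:Max}, whose algebraic translation is the one-meter block of Figure~\ref{Fig:Max}: controlled-$X$, then $F$ on the leader's inter-party qudit, then controlled-$X^{-1}$, then a single meter whose outcome controls a correction (a $Y^{-1}$ power, by Trick 4) on a downstream qudit. Note that your own later bookkeeping already presupposes this---you speak of a single outcome $\ell_j$ per leader---so your proposal is internally inconsistent on this point rather than uniformly wrong. The remainder of your outline is sound and consistent with the paper: the Temperley-Lieb fusion of caps, the measurement dictionary converting measured charges into classically controlled corrections, and the phase bookkeeping, where the outcome-dependent factors need only be global phases; indeed \eqref{BVK-picture} displays the residual phase $\prod_{j=2}^{n}\zeta^{\ell_j^{2}}$ explicitly, and the paper's convention that overall phases are irrelevant disposes of it. With the join replaced as above, your argument assembles into a correct, detailed version of the paper's one-line proof.
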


\begin{proof}
The algebraic protocol for the iterated construction of the multipartite entangled resource state for multipartite communication corresponding to the picture~\eqref{BVK-picture} is a straightforward application of our dictionary for holographic software to~\eqref{BVK-picture}.
\end{proof}
\beq
&&\hskip -.9cm
\raisebox{-1.3cm}{\scalebox{.8}{
\tikz{
\node at (-13\mn,8\nn) {\Large{$\cdots$}};
%\node at (-13\mn,2\nn) {\Large{$\cdots$}};
\draw (0\mn,-2\nn)--(0\mn,0\nn) -- (4\mn,4\nn) --(4\mn,7\nn);
\fmeasure{2\mn}{4\nn}{1\mn}{1\nn}{-\ell_n}
\node at (1\mn,-1\nn) {\size{$\ell_{1}$}};
\fqudit {-2\mn}{-2\nn}{1\mn}{6\nn}{}
\node at (-5\mn,4\nn) {\size{$\cdots$}};
\fqudit {-6\mn}{-2\nn}{1\mn}{6\nn}{}
\draw (0\mn,4\nn) [bend left=45] to (-10\mn,4\nn) -- (-10\mn,-2\nn);
\draw (2\mn,4\nn)--(2\mn,4\nn+3\nn) [bend left=45] to (-12\mn,4\nn+3\nn)--(-12\mn,4\nn);
\draw (-16\mn+0\mn,-2\nn)--(-16\mn+0\mn,1\nn) -- (-16\mn+4\mn,4\nn);
\fmeasure{-16\mn+2\mn}{4\nn+1\nn}{1\mn}{1\nn}{-\ell_2}
\node at (-15\mn,0\nn) {\size{$\ell_{n}$}};
\fqudit {-16\mn+-2\mn}{-2\nn}{1\mn}{6\nn+1\nn}{}
\node at (-16\mn+-5\mn,4\nn+1\nn) {\size{$\cdots$}};
\fqudit {-16\mn+-6\mn}{-2\nn}{1\mn}{6\nn+1\nn}{}
\draw (-16\mn+0\mn,4\nn+1\nn) [bend left=45] to (-16\mn+-10\mn,4\nn+1\nn) -- (-16\mn+-10\mn,-2\nn);
\draw (-16\mn+2\mn,4\nn+1\nn) --(-16\mn+2\mn,4\nn+1\nn+2\nn)  [bend left=45] to (-16\mn+-12\mn,4\nn+1\nn+2\nn)--(-16\mn+-12\mn,4\nn+1\nn);
\draw (-32\mn+0\mn,-2\nn) -- (-32\mn+0\mn,0\nn+2\nn) -- (-32\mn+4\mn,4\nn+1\nn);
\fmeasure{-32\mn+2\mn}{4\nn+2\nn}{1\mn}{1\nn}{-\ell_1}
\node at (-31\mn,1\nn) {\size{$\ell_{2}$}};
\fqudit {-32\mn+-2\mn}{-2\nn}{1\mn}{6\nn+2\nn}{}
\node at (-32\mn+-5\mn,4\nn+2\nn) {\size{$\cdots$}};
\fqudit {-32\mn+-6\mn}{-2\nn}{1\mn}{6\nn+2\nn}{}
\draw (-32\mn+0\mn,4\nn+2\nn) [bend left=45] to (-32\mn+-10\mn,4\nn+2\nn) -- (-32\mn+-10\mn,-2\nn);
\draw (-32\mn+2\mn,4\nn+2\nn)-- (-32\mn+2\mn,7\nn) [bend left=-45] to (4\mn,7\nn);}}} \nonumber\\
&&\phantom{x} \nonumber\\
&=&\left(\prod_{j=2}^{n}\zeta^{\ell_{j}^{2}}\right)
\ \raisebox{-.1cm}{\scalebox{.5}{
\tikz{
\draw (0\mn,0\nn) -- (0\mn,1\nn) [bend left=25] to (-42\mn,1\nn) -- (-42\mn,0\nn);
\fqudit {-2\mn}{0\nn}{1\mn}{1\nn}{}
\node at (-5\mn,1\nn) {\size{$\cdots$}};
\fqudit {-6\mn}{0\nn}{1\mn}{1\nn}{}
\draw (-10\mn,0\nn) -- (-10\mn,1\nn) [bend left=45] to (-16\mn,1\nn) -- (-16\mn,0\nn);
\fqudit {-18\mn}{0\nn}{1\mn}{1\nn}{}
\node at (-21\mn,1\nn) {\size{$\cdots$}};
\fqudit {-22\mn}{0\nn}{1\mn}{1\nn}{}
\draw (-26\mn,0\nn) -- (-26\mn,1\nn) [bend left=45] to (-32\mn,1\nn) -- (-32\mn,0\nn);
\fqudit {-34\mn}{0\nn}{1\mn}{1\nn}{}
\node at (-37\mn,1\nn) {\size{$\cdots$}};
\fqudit {-38\mn}{0\nn}{1\mn}{1\nn}{}
%\node at (-13\mn,1\nn) {\Large{$\cdots$}};
\node at (-13\mn,3\nn) {\Large{$\cdots$}};
}}
}\;.\label{BVK-picture}
\eeq

\beq
\label{BVK-Figure}
\raisebox{4cm}{\scalebox{0.8}{
\Qcircuit @C=1.5em @R=2em  {
& & &\lstick{\ket{0}} & \multigate{1}{\FS}  \qw & \qw & \qw & \qw  & \qw & \qw & \qw  & \qw  \\
& & \ustick{\raisebox{.2cm}{\vdots}} &\lstick{\ket{0}}  & \ghost{\FS} \qw                          & \gate{X} \qwx[1]& \qw & \gate{X^{-1}} \qwx[1] &\qw &\qw &\gate{Y^{-1}} & \qw  \\
\lstick{\ket{0}} &  \multigate{6}{\FS}  & \qw & \qw &\qw  &   \control \qw   &   \gate{F}     &  \control \qw  & \meter & \control \cw \cwx[2] & \\
&  & &\lstick{\ket{0}} & \multigate{1}{\FS}  \qw & \qw & \qw & \qw  & \qw & \qw & \qw  & \qw \\
&  &\ustick{\raisebox{.2cm}{\vdots}} &\lstick{\ket{0}}  &\ghost{\FS} \qw                         & \gate{X} \qwx[1]& \qw & \gate{X^{-1}} \qwx[1] &\qw &\gate{Y^{-1}} & \qw & \qw \\
\lstick{\ket{0}} & \ghost{\FS} & \qw & \qw &   \qw         &   \control \qw   &   \gate{F}     &  \control \qw  & \meter & \control \cw \cwx[2] & \\
\ustick{\raisebox{.1cm}{\Large{\vdots}}} &  & &\lstick{\ket{0}} & \multigate{1}{\FS}  \qw & \qw & \qw & \qw  & \ustick{\raisebox{.1cm}{\Large{\vdots}}} \qw & \qw & \qw  & \qw \\
&  &\ustick{\raisebox{.2cm}{\vdots}}  &\lstick{\ket{0}} & \ghost{\FS} \qw                         & \gate{X} \qwx[1]& \qw & \gate{X^{-1}} \qwx[1] &\qw &\gate{Y^{-1}} & \qw & \qw \\
\lstick{\ket{0}} & \ghost{\FS} & \qw & \qw & \qw      &   \control \qw   &   \gate{F}     &  \control \qw  & \meter & \cw & \control \cw \cwx[-7] &
}
}
}
%\nonumber\\
\eeq

\section*{Acknowledgements}
This research was supported in part by the Templeton Religion Trust under Grants TRT0080 and TRT0159.  We are grateful for hospitality at the Research Institute for Mathematics (FIM) of the ETH-Zurich, at the Max Planck Institute for Mathematics in Bonn, at the Hausdorff Institute for Mathematics in Bonn, at the Isaac Newton Mathematical Institute in Cambridge, UK, and at the Mathematical Research Institute Oberwolfach, where we did part of this work.  We thank Erwin~Engeler, Klaus~Hepp, Daniel~Loss,  Renato~Renner, and Matthias~Troyer for helpful discussions.  We are also grateful to Bob~Coecke for sharing a copy of   \cite{Coeckebook}, before its publication.

\begin{appendix}
\renewcommand{\thesection}{A}
\renewcommand{\theequation}{A.\arabic{equation}}
\setcounter{equation}{0}
\section{}
\subsection*{Holographic software\label{Sect:Holographic}}
Let us explain in more detail what we mean by holographic software.
Quantum information protocols are expressed as products of three types of elementary operations: unitary transformations on states, measurements, and classical communication of the results of measurements.
These operations act on states, that are either inputs or given resource states.
 One can consider these operations and states as elementary network instructions.

Our pictures are composed of charged strings.   Our holographic software gives a dictionary to translate between pictures and elementary network instructions. So we refer to these pictures as elements of our software.  Our software is holographic in the sense that any algebraic protocol can be translated into pictures. This approach is helpful to simplify algebraic computations and to notice the topological features of algebraic protocols.
 We are especially interested in combinations of these instructions that we can express with elementary pictures.

Our present pictorial approach provides a way to pass in the inverse direction:  from topology to algebra. This is the major new aspect of our work.
We start from a given topological model, and use it to simulate processes in quantum information.  In doing so, we can follow topological intuition to find new concepts and new protocols.  We have resolved the technical difficulty of finding a robust and useful topological  model for quantum information in \cite{JL}.

\subsection*{Topological simulation\label{sect:Topology-Algebra}}
In another paper \cite{ConstructiveSimulation} we introduce the notion of topological simulation. Let us illustrate this concept in the simulation of  bipartite teleportation, our favorite example. We explain how we use our holographic software to recover in a natural way the resource state, measurement, and Pauli matrices---as well as the protocol of Bennett and coworkers \cite{Bennett-etal}.

One can introduce pictorial protocols and translate them into the usual algebraic protocols using our dictionary of the holographic software. 
%\st{This provides a new framework to explore the efficient simulation of physical systems.} {\red{Note from AW-- Our work does not add to quantum simulation algorithms, thus I recommend we retract this sentence.}}
If we simulate more complicated communication processes---by following topological intuition---then we find that our software leads to new concepts, as well as to new protocols in quantum information.

It is important to recognize what pictures have appropriate meanings in protocols.
The pictures that have virtual meanings could be used as intermediate steps in the design of protocols. 
\subsection*{Detailed Discussion of Teleportation}
Suppose that Alice wants to teleport her qudit to Bob. That means she wants to use a protocol by which the quantum information, encoded in her qudit, is faithfully delivered to Bob's location. 
We begin by drawing an elementary picture that simulates the teleportation from Alice to Bob using a noiseless channel, as  shown in \eqref{Pic:Teleportation-1}.   The red dashed line separates the communicating parties and is not a part of the picture.
\begin{align}
&\quad \quad 
\raisebox{-1cm}{
\tikz{
\fqudit{0}{2\nn}{1\mn}{1\nn}{\phi_A}
\draw (-2\mn,2\nn)--(6\mn,-2\nn);
\draw (0\mn,2\nn)--(8\mn,-2\nn);
\draw [red,dashed] (5\mn,-2\nn)-- (5\mn,6\nn);
\node at (3\mn,5\nn) {Alice};
\node at (7\mn,5\nn) {Bob};
}}\label{Pic:Teleportation-1}\\
&\phantom{l} \nonumber\\
=&\quad \quad
\raisebox{-1cm}{
\tikz{
\fqudit{0}{2\nn}{1\mn}{1\nn}{\phi_A}
\fdoublequdit{8\mn}{2\nn}{1\mn}{0\nn}{}{}
\fdoublemeasure{4\mn}{2\nn}{1\mn}{0\nn}{}{}
\draw (8\mn,2\nn)--(8\mn,-2\nn);
\draw (6\mn,2\nn)--(6\mn,-2\nn);
\draw [red,dashed] (5\mn,-2\nn)-- (5\mn,7\nn);
\node at (3\mn,6\nn) {Alice};
\node at (7\mn,6\nn) {Bob};
}}\label{Pic:Teleportation-2}\\
&\phantom{l} \nonumber\\
=&
\raisebox{-1.5cm}{
\tikz{
\fqudit{0}{2\nn}{1\mn}{1\nn}{\phi_A}
\fdoublequdit{8\mn}{2\nn}{1\mn}{0\nn}{}{}
\fdoublemeasure{4\mn}{2\nn}{1\mn}{1\nn}{-\ell_1}{-\ell_2}
\node at (7\mn,-3\nn) {\size{$\ell_2$}};
\node at (9\mn,-5\nn) {\size{$\ell_1$}};
\draw (8\mn,2\nn)--(8\mn,-6\nn);
\draw (6\mn,2\nn)--(6\mn,-6\nn);
\draw [blue,dashed] (10\mn,2\nn)-- (-4\mn,2\nn);
\draw [blue,dashed] (10\mn,-4\nn)-- (-4\mn,-4\nn);
\draw [blue,dashed] (10\mn,-2\nn)-- (-4\mn,-2\nn);
\draw [red,dashed] (5\mn,-6\nn)-- (5\mn,7\nn);
\node at (3\mn,6\nn) {Alice};
\node at (7\mn,6\nn) {Bob};
}}\quad.\label{Pic:Teleportation-3}
\end{align}

The disadvantage of the process \eqref{Pic:Teleportation-1} is that the noiseless channel is extremely expensive.  And the information transmitted in this way may be intercepted by an adversary.  Do we have a better simulation without the use of noiseless channels?  Yes, the solution is to use topological isotopy, which does not change the function of the protocol, but it changes the way to implement it.

The solution to this problem is to simply make a topological isotopy that deforms the picture into~\eqref{Pic:Teleportation-2}. What is the difference? Now the picture extending over the red line is on the top of the picture. The double cap that extends over the red dashed line can be implemented by an entangled state shared by Alice and Bob.

This entangled state is a resource state in our protocol, which can be realized in quantum mechanics. One obtains a first estimate of the cost of the resource states in the protocol by simply counting one-half of the number of strings over the red dashed line.

Now the caps in~\eqref{Pic:Teleportation-2} are a tensor product of two states. But what does the rest of the picture represent?  The double cup should be implemented by a measurement. In quantum information we cannot predict the outcome of the measurement. So to indicate the different possible outcomes we introduce the notion of {\it charge} on the strings.

The charges on the cups in~\eqref{Pic:Teleportation-3} indicate the result of the measurements performed by Alice. One needs to make up the opposite charge on the strings on the left, so that the function of the picture does not change. That means Alice needs to communicate  her measurement results classically to Bob, and Bob implements the corresponding recovery transformation to obtain a perfect replica at his site of the qudit that Alice teleports.

Now that  we have obtained the pictorial protocol for qudit teleportation, let us translate this with our holographic software into the  usual algebraic form. We will see how to recover from holographic software some fundamental concepts in quantum information:

Firstly, the entangled state expressed as the double cap is the standard resource state, namely the Bell state.
Secondly, the measurement expressed as the charged double cup is the measurement in phase space. In addition, measurement arising in this pictorial way maps pure states to pure states.
Thirdly, the recovery transformations that arise on the left-hand strings are Pauli $X,Y,Z$ matrices, given in~\eqref{Pauli 21}.
We end up with the original algebraic teleportation protocol that was identified by Bennett et al \cite{Bennett-etal}.
%
%, and we illustrate the protocol in Figure \ref{Fig:Teleportation3-Intro}. {\red{Question from AW-- Why do we need another figure for teleportation? How about we reference Figure 18 instead?}}
%\begin{figure}[h]
%\raisebox{.5cm}{
%\scalebox{0.91}{
%\Qcircuit @C=1.5em @R=2em  {
%& \qw       & \qw           & \control  \qw         & \gate{F}   & \meter & \cw & \control \cw \cwx[2] \\
%& \gate{F}  & \control \qwx[1] \qw       & \gate{X} \qwx[-1] & \qw                  & \meter & \control \cw \cwx[1] \\
%& \qw & \gate{X^{-1}}  & \qw & \qw & \qw  & \gate{X}             & \gate{Z} & \qw
%}}}
%\caption{Algebraic protocol for teleportation.}\label{Fig:Teleportation3-Intro}
%\end{figure}
%{\red{Question from AW-- Why do we need another figure for teleportation? How about we reference Figure 18 instead?}}

Therefore our topological simulation through holographic software is the reverse of the usual philosophy: algebra to topology.
Instead of presupposing the notions of quantum information, we find that they arise naturally, from following the topological intuition in our model.

% After defining holographic software in the bulk of our paper,  we give more details for this protocol in \S\ref{Sect:TeleportationProtocol}.

\subsection*{Two philosophies}\label{Sect:TwoPhilosophies}
The mathematical history of understanding the connection  between algebra and topology goes back to the early 1900's in the development of homology theory.  A quantum version of this philosophy arose thirty years ago.  A breakthrough came  when Vaughan Jones found quantum knot invariants \cite{Jon83,Jon85,Jon87}.  In this theory, one gets pictorial representations for algebraic identities and
topological invariants. This is the direction that we call: \textit{algebra to topology}.

Jones then asked the question: can these invariants be derived in a topological manner, rather than in an
algebraic way?
In other words, can one go in  the direction \textit{topology to algebra}?
Witten answered this by giving a topological interpretation to  the Jones polynomial as an expectation of a Wilson loop in a field theory with a Chern-Simons action \cite{Wit88}.
Actually Witten's picture is more general, which led to Atiyah's notion of a 3D topological quantum field theory (TQFT) \cite{Ati88}. Reshetikhin and Turaev, constructed these examples mathematically \cite{Resh-Viro}.
%Turaev-Viro
The  point of this story is that 3D TQFT captures the algebraic axioms of modular tensor categories. As a concrete example, this includes representation categories of quantum groups.

In quantum information, one can imagine the same two philosophies (P1) and (P2): Our approach to quantum communication captures both philosophies.

(P1) The fundamental idea to use  pictorial notation to describe tensor manipulation originated in the work of Penrose \cite{Penrose}. 
  Later  Deutsch~\cite{D} was the first person to connect the pictorial approach to  quantum information, inspired by his work on quantum Turing machines. The pictorial networks were extended by Barenco et al.~\cite{Barenco} in the development of the modern quantum circuit language, providing elementary pictures for key quantum computational operations. Later, Lafont~\cite{Lafont} connected the ideas of reversible Boolean circuits and quantum Boolean circuits with category theory, and subsequent work by many others has developed the one-string approach to quantum information theory. This work has been extensively developed by Abramsky, Coecke, their coworkers, and many others, yielding many pictorial representations of tensors and other algebraic structures in the categorical approach to quantum protocols~\cite{AC,Coeckebook} and tensor networks~\cite{Biamonte,Christandl,Osborne}.

(P2) Many pictures have topological meaning, and its importance in quantum information was recognized in the pioneering work of Kitaev, Freedman, Larsen, Wang, Kauffman, and Lomonaco~\cite{anyon computer,topological quantum,topological quantum model,KL-1,KL-2}.  People also investigated quantum computation \cite{OgburnPreskill,BombinDelgado,Nayak-ea}.   In addition, the topological models of Kitaev, Levin, and Wen provide powerful tools in quantum information~\cite{anyon computer,LW}.

\subsection*{String Fourier transform vs.\ the braid}
Both the braid and the string Fourier transform take product states to entangled states.  We discuss these alternatives.
\subsection*{Braid}
The topological approach to quantum computation became important with Kitaev's 1997 paper proposing an anyon computer---work that only appeared some five years later in print~\cite{anyon computer}. In \S6 on the arXiv, he described the braiding and fusing of anyonic excitations in a fault-tolerant way. Freedman, Kitaev, Larsen, and Wang explored braiding further~\cite{topological quantum}, motivated by the
pioneering work of Jones, Atiyah, and Witten on knots and topological field theory~\cite{Jon85, Ati88, Wit88}.

In the case $n=2$, this braid in Equation \eqref{first-pos-braid} appears in the Jones polynomial.  
For general $n$, these braids can be ``Baxterized'' to $R_\lambda$ with parameter $\lambda$ in the sense of Jones \cite{Jon91}. That means the $R_\lambda$'s are a solution to the Yang-Baxter equation in statistical physics \cite{Yang,Baxter}; the positive braid and the negative braid are given by $\lambda=\pm i \infty$. The $R_\lambda$ have been introduced by Fateev and Zamolodchikov \cite{F-Z}.  Such kinds of braid statistics in field theory and quantum Hall systems were considered extensively by Fr\"ohlich, see \cite{Froehlich,FroehlichCargese}.
Fermionic entanglement was addressed in \cite{DoubleOccupancy,correlations-fermions}.
Kauffman and Lomonaco remarked that the braid picture describes maximal entanglement \cite{KL-2}.  

\subsection*{String Fourier transform}
Originally we had thought that the fundamental way to think about entanglement of qudits lay in the topological properties of the braid, for the braid allows consideration of isotopy in three dimensions.  

But after discovering holographic software, we have come to a different understanding.
We now believe that the \textit{string Fourier transform} (SFT) that we introduced in \cite{JL} provides a robust starting point for many aspects of quantum information, including entanglement. 

Our realization of the maximally-entangled, multipartite resource state, as well as our realization of maximal entanglement, is a consequence of the SFT. It comes from the SFT of the zero particle state.  The algebraic formulas for the SFT and for the braid can be derived from one another.  But we have learned to think about entanglement in terms of the SFT.  And this provides insight into computations, and it yields simplification for  a number of quantum information protocols; it also suggests new protocols.

Our SFT arose in the context  of planar para algebras~\cite{JL}, before we understood the depth of its significance for quantum information. Geometrically, the SFT acts on pictures and gives them a partial rotation.  These pictures might represent qudits,  transformations, or measurements.  The origin of  the SFT goes back to the work of  Ocneanu in the more general context of work on subfactor theory~\cite{Ocneanu}.

\subsection*{Quantum communication}
There are many other interesting protocols, for example~\cite{GHZ,bellexperiment,NielsenChuang,SorensenMolmer,GottesmanChuang,LoockBraunstein,Zhou-etal,Eisert-etal,Huelga-etal,Reznik-etal,ZhaoWang,MikeIke,Yu-etal,LuoWang,VanMeter,Tele2015,Hutter}, and it would be nice to analyze such protocols using holographic software.
Ideas for quantum communication build upon fundamental work  carried out  by  the groups of  Zeilinger and Pan~\cite{GHZ,AZ,Pan-ea,Zeilinger-ea,Pan-1}, and Bennett et al. \cite{Bennett-etal}.  
Recently Pan and collaborators set the record for entanglement distribution distance \cite{Yin} and for teleportation distance \cite{Ren}.  These protocols are described by the picture \eqref{Pic:Teleportation3}.

\subsection*{Quantum computation\label{Sect:TQC}}
The usual algorithm to construct the GHZ state from the fiducial, zero particle state is applying Hadamard to one qubit, and then iterating the CNOT gates one by one. Recall that Max is unitary equivalent to GHZ. However, our construction of Max is obtained from a single transformation (the SFT) on the zero particle state. Namely, Max is produced by the SFT in a single computational time step. This operation quantifies the advantage of multipartite quantum computation in a topological way. 

%{\red{Question from AW-- This paragraph is unclear to me. It seems that we are considering the number of gates required to construct GHZ vs. the number of gates required to construct Max (as complexity results are presented in the quantum circuit model). Naively, I might interpret this paragraph in the two qubit case in which both states are Bell states and you only need a Hadamard followed by CNOT. The point is both circuits share the same complexity cost. 
%
%I believe this paragraph is trying to say the SFT should be considered as a single quantum gate, instead of a decomposition into other elementary gates. If this is the case, then we need to be clear. The statement should have three steps: We introduce the SFT as a single quantum gate, suggest the study of the physical realization of the SFT quantum gate, then further suggest the experimental realization of this quantum gate.}}
%
%In classical information theory, the Fourier transform $F$ plays a central role, in particular in signal recovery; see \cite{Candes-Romberg-Tao} for a robust application.  We propose that the SFT could play an analogous role in quantum information. 

Any transformation on $n$-qudits can be represented as a linear sum of charged string pictures. Moreover, we can do computations on these charged string pictures. Mathematically the pictorial relations in PAPPA are very helpful to understand and simplify the computations in quantum information. 

In reality, it is easy to realize compositions of pictures, but it could be difficult to realize the linear sum. Thus it is natural to ask for a minimal set of generators which are universal for quantum computation (without using linear sums).

From this point of view, a good candidate for the generating set will include charged strings and the 1-qubit transformation $F$.  
By Theorems \ref{Thm:FSClifford1}, we already obtain the Clifford group. Theorem \ref{Thm:FSClifford1} is an indication that we do need additional generators. We show that one more is enough.

The $n$-qudit Clifford group is known to be insufficient for universal quantum computation, since the set of Clifford gates is not dense in the special unitary group $SU(d^n)$ of degree $d^n$. Moreover, the Gottesman-Knill theorem states that any quantum circuit built from stabilizer operations, which consist of Clifford gates and preparation and measurement in the standard basis, is efficiently simulable on a classical computer \cite{Gottesman-Clifford,Gottesman-fault-tolerant}.

In order to construct a universal quantum computer it is sufficient to include an arbitrary non-Clifford gate to the set of Clifford gates such that the order of the qudits is a prime number. A proof of this result can be found in Appendix D of \cite{Campbell-Magic}, whereby state injection is shown to convert distilled magic states into non-Clifford gates.

We can obtain a family of universal quantum gate sets through the inclusion of any non-Clifford gate to the generating set of the $n$-qudit Clifford group, where the order of the qudits $d$ is a prime,
\be \label{gate set 1}
\{\mbox{non-Clifford gate},X,Y,Z,F,G,\FS\}.
\ee
Some examples of non-Clifford gates include the Toffoli gate and $\pi/8$ gate \cite{MikeIke}. 

An alternative construction of a universal quantum computer is derived from the entangling property of $\FS$ on $2$-qudits. The work of Bremner et al.\ \cite{Bremner et al} and Brylinski and Brylinski \cite{Brylinski} demonstrated that the inclusion of any $2$-qudit entangling gate to the set of all single qudit gates is universal for quantum computation. Therefore we obtain the following universal gate set by including $\FS$ to the set of single qudit gates,
\be \label{gate set 2}
\{\mbox{single qudit gates},\FS\}.
\ee

The quantum gate sets \eqref{gate set 1} and \eqref{gate set 2} are new to quantum computation, and the string Fourier transform is common to both sets. Combining the universal gate sets that arise from the string Fourier transform with our pictures for qudits, transformations, and measurements constitutes a new model of universal quantum computation, that we call the PAPPA model.
\end{appendix}

\end{document}